\documentclass[10pt]{article}

\pdfoutput=1

\usepackage[latin1]{inputenc}
\usepackage{mystyle}
\usepackage{dsfont}
\usepackage{url}
\usepackage{notations}
\graphicspath{{./figures/}}
\newcommand{\myauthor}[3]{#1\protect\footnote{#2, \protect\url{#3}}}

\title{Support Recovery for Sparse \\ Deconvolution of Positive Measures}

\author{\myauthor{Quentin Denoyelle}{CEREMADE, Univ. Paris-Dauphine}{denoyelle@ceremade.dauphine.fr}, \quad
		\myauthor{Vincent Duval}{INRIA Rocquencourt, MOKAPLAN}{vincent.duval@inria.fr} \quad and \quad
		\myauthor{Gabriel Peyr{\'e}}{CNRS \& CEREMADE, Univ. Paris-Dauphine}{gabriel.peyre@ceremade.dauphine.fr}
		}

\date{\today}

\begin{document}

\maketitle

 

\begin{abstract}
We study sparse spikes deconvolution over the space of Radon measures on $\RR$ or $\TT$ when the input measure is a finite sum of positive Dirac masses using the BLASSO convex program. We focus on the recovery properties of the support and the amplitudes of the initial measure in the presence of noise as a function of the minimum separation $t$ of the input measure (the minimum distance between two spikes). We show that when ${w}/\lambda$, ${w}/t^{2N-1}$ and $\lambda/t^{2N-1}$ are small enough (where $\lambda$ is the regularization parameter, $w$ the noise and $N$ the number of spikes), which corresponds roughly to a sufficient signal-to-noise ratio and a noise level small enough with respect to the minimum separation, there exists a unique solution to the BLASSO program with exactly the same number of spikes as the original measure. 
We show that the amplitudes and positions of the spikes of the solution both converge toward those of the input measure when the noise and the regularization parameter drops to zero faster than $t^{2N-1}$.
\end{abstract}


\section{Introduction}

\subsection{Super-Resolution and Sparse Spikes Deconvolution}

Super-resolution consists in retrieving fine scale details of a possibly noisy signal from coarse scale information. The  importance of recovering the high frequencies of a signal comes from the fact that there is often a physical blur in the acquisition process, such as diffraction in optical systems, wave reflection in seismic imaging or spikes recording from neuronal activity. 

In resolution theory, the two-point resolution criterion defines the ability of a system to resolve two points of equal 
intensities. As a point source produces a diffraction pattern which is centered about the geometrical image point of the point source, it is often admitted that two points are resolved by a system if the central maximum of the intensity diffraction of one point source coincides with the first zero of the intensity diffraction pattern of the other 
point. This defines a distance that only depends on the system and which is called the Rayleigh Length. In the case of the ideal low-pass filter (meaning that the input signal is convolved with the Dirichlet kernel, see~\eqref{eq-idealLPF} for the exact definition) with cutoff frequency $f_c$, the Rayleigh Length is $1/f_c$. We refer to~\cite{den-resolution1997} for more details about resolution theory. Super-resolution in signal processing thus consists in developing techniques that enable to retrieve information below the Rayleigh  Length.


Let us introduce in a more formal way the problem which will be the core of this article. Let $\Pos$ be the real line $\RR$ or the 1-D torus $\TT=\RR/\ZZ$ and $\radon$ the Banach space of bounded Radon measures on $\Pos$, which can be seen as the topological dual of the space $\ContX$ where $\ContX$ is either the space of continuous functions on $\RR$ that vanish at infinity when $\Pos=\RR$ or the space of continuous functions on $\TT$ when $\Pos=\TT$. We consider a given integral operator $\Phi:\radon\to\Obs$, where $\Obs$ is a separable Hilbert space, whose kernel $\phi$ is supposed to be a smooth function (see Definition~\ref{def-kernel} for the technical assumptions made on $\phi$), \ie 
\begin{align}\label{def-Phi}
	\forall m\in\radon,\quad\Phi m=\int_\Pos \phi(\pos)\d m(\pos).
\end{align}
$\Phi$ represents the acquisition operator and can for instance account for a blur in the measurements. In the special case of $\phi(x)=\tilde{\phi}(\cdot-x)$, $\Phi$ is a convolution operator. We denote by $\meastO=\sum_{i=1}^{N} \ampOi \dirac{\postOi}$ our input sparse spikes train where the $\ampOi\in\RR_+^*$ are the amplitudes of the Dirac masses at positions $\postOi\in\Pos$. Let $\obs=\Phi\meastO$ be the noiseless observation. The parameter $t>0$ controls the minimum separation distance between the spikes, and this paper aims at studying the recovery of $\meastO$ from $\obs+w$ (where $w\in\Obs$ is some noise) when $t$ is small. 

\subsection{From the LASSO to the BLASSO}

\paragraph{LASSO.}

$\lun$ regularization techniques were first introduced in geophysics (see \cite{claerbout-robust1973,levy-reconstruction1981,santosa-linear1986}) for seismic prospecting. Indeed, the density changes in the underground can be modeled as a sparse spikes train. $\lun$ reconstruction property provides solutions with few nonzero coefficients and can be solved efficiently with convex optimization methods. Donoho theoretically studied and justified these techniques in~\cite{donoho-superresolution1992}.
In statistics, the $\lun$ norm is used in the Lasso method~\cite{tibshirani-regression1994} which consists in minimizing a quadratic error subject to a $\lun$ penalization. As the authors remarked it, it retains both the features of subset selection (by setting to zero some coefficients, thanks to the property of the $\lun$ norm to favor sparse solutions) and ridge regression (by shrinking the other coefficients).
In signal processing, the basis pursuit method~\cite{chen-atomic1998} uses the $\lun$ norm to decompose signals into overcomplete dictionaries.

\paragraph{BLASSO.}

Following recent works (see for instance \cite{bhaskar-atomic2011,bredies-inverse2013,candes-towards2013,deCastro-exact2012,duval-exact2013}), the sparse deconvolution method that we consider in this article operates over a continuous domain, i.e. without resorting to some sort of discretization on a grid. The inverse problem is solved over the space of Radon measures which is a non-reflexive Banach space. This continuous ``grid free'' setting makes the mathematical analysis easier and allows us to make precise statement about the location of the recovered spikes locations. 

The technique that we study in this paper consists in solving a convex optimization problem that uses the total variation norm which is the equivalent of the $\lun$ norm for measures. The $\lun$ norm is known to be particularly well fitted for the recovery of sparse signals. Thus the use of the total variation norm favors the emergence of spikes in the solution.

The total variation norm is defined by
\eq{
  	\forall m \in \radon, \quad
	\normTVX{m}	\eqdef	\usup{\psi \in \ContX} \enscond{ \int_\Pos \psi \d m }{ \normLi{\psi} \leq 1 }.
}
In particular,
\eq{
	\normTVX{\meastO}=\normu{\ampO},
}
which shows in a way that the total variation norm generalizes the $\lun$ norm to the continuous setting of measures (i.e. no discretization grid is required).

The first method that we are interested in is the classical basis pursuit, defined originally in~\cite{chen-atomic1998} in a finite dimensional setting, and written here over the space of Radon measures
\begin{align}\label{eq-blasso-noiseless}
  \umin{m \in \radon} 
  	\enscond{ \normTVX{m} }{\Phi m=\obs} \tag{$\bpursuit$}.
\end{align}
This is the problem studied in~\cite{candes-towards2013}, in the case where $\Phi$ is an ideal low-pass filter on the torus (i.e. $\Pos=\TT$).

When the signal is noisy, \ie when we observe $\obs+w$ instead, with $w\in \Obs$, we may rather consider the problem 
\begin{align}\label{eq-blasso-noisy}
	\umin{m \in \radon} \frac{1}{2} \normObs{\Phi m - (\obs+w)}^2 + \lambda \normTVX{m} \tag{$\blasso$}.
\end{align}
Here $\lambda>0$ is a parameter that should adapted to the noise level $\normObs{w}$. This problem is coined ``BLASSO'' in~\cite{deCastro-exact2012}.

While this is not the focus of this article, we note that there exist algorithms to solve the infinite dimensional convex problems~\eqref{eq-blasso-noiseless} and~\eqref{eq-blasso-noisy}, see for instance~\cite{bredies-inverse2013,candes-towards2013}.

\subsection{Previous Works}
\label{sec-previous-works}

\paragraph{LASSO/BLASSO performance analysis.}

In order to quantify the recovery performance of the methods $\bpursuit$ and $\blasso$, the following two questions arise:
\begin{enumerate}
 	\item Does the solutions of $\bpursuit$ recover the input measure $\meastO$ ?
 	\item How close is the solution of $\blasso$ to the solution of $\bpursuit$ ?
\end{enumerate}

When the amplitudes of the spikes are arbitrary complex numbers, the answers to the above questions require a large enough minimum separation distance $\De(\postO)$ between the spikes where 
\begin{align}\label{min-sep-dist}
	\De(\postO) \eqdef \umin{i \neq j} \dX(\postOi,t z_{0,j}).
\end{align}
where $\dX$ is either the canonical distance on $\RR$ \ie
\begin{align}\label{dist-R}
	\forall x,y\in\RR, \quad \dX(x,y)=|x-y|,
\end{align}
when $\Pos=\RR$, or the canonical induced distance on $\TT$ \ie 
\begin{align}\label{dist-T}
	\forall x,y\in\RR, \quad \dX(x+\ZZ,y+\ZZ)=\min_{k\in\ZZ} |x-y+k|,
\end{align}
when $\Pos=\TT$.
The first question is addressed in~\cite{candes-towards2013} where the authors showed, in the case of $\Phi$ being the ideal low-pass filter on the torus (see~\eqref{eq-idealLPF}), i.e. when $\Obs=\TT$, that $\meastO$ is the unique solution of 
$\bpursuit$ provided that $\De(\postO)\geq \frac{C}{f_c}$ where $C>0$ is a universal constant and $f_c$ the cutoff frequency of the ideal low-pass filter. In the same paper, it is shown that 
$C\leq 2$ when $\ampO\in\CC^N$ and $C\leq 1.87$ when $\ampO\in \RR^N$. In~\cite{duval-exact2013}, the authors show that necessarily $C\geq \frac{1}{2}$.

The second question receives partial answers in~\cite{azais-spike2014,bredies-inverse2013,candes-superresolution2013,fernandez-support2013}. In~\cite{bredies-inverse2013}, it is shown that if the solution of $\bpursuit$ is unique then the measures recovered by $\blasso$ converge in the 
weak-* sense to the solution of $\bpursuit$ when $\lambda \rightarrow 0$ and $\normObs{w}/\lambda \rightarrow 0$. In~\cite{candes-superresolution2013}, the authors measure the reconstruction error using the $\Ldeux$ norm of an ideal low-pass filtered version of the recovered measures. 
In~\cite{azais-spike2014,fernandez-support2013}, error bounds are given on the locations of the recovered spikes with respect to those of the input measure $\meastO$.
However, those works provide little information about the structure of the measures recovered by $\blasso$. That point is addressed in~\cite{duval-exact2013} where the authors 
show that under the \emph{Non Degenerate Source Condition} (see Section~\ref{propdef-vanishing} for more details), there exists a unique solution to $\blasso$ with the exact same number of spikes as the original measure provided that $\lambda$ and $\normObs{w}/\lambda$ are small enough. Moreover in that regime, this solution converges to the original measure when the noise drops to zero.

\paragraph{LASSO/BLASSO for positive spikes.}

For positive spikes (i.e. $\ampOi>0$), the picture is radically different, since exact recovery of $\meastO$ without noise (i.e. $(w,\la)=(0,0)$) holds for all $t>0$, see for instance~\cite{deCastro-exact2012}. Stability constants however explode as $t \rightarrow 0$. 
A recent work~\cite{candes-stable2014} shows however that stable recovery is obtained if the signal-to-noise ratio grows faster than $O(1/t^{2N})$, closely matching optimal lower bounds of $O(1/t^{2N-1})$ obtained by combinatorial methods, as also proved recently~\cite{demanet-recoverability2014}. Our main contribution is to show that the same $O(1/t^{2N-1})$ signal-to-noise scaling in fact guarantees a perfect support recovery of the spikes under a certain non-degeneracy condition on the filter. This extends, for positive measures, the initial results of~\cite{duval-exact2013} by providing an asymptotic analysis when $t \rightarrow 0$.

\paragraph{MUSIC and related methods.}

There is a large body of literature in signal processing on spectral methods to perform spikes location from low frequency measurements. One of the most popular methods is MUSIC (for MUltiple Signal Classification)~\cite{schmidt-multiple1986} and we refer to~\cite{krim-two1996} for an overview of its use in signal processing for line spectral estimation. 
In the noiseless case, exact reconstruction of the initial signal is guaranteed as long as there are enough observations compared to the number of distinct frequencies~\cite{liao-music2014}. Stability to noise is known to hold under a minimum separation distance similar to the one of the BLASSO~\cite{liao-music2014}. 
However, on sharp contrast with the behavior of the BLASSO, numerical observations (see for instance~\cite{condat-cadzow2013}), as well as a recent work of Demanet and Nguyen, show that this stability continues to hold regardless of the sign of the amplitudes $\ampOi$, as soon as the signal-to-noise ratio scales like $O(1/t^{2N-1})$. 
Note that this matches (when $w$ is a Gaussian white noise) the Cramer-Rao lower bound achievable by any unbiased estimator~\cite{vetterli-sparse2008}. 
This class of methods are thus more efficient than BLASSO for arbitrary measures, but they are restricted to operators $\Phi$ that are convolution with a low-pass filter, which is not the case of our analysis for the~BLASSO.

\subsection{Contributions}
\label{intro-contributions}

\paragraph{Main results.}

From these previous works, one can ask whether exact support estimation by BLASSO for positive spikes is achievable when $t$ tends to 0.
Our main result, Theorem~\ref{thm-main}, shows that this is indeed the case. It states, under some non-degeneracy condition on $\Phi$, that there exists a unique solution to $\blasso$ with the exact same number of spikes as the original measure provided that $\normObs{w}/\lambda$, 
$\normObs{w}/t^{2N-1}$ and $\la/t^{2N-1}$ are small enough. 
Moreover we give an upper bound, in that regime, on the error of the recovered measure with respect to the initial measure.
As a by-product, we show that the amplitudes and positions of the spikes of the solution both converge towards those of the input measure when the noise and the regularization parameter tend to zero faster than $t^{2N-1}$.

\paragraph{Extensions.}\label{extensions}

%
We consider in this article the case where all the spikes locations $t z_0$ cluster near zero. Following for instance~\cite{candes-stable2014}, it is possible to consider a more general model with several cluster points, where the sign of the Diracs is the same around each of these points. Our analysis, although more difficult to perform, could be extended to this setting, at the price of modifying the definition of $\etaW$ (see Definition~\ref{def-nvanishing}) to account for several cluster points. 

Lastly, if the kernel $\Phi$ under consideration has some specific scale $\si$ (such as the standard deviation of a Gaussian kernel, or $\si=1/f_c$ for the ideal low-pass filter in the case of the deconvolution on the torus), then it is possible to state our contribution by replacing $t$ by the dimensionless quantity $\text{SRF} \eqdef t/\si$ (called ``super-resolution factor'' in~\cite{candes-stable2014}). It is then possible to extend our proof so show that the signal-to-noise ratio should obey the scaling $1/\text{SRF}^{2N-1}$.

\paragraph{Roadmap.}

The exact statement of Theorem~\ref{thm-main} (our main contribution), and in particular the definition of the non-degeneracy condition, requires some more background, which is the subject of Section~\ref{sec-exact-support-recov}.
The proof of this result can be found in Sections~\ref{sec-prelim},~\ref{sec-tfi} and~\ref{sec-etaL}.
It relies on an independent study of the asymptotic behavior of quantities depending on the operator $\PhitzO$ (such as its pseudo-inverse) when $t$ tends to zero. This takes place in Section \ref{sec-prelim}. Note that a sketch of proof of the main result can be found in Section~\ref{sec-main-contrib}.

\subsection{Notations}

\paragraph{Measures.}

We consider $\Pos=\RR$ or $\Pos=\TT$ as the space of Dirac masses positions. $\Pos$ equipped with the distance $\dX$ (see~\eqref{dist-R} and~\eqref{dist-T}) is a locally compact metric space (compact in the case $\Pos=\TT$). We denote by $\radon$ the space of bounded Radon measures on $\Pos$. It is the topological dual of the Banach space $\ContX$ (endowed with $\normLi{\cdot}$) of continuous functions defined on $\Pos$, that furthermore are imposed to vanish at infinity in the case $\Pos=\RR$. The two problems that we study, \ie the Basis Pursuit for measures $\bpursuit$ and the BLASSO $\blasso$, are two convex optimization problems on the space $\radon$.

Let us consider $\ampO\in(\RR_+^*)^N$ and $\poszO\in\RR^N$. When $\Pos=\TT$, we make the assumption that $\poszO\in(-\frac{1}{4},\frac{1}{4})^N$. We define
\begin{align}\label{min-sep-dist-ini}
	\De_0 \eqdef 
	\De(\poszO),
\end{align}
where $\De(\poszO)$ is introduced in~\eqref{min-sep-dist}.
We denote by $\bball{\bar{x}}{r}$ (resp. $\ball{\bar{x}}{r}$) the closed (resp. open) $\ell^\infty$ ball in $\RR^N$ with center $\bar{x}$ and radius $r$. We define neighborhoods of respectively $\poszO$ and $\ampO$ as
\begin{align}\label{vois-ini}
  \Bo \eqdef\bball{\poszO}{\frac{\De_0}{4}}
  \qandq
  \Ba \eqdef\bball{\ampO}{\frac{\min_i (\ampOi)}{4}}.
\end{align}
Note that when $\Pos=\TT$, if $\posz\in\Bo$ then $\posz\in (-\frac{1}{2},\frac{1}{2})^N$ and for all $t\in(0,1]$, $\post\in (-\frac{1}{2},\frac{1}{2})^N$. As a result, depending on the context, we consider $\postO$ and $\post$ as elements of $\TT^N$ by identifying them with their equivalent classes.

Then, the initial measure we want to recover is of the form
\eq{
	\meastO\eqdef\sum_{i=1}^N \ampOi\dirac{\postOi}.
}
Note that for example, in the case $\Pos=\TT$, when we write $\dirac{\postOi}$, $\postOi$ is considered as an element of $\TT$.

We use the parameter $t\in(0,1]$ to make the spikes locations tend to $0$ in $\Pos$, so that the minimum separation distance of $\meastO$ : $\De(\postO)\to0$ when $t\to0$.

\paragraph{Kernels.}
The admissible kernels $\phi:\Pos\to\Obs$ defining the integral operators $\Phi$ (modeling the blur of acquisition) are functions, defined on $\Pos$ and taking values in a separable Hilbert space $\Obs$, satisfying some regularity properties listed in the following Definition.

\begin{defn}[Admissible kernels]\label{def-kernel}
We denote by $\kernel{k}$, the set of admissible kernels of order $k$. A function $\phi:\Pos\to\Obs$ belongs to $\kernel{k}$ if $\phi\in\Cder{k}(\Pos,\Obs)$. When $\Pos=\RR$, $\phi$ must also satisfies the following requirements:
\begin{itemize}
\item For all $p\in\Obs$, $\dotObs{\phi(\pos)}{p}\to0$ when $|\pos |\to +\infty$.
\item For all $0\leq i\leq k$, $\underset{\pos\in\Obs}{\sup} \normObs{\phi^{(i)}(x)} <+\infty$.
\end{itemize}
\end{defn}


\paragraph{Linear operators.}

We consider a linear operator $\Phi: \radon\rightarrow \Obs$ of the form
\begin{align}\label{eq-defnPhi}
	\forall m\in\radon, \quad
	\Phi m &\eqdef \int_\Pos \varphi(\pos)\d m(\pos),
\end{align}
where $\varphi: \Pos\rightarrow \Obs$ belongs to $\kernel{0}$. $\Phi$ is weak-* to weak continuous.
Its adjoint  $\Phi^*: \Obs\rightarrow \ContX$ is given by
\begin{align*}
	\forall p\in\Obs, \quad
	\forall \pos\in\Pos, \quad 
	(\Phi^*p)(\pos) = \dotObs{\phi(\pos)}{p}.
\end{align*}

A typical example is a convolution operator, where $\varphi(\pos)= \tilde{\varphi}(\cdot-x)$, for some continuous function $\tilde{\phi}:\Pos\rightarrow \RR$, $\tilde{\phi}\in \Ldeux(\Pos)\cap \ContX$, so that $(\Phi m)(\varV) = \int_\Pos \tilde{\varphi}(\varV-\pos)\d m(\pos)$. 
In particular, in the case $\Pos=\RR$, the Gaussian filter is defined as
\begin{align}\label{eq-gaussianFilter}
  \forall x\in\RR,\quad  \tilde{\varphi}(x)=\tilde{\phi}_G (x)&\eqdef e^{-x^2/2},
\end{align}
where $\tilde{\phi}_G$ is the Gaussian kernel.
In the case $\Pos=\TT$, a typical example of convolution operator is the \textit{ideal low pass filter} which is defined as
\begin{align}\label{eq-idealLPF}
  \forall \pos\in\TT,\quad  \tilde{\phi}(\pos)=\tilde{\phi}_D (\pos)&\eqdef \sum_{k=-f_c}^{f_c} e^{2\imath\pi k\pos },
\end{align}
for some $f_c\in\NN^*$ called the cutoff frequency. $\tilde{\phi}_D$ is the Dirichlet kernel with cutoff frequency $f_c$.
\begin{rem}
This last example is equivalently obtained (as considered for instance in~\cite{candes-towards2013}) by using
$\Obs=\CC^{2f_c+1}$ (endowed with its canonical inner product) in place of $\Obs=L^2(\TT)$, and defining $\phi$ as $\phi(\pos)=(e^{2i\pi k \pos})_{-f_c\leq k \leq f_c}$. Note that, to simplify the notation, we consider in this paper real Hilbert spaces $\Hh$, but our analysis readily extends to complex Hilbert spaces.
\end{rem}

Given general $\phi$ and $\Phi$ as in~\eqref{eq-defnPhi}, and given $\bar{\pos}=(\pos_1,\ldots,\pos_N)\in\Pos^N$, we denote by $\Phi_{\bar{\pos}}:\RR^N\rightarrow \Obs$ the linear operator such that 
\eq{
	\forall a\in \RR^N, \quad	
	\Phi_{\bar{\pos}}(a) \eqdef \sum_{i=1}^{N} a_i \phi(\pos_i),
}
and by $\Ga_{\bar{\pos}}:(\RR^{N}\times\RR^N)\rightarrow \Obs$ the linear operator defined by
\eq{
	\Ga_{\bar{\pos}}\begin{pmatrix}a\\b\end{pmatrix} \eqdef 
	\sum_{i=1}^{N}\left( a_i \phi(\pos_i) + b_i \varphi'(\pos_i)\right).
}

For $\phi \in \kernel{k}$, operators involving the derivatives are defined similarly,
 \eq{
   \forall 0\leq i\leq k, \quad (\Phi_{\bar{\pos}})^{(k)} : \amp \in \RR^N  \longmapsto (\Phi_{\bar{\pos}})^{(k)}(\amp) = \sum_{i=1}^{N} \amp_i \varphi^{(i)}(\pos_i).
 }

We occasionally write $\Phi_{\bar{\pos}}'$ (resp. $\Phi_{\bar{\pos}}''$) for $(\Phi_{\bar{\pos}})^{(1)}$ (resp. for $(\Phi_{\bar{\pos}})^{(2)}$), and we adopt the following matricial notation 
\eq{
	\Phi_{\bar{\pos}}=(\phi(\pos_1) \ \ldots \ \phi(\pos_{N})) \qandq \Ga_{\bar{\pos}}=(\Phi_{\bar{\pos}} \ \Phi_{\bar{\pos}}'),
}
where the ``columns'' of those ``matrices'' are elements of $\Obs$. In particular, the adjoint operator $\Phi_{\bar{\pos}}^*:\Obs\rightarrow \RR^N$ is given by
\eq{
  \forall p\in\Obs,\quad  \Phi_{\bar{\pos}}^*p=\left((\Phi^*p)(\pos_i)\right)_{1\leq i\leq N}.
}

We denote by $\phiD{k} \in \Obs$ the $k^{th}$ derivative of $\phi$ at $0$, \textit{i.e.}
\eql{\label{eq-defn-phider}
\phiD{k} \eqdef \phi^{(k)}(0).
}
In particular, $\phiD{0} = \phi(0)$.

Given $k \in \NN$, we define
\eql{\label{eq-defn-Fk}
	\Fk \eqdef \begin{pmatrix} \phiD{0} & \phiD{1} & \ldots & \phiD{k} \end{pmatrix}.
}
If $\Fk: \RR^{k+1}\rightarrow \Obs$ has full column rank, we define its pseudo-inverse as $\Fk^+\eqdef (\Fk^*\Fk)^{-1}\Fk^*$. Similarly, we denote $\Gamma_{\bar{\pos}}^+ \eqdef (\Gamma_{\bar{\pos}}^*\Gamma_{\bar{\pos}})^{-1}\Gamma_{\bar{\pos}}^*$  provided $\Gamma_{\bar{\pos}}$ has full column rank.

\paragraph{Linear algebra.}

For $\posz \in \RR^N$, we let
\eql{\label{eq-defn-Hx}
	\Hz \eqdef\begin{pmatrix}
  1  & \ldots & 1 & 0 &\ldots & 0\\
  \posz_1 & \ldots & \posz_N & 1 & \ldots & 1\\
  \vdots & & \vdots & \vdots & & \vdots\\
  \frac{(\posz_1)^{2N-1}}{(2N-1)!} & \ldots &   \frac{(\posz_N)^{2N-1}}{(2N-1)!}  &   
  \frac{(\posz_1)^{2N-2}}{(2N-2)!}  & \ldots &   \frac{(\posz_N)^{2N-2}}{(2N-2)!}  
\end{pmatrix}
\in \RR^{2N \times 2N},
}
so that $\Hz^{*,-1}$ is the matrix of the Hermite interpolation at points $\posz_1,\ldots \posz_N$ when $\RR_{2N-1}[X]$ is equiped with the basis $\left(1,X,\ldots, \frac{X^{2N-1}}{(2N-1)!}\right)$.

For each $N \in \NN$, we define
\begin{align}
  \dirac{N} &\eqdef (1,0,\ldots,0)^T \in \RR^N,\label{eq-dirac}\\
  \bun_N &\eqdef (1,1,\ldots, 1)^T \in \RR^N. \label{eq-bun}
\end{align}

We use the $\ell^{\infty}$ norm, $\normLiVec{\cdot}$, for vectors of $\RR^N$ or $\RR^{2N}$, whereas the notation $\norm{\cdot}$ refers to an operator norm (on matrices, or bounded linear operators). $\normObs{\cdot}$ is the norm on $\Obs$ associated to the inner product $\dotObs{\cdot}{\cdot}$. $\normLi{\cdot}$ denotes the $\Linf$ norm for functions defined on $\Pos$.


\section{Asymptotic Analysis of Support Recovery}
\label{sec-exact-support-recov}

This section exposes our main contribution (Theorem~\ref{thm-main}). The hypotheses of this result require an injectivity condition and a non-degeneracy condition, that are explained in Sections~\ref{sec-inject} and~\ref{sec-vanish-precertif}.

\subsection{Injectivity Hypotheses}
\label{sec-inject}

We introduce here an injectivity hypothesis which ensures the invertibility of $\Phitz^* \Phitz$ and $\Gatz^* \Gatz$ for $t>0$ small enough.

In the case of the ideal low-pass filter (defined in~\eqref{eq-idealLPF}), 
 $\Gax$ has full column rank provided that $\bar{\pos}=(\pos_1,\ldots, \pos_N) \in \Pos^N$ has pairwise distinct coordinates (see~\cite{duval-exact2013}, Section 3.6, Proposition 6). That property is not true for a general operator $\Phi$.
However, in this paper we focus on sums of Dirac masses that are clustered around the point $0\in\Pos$, \ie $\bar{\pos}=t\posz$ for $t>0$ and $\posz\in \RR^N$ with pairwise distinct components. The following assumption, which is crucial to our analysis, shall ensure that $\Gatz$ has full rank at least for small $t$.  

\begin{defn}\label{main-hypothesis}
  	Let $\varphi : \Pos\to\Obs$. For all $k\in\NN$, we say that the hypothesis $\injk$ holds if and only if 
	\eql{\tag{$\injk$}
		\phi\in\kernel{k}
		\text{ and }
		(\phiD{0}, \ldots,\phiD{k}) \text{ are linearly independent in }
		\Obs.
	}
\end{defn}
See Definition~\ref{def-kernel} for the definition of the space $\kernel{k}$ and Equation~\eqref{eq-defn-phider} for the definition of $\phiD{k}$.


If $\injk$ holds, then $\Fk^* \Fk$ is a symmetric positive \emph{definite} matrix, where $\Fk$ is defined in~\eqref{eq-defn-Fk}.

To exemplify the meaning of this injectivity hypothesis, Proposition~\ref{hyp-fourier} below considers the case $\Pos=\TT$ with $\Phi$ a convolution operator. 

\begin{prop}\label{hyp-fourier}
  Let $\tilde{\phi} \in \Cder{k}(\TT,\RR)$ (where $\phi(\pos)=\tilde{\phi}(\cdot-\pos)$), then $\injk$ holds if and only if $\phiD{0}$ has at least $k+1$ non-zeros Fourier coefficients.
  In particular if $\Phi$ is the ideal low-pass filter with cutoff frequency $f_c\in\NN^*$, $\injk$ holds if and only if $k\leq 2 f_c$.
\end{prop}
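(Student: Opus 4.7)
The plan is to reduce the linear independence condition to a question about the zeros of a polynomial via Fourier analysis.

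Since $\phi(\pos)=\tilde{\phi}(\cdot-\pos)$, differentiating with respect to $\pos$ yields $\phi^{(j)}(\pos) = (-1)^j \tilde{\phi}^{(j)}(\cdot-\pos)$, so $\phiD{j} = (-1)^j \tilde{\phi}^{(j)}$ as elements of $\Obs = \Ldeux(\TT)$. The scalars $(-1)^j$ being nonzero, $(\phiD{0},\ldots,\phiD{k})$ is linearly independent in $\Obs$ iff $(\tilde{\phi},\tilde{\phi}',\ldots,\tilde{\phi}^{(k)})$ is linearly independent in $\Ldeux(\TT)$.

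Expanding $\tilde{\phi}(\pos) = \sum_{n \in \ZZ} c_n e^{2\imath\pi n\pos}$ gives $\tilde{\phi}^{(j)}(\pos) = \sum_{n} (2\imath\pi n)^j c_n e^{2\imath\pi n\pos}$. For real coefficients $(\beta_j)_{0\leq j\leq k}$, the $n$-th Fourier coefficient of $\sum_j \beta_j \tilde{\phi}^{(j)}$ equals $c_n P(2\imath\pi n)$, where $P(Y) \eqdef \sum_{j=0}^k \beta_j Y^j$ is a real polynomial of degree at most $k$. Setting $S \eqdef \{n \in \ZZ : c_n \neq 0\}$, the linear combination vanishes in $\Ldeux(\TT)$ if and only if $P(2\imath\pi n)=0$ for every $n\in S$.

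If $|S| \geq k+1$, then $P$ has at least $k+1$ distinct complex roots but degree at most $k$, so $P\equiv 0$, hence $\beta_j=0$ for all $j$: the family is independent. Conversely, suppose $|S|\leq k$. Since $\tilde{\phi}$ is real-valued, $c_{-n}=\overline{c_n}$, so $S$ is symmetric ($S=-S$). Then
\begin{equation*}
  P(Y) \eqdef Y^{\mathds{1}_{0\in S}} \prod_{n\in S\cap \NN^*} (Y^2+4\pi^2 n^2)
\end{equation*}
is a nonzero \emph{real} polynomial of degree $|S|\leq k$ whose roots are exactly $\{2\imath\pi n : n\in S\}$. Its coefficients $(\beta_j)$ give a nontrivial relation $\sum_j \beta_j \tilde{\phi}^{(j)}=0$, proving dependence. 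For the ideal low-pass filter, $\tilde{\phi}_D$ has $c_n=1$ for $|n|\leq f_c$ and $c_n=0$ otherwise, so $|S|=2f_c+1$, and the first part gives $\injk$ iff $k+1 \leq 2f_c+1$, i.e., $k\leq 2f_c$.

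The subtle point is the converse direction: one must use the symmetry of $S$ (a consequence of $\tilde{\phi}$ being real-valued) to build a polynomial $P$ with \emph{real} coefficients, so that the resulting $\beta_j$ furnish a bona fide dependence relation. Once that construction is in place, everything else reduces to counting roots of polynomials of bounded degree.
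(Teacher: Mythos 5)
Your proof is correct, and in the forward direction it is essentially the paper's argument in disguise: the Vandermonde-invertibility claim in the paper is precisely the ``a polynomial of degree $\leq k$ with $k+1$ distinct roots is zero'' fact you invoke. Where you genuinely depart is in the converse. The paper argues via a rank extraction: $\injk$ means the columns $(c_n[\phiD{j}])_{n\in\ZZ}$, $0\leq j\leq k$, are linearly independent in $\ell^2(\ZZ)$, so one can extract $k+1$ rows forming an invertible $(k+1)\times(k+1)$ submatrix, and then the diagonal-times-Vandermonde factorization forces each of those diagonal entries $c_{n_\ell}[\phiD 0]$ to be nonzero. You instead prove the contrapositive constructively: when the support $S$ of the Fourier coefficients has $|S|\leq k$, you build an explicit real polynomial of degree $|S|$ vanishing at $\{2\imath\pi n : n\in S\}$, exploiting the symmetry $S=-S$ coming from $\tilde\phi$ being real-valued, and read off a nontrivial dependence among $(\tilde\phi^{(j)})_{0\leq j\leq k}$. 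Your route is a bit more self-contained (it avoids the unstated but standard fact about extracting an invertible square submatrix from a full-column-rank rectangular matrix), and the observation that one needs to respect real coefficients is a nice point of care; note, though, that since the $\phiD j$ are real-valued, linear independence over $\RR$ and over $\CC$ coincide, so the symmetrization is not strictly necessary — one could just use $P(Y)=\prod_{n\in S}(Y-2\imath\pi n)$ and split into real and imaginary parts if desired.
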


The proof of this proposition is given in Section~\ref{sec-proof-hyp-fourier}.



As we shall see in Section~\ref{sec-prelim-dl}, the conditions $\injn$ and $\injdn$ imply respectively the invertibility of $\Phitz^* \Phitz$ and $\Gatz^* \Gatz$, provided that $t$ is small enough. According to Proposition~\ref{hyp-fourier}, in the special case of an ideal low-pass filter, these conditions holds if and only if $f_c$ is large enough with respect to the number $N$ of spikes.

\subsection{Vanishing Derivatives Precertificate}
\label{sec-vanish-precertif}

Following~\cite{duval-exact2013} (Section 4.1, Definition 6), we introduce below the so called ``vanishing derivatives pre-certificate'' $\etaV$, which is a function defined on $\Pos$ that interpolates the spikes positions and signs (here $+1$). Note that $\etaV$ can be computed in closed form by solving the linear system~\eqref{eq-vanish-closed-form}.


\begin{prop}[Vanishing derivatives precertificate, \cite{duval-exact2013}]\label{propdef-vanishing}
 	If $\GatzO$ has full column rank, there is a unique solution to the problem
  \begin{align*}
    \inf\enscond{\normObs{p}}{\forall i=1,\ldots,N, \; (\Phi^*p)(t z_{0,i})=1, (\Phi^*p)'(t z_{0,i})=0}.
  \end{align*}
  Its solution $\pV$ is given by 
  \begin{align}\label{eq-vanish-closed-form}
   	\pV=\GatzO^{+,*} \begin{pmatrix} \bun_N \\ 0\end{pmatrix},
  \end{align}
and we define the vanishing derivatives precertificate as $\etaV\eqdef\Phi^*\pV$.
\end{prop}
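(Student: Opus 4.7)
The plan is to recast the interpolation constraints as a single finite-dimensional linear equation in $p$, and then invoke the classical fact that the minimum-norm solution of a consistent linear equation in a Hilbert space is unique and expressible through the pseudo-inverse.

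First I would observe that since $\Phi^* p = \dotObs{\phi(\cdot)}{p}$, the pointwise interpolation constraints in the infimum can be rewritten as
\begin{equation*}
(\Phi^* p)(tz_{0,i}) = \dotObs{\phi(tz_{0,i})}{p}, \qquad (\Phi^* p)'(tz_{0,i}) = \dotObs{\phi'(tz_{0,i})}{p},
\end{equation*}
which, recalling the ``matricial'' notation $\GatzO = (\PhitzO \ \PhitzO')$, is exactly
\begin{equation*}
\GatzO^* p = \begin{pmatrix} \bun_N \\ 0 \end{pmatrix} \in \RR^{2N}.
\end{equation*}
The feasibility set of the infimum is therefore the preimage $C \eqdef (\GatzO^*)^{-1}\{(\bun_N,0)^\top\}$.

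Next I would use the full column rank hypothesis on $\GatzO$. This implies $\GatzO^* : \Obs \to \RR^{2N}$ is surjective (its range is closed as a finite-dimensional subspace, and dense by the rank condition), so $C$ is a non-empty closed affine subspace of the Hilbert space $\Obs$. Minimizing the strictly convex coercive functional $p \mapsto \normObs{p}^2$ over such a set admits a unique minimizer by the Hilbert projection theorem, which is characterized as the orthogonal projection of $0$ onto $C$, namely the unique element of $C$ lying in $(\ker \GatzO^*)^\perp = \overline{\mathrm{Range}(\GatzO)} = \mathrm{Range}(\GatzO)$.

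Finally I would produce the closed form. Seeking $\pV = \GatzO q$ for some $q \in \RR^{2N}$ and imposing the constraint yields $\GatzO^*\GatzO q = (\bun_N,0)^\top$; since $\GatzO$ has full column rank, $\GatzO^*\GatzO$ is invertible, so $q = (\GatzO^*\GatzO)^{-1}(\bun_N,0)^\top$ and
\begin{equation*}
\pV = \GatzO(\GatzO^*\GatzO)^{-1}\begin{pmatrix}\bun_N \\ 0\end{pmatrix} = \GatzO^{+,*}\begin{pmatrix}\bun_N \\ 0\end{pmatrix},
\end{equation*}
using $\GatzO^{+,*} = \bigl((\GatzO^*\GatzO)^{-1}\GatzO^*\bigr)^* = \GatzO(\GatzO^*\GatzO)^{-1}$. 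There is no real obstacle here; the only point requiring minimal care is the identification of $(\ker \GatzO^*)^\perp$ with the (closed, finite-dimensional) range of $\GatzO$, which justifies that the abstract projection admits the explicit pseudo-inverse expression.
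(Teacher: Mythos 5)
Your proof is correct. Note that the paper itself does not prove this proposition — it cites it from Duval and Peyr\'e's earlier work — so there is no in-text argument to compare against; your argument is the standard one for such minimal-norm interpolation problems (translate the interpolation constraints into the linear equation $\GatzO^* p = (\bun_N, 0)^\top$, invoke the Hilbert projection theorem on the resulting closed affine subspace, and solve for the minimizer in $\mathrm{Range}(\GatzO)$ via the normal equations), which matches the derivation in the cited reference. The one place that deserves the care you gave it is indeed the closedness of $\mathrm{Range}(\GatzO)$: since $\GatzO$ has finite-dimensional domain $\RR^{2N}$, its range is automatically closed, so the identification $(\ker\GatzO^*)^\perp = \mathrm{Range}(\GatzO)$ holds without further hypotheses, and the pseudo-inverse formula follows cleanly from the invertibility of the Gram matrix $\GatzO^*\GatzO$ under the full column rank assumption.
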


As shown in~\cite{duval-exact2013} (see Section 4.2), $\etaV$ governs the support recovery properties of the BLASSO $\blasso$. More precisely, if $\etaV$ is \textit{non-degenerate}, \textit{i.e.} 
\eql{\label{eq-etaV-nondegen}
\left\{\begin{split}
	\foralls \pos\in \Pos \setminus \{t z_{0,1},\ldots, t z_{0,N}\}, \quad \abs{\etaV(\pos)}&<1,\\
	 \foralls i\in\{1,\ldots,N\}, \quad
	\etaV''( t \posz_{0,i} )&\neq 0,
\end{split}\right.
}
then there exists a low noise regime in which the BLASSO recovers exactly the correct number $N$ of spikes, and the error on the locations and amplitudes is proportional to the noise level.

The constants involved in the main result of~\cite{duval-exact2013} (Theorem 2) depend on the value of $t>0$. The goal of this paper is to precisely determine this dependency, and to show that this support recovery result extends to the setting where $t \rightarrow 0$, provided that $(\la,w)$ obey some precise scaling with respect to $t$.


Since our focus is actually on the support recovery properties of the BLASSO when $t\to 0$, it is natural to look at the limit of $\pV$ as $t\to 0$. This leads us to the $(2N-1)$-vanishing derivatives precertificate defined below.

\begin{prop}[$(2N-1)$-vanishing derivatives precertificate]\label{def-nvanishing}
	If $\injdn$ holds, there is a unique solution to the problem
  \begin{align*}
    \inf\enscond{\normObs{p}}{(\Phi^*p)(0)=1, (\Phi^*p)'(0)=0, \ldots , (\Phi^*p)^{(2N-1)}(0)=0}.
  \end{align*}
  We denote by $\pW$ its solution, given by 
  \begin{align}\label{def-pW}
    \pW=\Fdn^{+,*}\dirac{2N}   
  \end{align}
and we define the $(2N-1)$-vanishing derivatives precertificate as $\etaW\eqdef\Phi^*\pW$.
\end{prop}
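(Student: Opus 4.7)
The strategy is to recognize the problem as a standard minimum-norm problem in a Hilbert space, exactly analogous to the construction of $\pV$ in Proposition~\ref{propdef-vanishing}, but with $\GatzO$ replaced by $\Fdn$ and with the derivative data specified at a single point $0$ rather than at the $N$ spike locations.

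First, I would rewrite the constraints in terms of $\Fdn$. Since $(\Phi^*p)^{(k)}(0) = \dotObs{\phi^{(k)}(0)}{p} = \dotObs{\phiD{k}}{p}$ for $0 \leq k \leq 2N-1$, the $2N$ scalar constraints are equivalent to the single vector equation
\begin{equation*}
\Fdn^* p = \dirac{2N},
\end{equation*}
where $\Fdn$ is defined in~\eqref{eq-defn-Fk} and $\dirac{2N} = (1,0,\ldots,0)^T \in \RR^{2N}$ as in~\eqref{eq-dirac}. By hypothesis $\injdn$, the family $(\phiD{0},\ldots,\phiD{2N-1})$ is linearly independent in $\Obs$, so $\Fdn : \RR^{2N} \to \Obs$ has full column rank. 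Hence $\Fdn^*\Fdn$ is invertible (symmetric positive definite), and the pseudo-inverse $\Fdn^+ = (\Fdn^*\Fdn)^{-1}\Fdn^*$ is well defined.

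Next I would establish existence and uniqueness. The feasible set $C = \{p\in\Obs : \Fdn^* p = \dirac{2N}\}$ is a closed affine subspace of $\Obs$, and it is non-empty since the element $p_0 = \Fdn(\Fdn^*\Fdn)^{-1}\dirac{2N}$ satisfies $\Fdn^* p_0 = \Fdn^*\Fdn(\Fdn^*\Fdn)^{-1}\dirac{2N} = \dirac{2N}$. The existence and uniqueness of a minimum-norm element in $C$ then follows from the standard projection theorem applied to the closed convex set $C$ in the Hilbert space $\Obs$ (project the origin onto $C$).

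Finally I would identify the minimizer explicitly. The minimum-norm element of $C$ is characterized by being orthogonal to the direction space of $C$, namely $\ker \Fdn^*$. Therefore $\pW \in (\ker\Fdn^*)^{\perp} = \overline{\operatorname{Im} \Fdn} = \operatorname{Im}\Fdn$, the last equality holding because $\operatorname{Im}\Fdn$ is a finite-dimensional (hence closed) subspace of $\Obs$. Writing $\pW = \Fdn \alpha$ for some $\alpha \in \RR^{2N}$ and substituting into the constraint yields $\Fdn^*\Fdn\,\alpha = \dirac{2N}$, whence $\alpha = (\Fdn^*\Fdn)^{-1}\dirac{2N}$ and
\begin{equation*}
\pW = \Fdn(\Fdn^*\Fdn)^{-1}\dirac{2N} = \Fdn^{+,*}\dirac{2N},
\end{equation*}
which is~\eqref{def-pW}. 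There is no real obstacle here; the whole proof is a direct transcription of the minimum-norm argument underlying Proposition~\ref{propdef-vanishing}, the only subtlety being to observe that the Hermite-type interpolation conditions at a single point $0$ are encoded precisely by the matrix $\Fdn^*$, and that $\injdn$ is exactly what makes the associated Gram matrix invertible.
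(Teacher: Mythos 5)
Your proof is correct and is exactly the standard minimum-norm argument that the paper implicitly relies on (the paper omits the proof, treating this proposition as the direct analogue of Proposition~\ref{propdef-vanishing} cited from \cite{duval-exact2013}, with $\GatzO$ replaced by $\Fdn$ and the Hermite interpolation data concentrated at the single point $0$). The key observations — that the constraints amount to $\Fdn^* p = \dirac{2N}$, that $\injdn$ makes $\Fdn^*\Fdn$ invertible so the feasible set is a non-empty closed affine subspace, and that the minimum-norm element lies in $(\ker\Fdn^*)^\perp = \operatorname{Im}\Fdn$ because $\operatorname{Im}\Fdn$ is finite-dimensional hence closed — are precisely the steps one needs, and your algebra recovering $\pW = \Fdn(\Fdn^*\Fdn)^{-1}\dirac{2N} = \Fdn^{+,*}\dirac{2N}$ is correct.
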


The following Proposition, which is a direct consequence of Lemma~\ref{lem-dl-ftz} in the next section, shows that indeed $\etaV$ converges toward $\etaW$. 

\begin{prop}
  If $\injdn$ holds and $\phi\in\kernel{K}$ (for $K\geq 2N-1$), then for $t>0$ small enough $\GatzO$ has full column rank. Moreover
  \begin{align*}
    \lim_{t\to 0^+} \pV &= \pW \mbox{ strongly in $\Obs$},\\
    \lim_{t\to 0^+} \etaV^{(k)} &= \etaW^{(k)} \mbox{ in the sense of the uniform convergence on $\Pos$,}
  \end{align*}
  for all $0\leq k\leq K$.
  \label{prop-cvetav}
\end{prop}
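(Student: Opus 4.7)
The idea is to use a Taylor expansion at $0$ to relate $\GatzO$ to $\Fdn$ through the Hermite interpolation matrix $\Hz$, and then to exploit the explicit formulas~\eqref{eq-vanish-closed-form} and~\eqref{def-pW} defining $\pV$ and $\pW$.

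Let $D(t)\in\RR^{2N\times 2N}$ be the diagonal matrix with diagonal $(1,t,t^2,\ldots,t^{2N-1})$, and let $\Delta(t)\in\RR^{2N\times 2N}$ be the diagonal matrix whose first $N$ diagonal entries equal $1$ and whose last $N$ equal $1/t$. Expanding the columns of $\GatzO\Delta(t)^{-1}$, namely $\phi(tz_{0,i})$ and $t\,\phi'(tz_{0,i})$, by Taylor's formula at order $2N-1$ (licit since $\phi\in\kernel{K}$ with $K\geq 2N-1$) and regrouping the coefficients of $\phiD{0},\ldots,\phiD{2N-1}$ yields
\begin{align*}
\GatzO\,\Delta(t)^{-1} = \Fdn\, D(t)\, \Hz + E(t),
\end{align*}
where every column of $E(t)$ is $o(t^{2N-1})$ in $\Obs$. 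This is essentially the content of Lemma~\ref{lem-dl-ftz}. Under $\injdn$, $\Fdn$ has full column rank, and $\Hz$ is invertible as the Hermite interpolation matrix at the pairwise distinct nodes $z_{0,i}$. Considering $\Fdn^*\GatzO\Delta(t)^{-1}D(t)^{-1} = (\Fdn^*\Fdn)\Hz + \Fdn^*E(t)D(t)^{-1}$, the perturbation $\Fdn^*E(t)D(t)^{-1}$ is $o(1)$ (since every entry of $\Fdn^*E(t)$ is $o(t^{2N-1})$ and the diagonal of $D(t)^{-1}$ is at most $t^{-(2N-1)}$), so this product converges to the invertible matrix $(\Fdn^*\Fdn)\Hz$. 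Hence $\GatzO$ has full column rank for $t$ small enough, proving the first claim.

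To pass to the limit, write $\GatzO^{+,*} = \GatzO(\GatzO^*\GatzO)^{-1}$. Substituting $\GatzO = (\Fdn D(t)\Hz + E(t))\Delta(t)$ and formally cancelling the invertible factors $\Hz$, $D(t)$ and $\Delta(t)$ appearing symmetrically, the leading term becomes $\Fdn^{+,*}\, D(t)^{-1}\, \Hz^{-*}\, \Delta(t)^{-1}$, while the $E(t)$-contributions vanish as $t\to 0^+$ once evaluated on the fixed right-hand side $(\bun_N, 0)^\top$, by an argument analogous to the one above. Although $D(t)^{-1}$ and $\Delta(t)^{-1}$ blow up individually, the singularities cancel on this vector: $\Delta(t)^{-1}(\bun_N, 0)^\top = (\bun_N, 0)^\top$ is trivial, and by the definition of $\Hz^{*,-1}$ as the Hermite interpolation matrix, $\Hz^{-*}(\bun_N, 0)^\top$ is the coefficient vector in the basis $(1,X,\ldots, X^{2N-1}/(2N-1)!)$ of the unique polynomial $P\in\RR_{2N-1}[X]$ satisfying $P(z_{0,i}) = 1$ and $P'(z_{0,i}) = 0$ for all $i$. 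This polynomial is $P\equiv 1$, so $\Hz^{-*}(\bun_N, 0)^\top = \dirac{2N}$, and then $D(t)^{-1}\dirac{2N} = \dirac{2N}$. Passing to the limit yields $\pV \to \Fdn^{+,*}\dirac{2N} = \pW$ strongly in $\Obs$.

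For the uniform convergence of the derivatives, observe that $\etaV^{(k)}(x) - \etaW^{(k)}(x) = \dotObs{\phi^{(k)}(x)}{\pV - \pW}$, so Cauchy--Schwarz combined with the uniform bound $\sup_{x\in\Pos}\normObs{\phi^{(k)}(x)} < +\infty$ (granted for $0\leq k\leq K$ by Definition~\ref{def-kernel}, either by compactness of $\TT$ or by the explicit growth assumption on $\RR$) yields $\sup_{x\in\Pos}\abs{\etaV^{(k)}(x) - \etaW^{(k)}(x)} \to 0$. The main technical difficulty is the careful bookkeeping of the remainder $E(t)$ when inverting $\GatzO^*\GatzO$: since $D(t)^{-1}$ and $\Delta(t)^{-1}$ carry negative powers of $t$, every occurrence of $E(t)$ must be paired with enough positive powers of $t$ to avoid blow-up. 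This is precisely what Lemma~\ref{lem-dl-ftz} is designed to handle.
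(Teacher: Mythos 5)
Your proof is correct and follows essentially the same route as the paper's, which establishes this proposition as a direct consequence of Lemma~\ref{lem-dl-ftz}: the Taylor expansion $\GatzO = \Fdn\Htz + \text{(remainder)}$ of Proposition~\ref{prop-facto}, the scaling factorization~\eqref{facto-Htz} of $\Htz$ through $D(t)$, $\Hz$, $\Delta(t)$, the observation that $\Hz^{-*}(\bun_N,0)^\top=\dirac{2N}$ via the Hermite-interpolation interpretation of $\Hz^{*,-1}$, and finally Cauchy--Schwarz with $\sup_\pos\normObs{\phi^{(k)}(\pos)}<+\infty$ to upgrade strong convergence of $\pV$ to uniform convergence of the $\etaV^{(k)}$. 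A minor refinement in your version: by using the Peano form of the remainder you only need $\phi\in\kernel{2N-1}$, which matches the stated hypothesis $K\geq 2N-1$ exactly, whereas Lemma~\ref{lem-dl-ftz} is proved under the section's running assumption $\phi\in\kernel{2N}$ with the integral remainder.
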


Figure~\ref{certif-convergence} shows graphically this convergence of $\etaV$ toward $\etaW$ in the case of the deconvolution problem over the $1$-D torus with the Dirichlet kernel.
Figure~\ref{etaW-dirichlet-fig} and~\ref{etaW-gaussian} show $\etaW$ for several values of $N$. Notice how it becomes flatter at $0$ as $N$ increases. This implies that $\etaV$ for small $t$ gets closer to degeneracy as $N$ increases. This is reflected in our main contribution (Theorem~\ref{thm-main}) where the signal-to-noise ratio is required to scale with $t^{2N-1}$. 

\begin{figure*}
	\centering    
	\begin{tabular}{@{\hspace{1mm}}c@{\hspace{2mm}}c@{\hspace{2mm}}c@{\hspace{1mm}}}
      \includegraphics[width=0.31\linewidth]{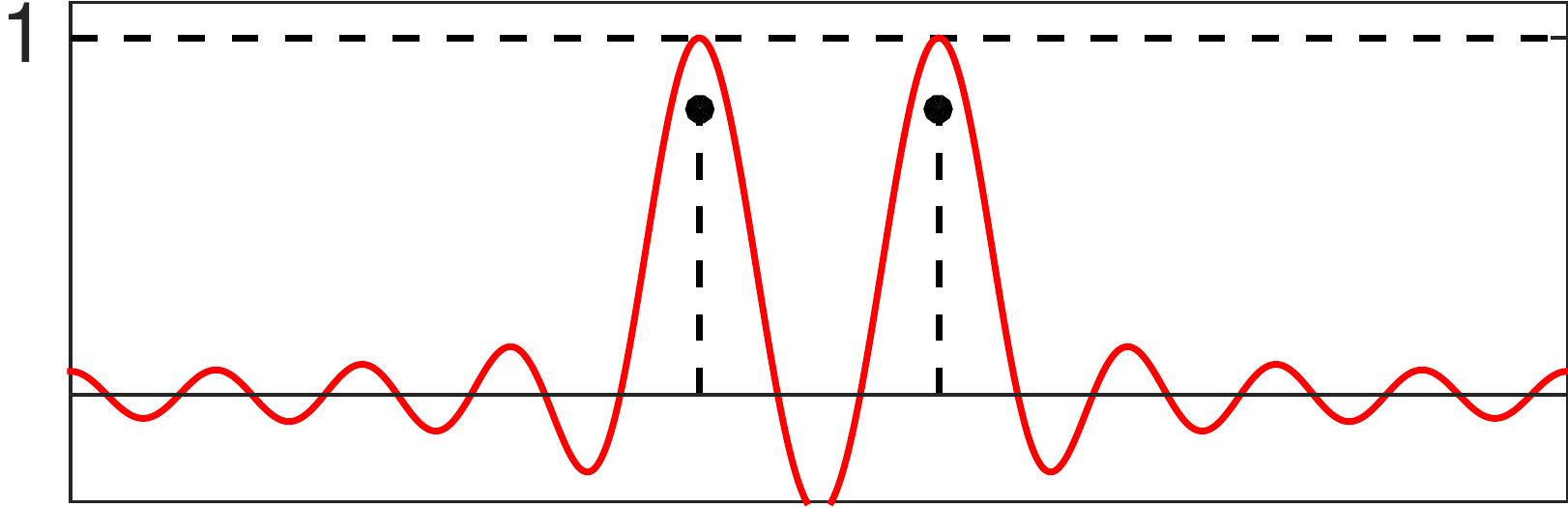} &
      \includegraphics[width=0.31\linewidth]{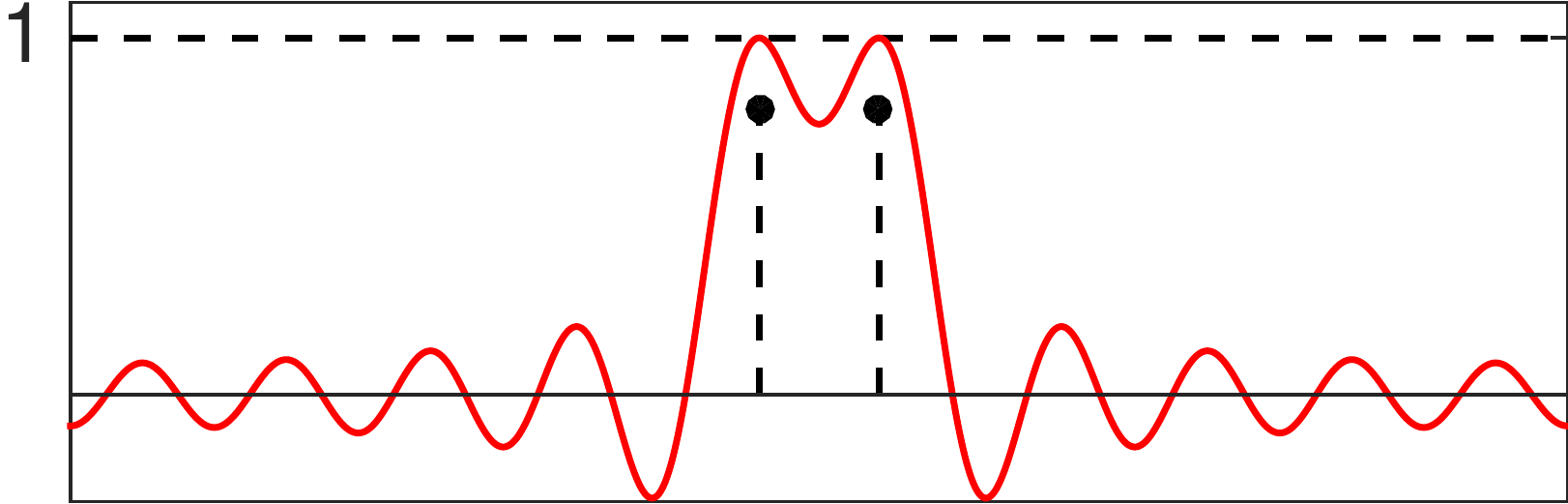} &
      \includegraphics[width=0.31\linewidth]{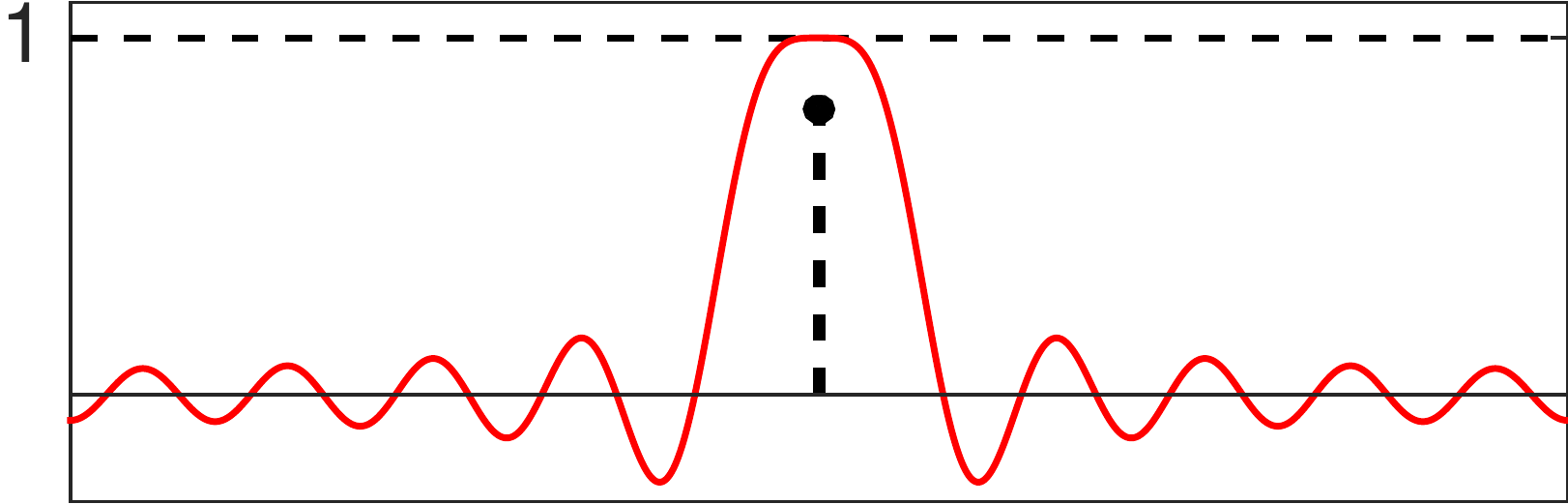} \\
      $t=0.4$ & $t=0.2$ & $t=0.01$ 
   	\end{tabular}
\caption{\label{certif-convergence} 
	\textit{Top row:} $\etaV$ for several values of $t$, showing the convergence toward $\etaW$. 
  The operator $\Phi$ is an ideal low-pass filter with a cutoff frequency $f_c=10$ (see~\eqref{eq-idealLPF}).
}
\end{figure*}

\begin{figure*}
	\centering    
	\begin{tabular}{@{\hspace{1mm}}c@{\hspace{2mm}}c@{\hspace{2mm}}c@{\hspace{1mm}}}
      \includegraphics[width=0.31\linewidth]{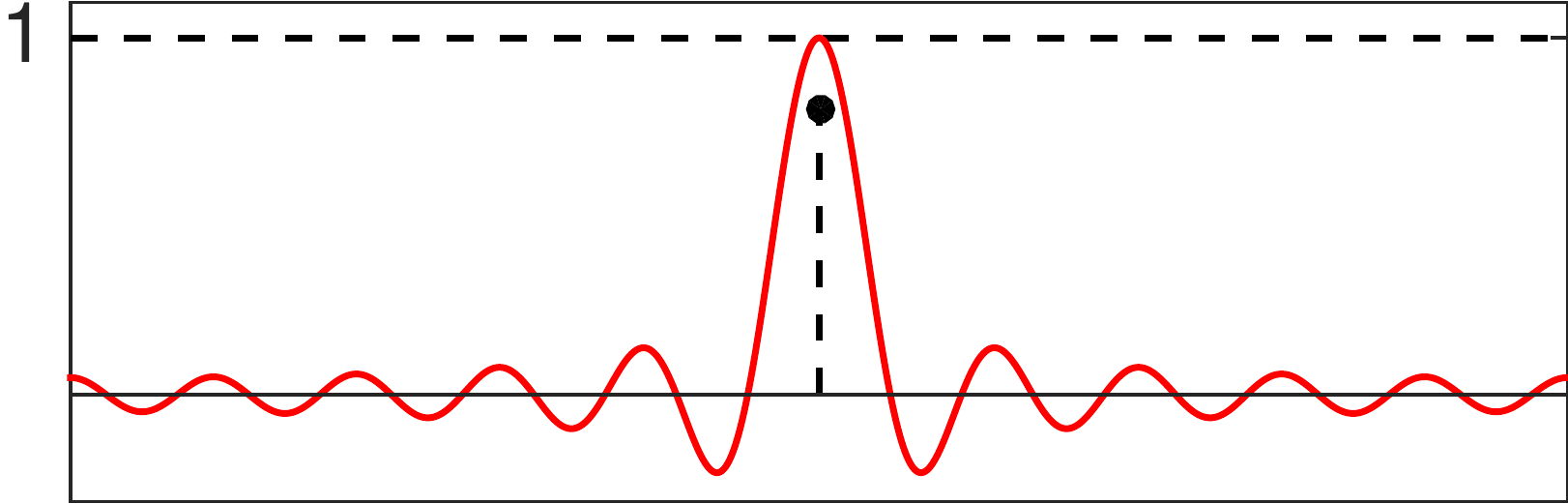}&
      \includegraphics[width=0.31\linewidth]{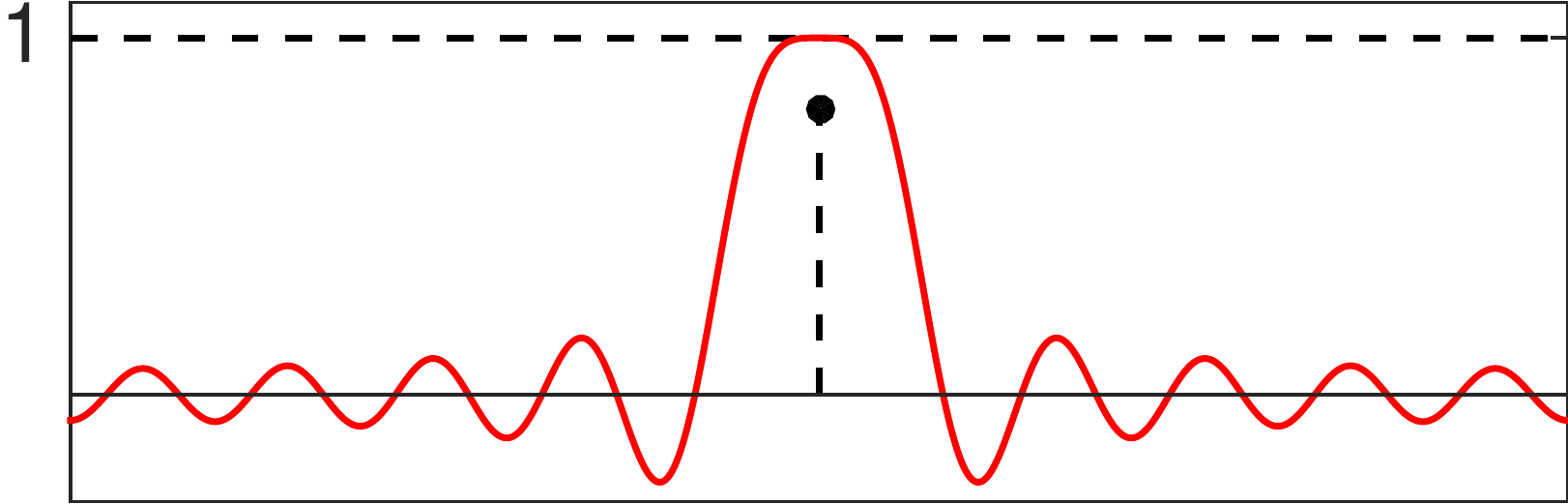}&
      \includegraphics[width=0.31\linewidth]{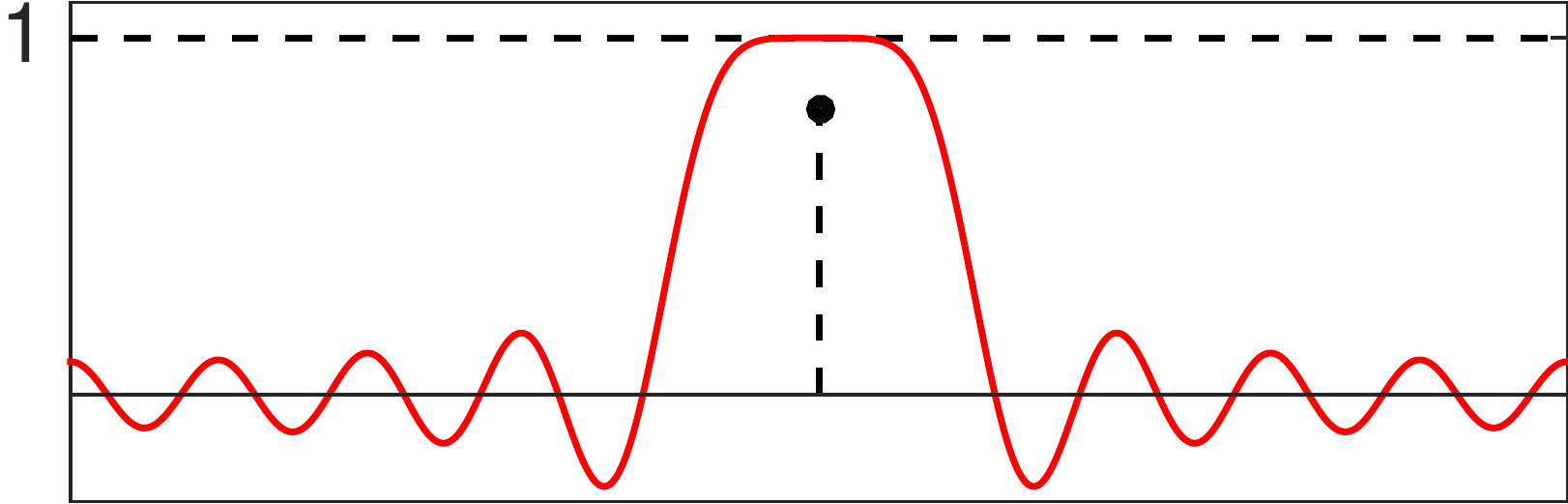}\\
      $N=1$ ($\etaV=\etaW$) & $N=2$ & $N=3$
   	\end{tabular}
\caption{\label{etaW-dirichlet-fig} 
	$\etaW$ for several values of $N$. The operator $\Phi$ is an ideal low-pass filter with a cutoff frequency $f_c=10$ (see~\eqref{eq-idealLPF})
}
\end{figure*}

The behavior of $\etaV$ is therefore governed by specific properties of $\etaW$ for small values of $t>0$. 
In particular, as stated by Theorem~\ref{if-etaw-nondegen} below, the non-degeneracy of $\etaW$ (as defined in Definition~\ref{def-etaw-nondegen} below) implies the non-degeneracy of $\etaV$ (as defined in~\eqref{eq-etaV-nondegen}). 

\begin{defn}\label{def-etaw-nondegen}
 	Assume that $\injdn$ holds and $\phi\in\kernel{2N}$. We say that $\etaW$ is $(2N-1)$-non-degenerate if $\etaW^{(2N)}(0)\neq 0$ and for all $\pos\in \Pos\setminus\{0\}$, $|\etaW(\pos)| <1$.
\end{defn}

\begin{thm}\label{if-etaw-nondegen}
Suppose that $\etaW$ is $(2N-1)$-non-degenerate (Definition~\ref{def-etaw-nondegen}). Let $\RW>0$. Then, there exist $\constW>0$, $\tW>0$ such that for all $t\in (0,\tW)$, all $\posz\in\RR^N$ with pairwise distinct coordinates and $\normLiVec{z}\leq\RW$, and all $\eta \in \precert{2N}$ satisfying for $1\leq i \leq N$, $\eta(\post_i)=1$ and $\eta'(\post_i)=0$,
\begin{align*}
 &\left(\forall \ell\in \{0,\ldots 2N\},\quad \normLi{\eta^{(\ell)}-\etaW^{(\ell)}}\leq \constW\right) \\ 
\Longrightarrow 	&\left(\forall \pos\in \Pos\setminus\bigcup_i \{\post_i\},
	\quad 
	|\eta(\pos)|<1 \qandq \forall 1\leq i \leq N, \ \eta''(\post_i)<0\right).
\end{align*}
\end{thm}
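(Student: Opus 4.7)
The plan is to split $\Pos$ into a fixed small neighborhood $V_0 := [-r_0, r_0]$ of the origin (identified with a neighborhood of $0 \in \TT$ in the torus case) and its complement, and to compare $\eta$ with $\etaW$ on each region using three quantitative consequences of the non-degeneracy hypothesis. Since $\etaW(0) = 1$ with $\etaW^{(\ell)}(0) = 0$ for $1 \leq \ell \leq 2N-1$ and $|\etaW| < 1$ elsewhere, the origin is a global maximum of $\etaW$, which forces $\etaW^{(2N)}(0) \leq 0$; combined with $\etaW^{(2N)}(0) \neq 0$ this gives $c_0 := -\etaW^{(2N)}(0) > 0$. By continuity of $\etaW^{(2N)}$ I can then pick $r_0 > 0$ (with $r_0 < 1/4$ when $\Pos = \TT$) so that $\etaW^{(2N)}(\pos) \leq -c_0/2$ on $V_0$. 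Outside $V_0$ there is $\delta_0 > 0$ with $\sup_{\Pos \setminus V_0} |\etaW| \leq 1 - 2\delta_0$: on $\TT$ this is immediate by compactness, and on $\RR$ it combines compactness on bounded intervals with the vanishing of $\etaW = \Phi^* \pW$ at infinity (Definition~\ref{def-kernel}). On the compact set $V_0$, $\etaW > -1 + \delta_1$ for some $\delta_1 > 0$. I then set
\begin{equation*}
	\tW := r_0 / (2 \RW), \qquad \constW := \tfrac{1}{2} \min(c_0/4, \delta_0, \delta_1).
\end{equation*}

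For $t \in (0, \tW)$ and $\normLiVec{\posz} \leq \RW$, every $\post_i = t z_i$ lies in $[-r_0/2, r_0/2] \subset V_0$. The interpolation conditions $\eta(\post_i) = 1$, $\eta'(\post_i) = 0$ mean that $f := 1 - \eta$ vanishes to order at least $2$ at each $\post_i$, so the Hermite interpolant of $f$ at $\post_1, \ldots, \post_N$ with multiplicities $2$, of degree $< 2N$, is identically zero, and the classical Hermite error formula provides, for each $\pos \in V_0$, some $\xi_\pos \in V_0$ (lying in the smallest subinterval containing $\{\pos, \post_1, \ldots, \post_N\}$) such that
\begin{equation*}
	(1 - \eta)(\pos) = -\frac{\eta^{(2N)}(\xi_\pos)}{(2N)!} \prod_{i=1}^{N} (\pos - \post_i)^2.
\end{equation*}
By the closeness assumption, $\eta^{(2N)}(\xi_\pos) \leq -c_0/2 + \constW \leq -c_0/4$ on $V_0$, so $(1-\eta)(\pos) \geq \frac{c_0/4}{(2N)!} \prod_i (\pos - \post_i)^2 \geq 0$, with equality iff $\pos \in \{\post_1, \ldots, \post_N\}$; hence $\eta(\pos) < 1$ on $V_0 \setminus \{\post_1, \ldots, \post_N\}$. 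Dividing by $(\pos - \post_i)^2$ and letting $\pos \to \post_i$, the left-hand side tends to $-\eta''(\post_i)/2$ by Taylor expansion of $\eta$ at $\post_i$, while the right-hand bound tends to $\frac{c_0/4}{(2N)!} \prod_{j \neq i}(\post_i - \post_j)^2 > 0$ (the $\post_j$ being pairwise distinct), so $\eta''(\post_i) < 0$. Moreover $\eta(\pos) \geq \etaW(\pos) - \constW > -1$ on $V_0$ by the choice of $\delta_1$, and for $\pos \in \Pos \setminus V_0$, $|\eta(\pos)| \leq |\etaW(\pos)| + \constW \leq 1 - 2\delta_0 + \constW < 1$.

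All the constants $r_0, c_0, \delta_0, \delta_1$ depend only on $\etaW$, so $\tW$ and $\constW$ depend only on $\etaW$ and $\RW$ and the bounds are uniform in $\posz$. The main delicate technical point I expect is the Hermite remainder formula together with the passage to the limit for $\eta''(\post_i)$: both rely on the $C^{2N}$ regularity of $\eta$ (built into the $\ell = 2N$ case of the closeness condition) and, in the torus case, on a local identification of a neighborhood of $0 \in \TT$ with an interval of $\RR$, which is unproblematic since $r_0 < 1/4$.
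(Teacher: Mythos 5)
Your proof is correct and takes a genuinely different route from the paper's. The paper argues by contradiction: it invokes Rolle's theorem repeatedly to show that $\eta'$ cannot have more than $2N-1$ zeros near the origin without forcing $\eta^{(2N)}$ to vanish there (contradicting closeness to $\etaW^{(2N)}$, which is bounded away from zero), then it deduces that the signs $\eta''(t\posz_i)$ must all be equal, and finally uses $\etaW^{(2N)}(0)<0$ to rule out the positive case. You instead apply the Hermite interpolation remainder formula directly to $f=1-\eta$, whose Hermite interpolant of degree $<2N$ at the nodes $t\posz_i$ (with multiplicity $2$) vanishes identically, yielding
\begin{equation*}
(1-\eta)(\pos)=-\frac{\eta^{(2N)}(\xi_\pos)}{(2N)!}\prod_{i=1}^N(\pos-t\posz_i)^2
\end{equation*}
and hence the strict bound $\eta<1$ away from the nodes and, after dividing by $(\pos-t\posz_i)^2$ and letting $\pos\to t\posz_i$, the sign $\eta''(t\posz_i)<0$. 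This is cleaner: the quantitative lower bound on $1-\eta$ falls out of a single classical formula, you avoid the combinatorial sign-pattern argument entirely, and the constants $\tW$ and $\constW$ are given explicitly rather than produced by a sequence/contradiction argument. The two proofs rest on the same underlying facts (Rolle's theorem is of course what proves the Hermite remainder formula), but your packaging is more direct and yields an explicit estimate that could be reused downstream. The global part (outside the fixed neighborhood $V_0$) coincides with the paper's argument.
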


The proof of this theorem can be found in Section~\ref{proof-if-etaw-nondegen}. 

\begin{rem}
  An important consequence of Theorem~\ref{if-etaw-nondegen} is that if $\etaW$ is $(2N-1)$-non-degenerate, then, by Proposition~\ref{prop-cvetav}, $\etaV$ is also non-degenerate for $t>0$ small enough. The function $\etaV$ is then the minimal norm certificate for the measure $\meastO$, and the \textit{Non-Degenerate Source Condition} (see~\cite{duval-exact2013}) holds. As a result, for fixed small $t>0$, the BLASSO admits a unique solution in a certain low noise regime corresponding to a large enough signal to noise ratio, with exactly the same number of spikes as the original measure $\meastO$. For more details on that matter, see~\cite{duval-exact2013}.
\end{rem}


A natural question is whether $\etaW$ is indeed $(2N-1)$-non-degenerate. Proposition~\ref{etaw-locnondegen} below (proved in Section~\ref{sec-proof-etaw-locnondegen}) gives a partial answer in the case of the deconvolution over $\Pos$.

\begin{prop}\label{etaw-locnondegen}
  Assume that $\Phi$ is a convolution operator (\ie for all $\pos\in\Pos$, $\phi(\pos)=\tilde{\phi}(\cdot-\pos)$ and $\Obs=L^2(\Pos)$) and $\injdnu$ holds. Suppose also, only in the case $\Pos=\RR$, that for all $0\leq i\leq 2N-1$, $\tilde{\phi}^{(i)}(\pos)\to 0$ when $|\pos|\to+\infty$. Then $\etaW^{(2N)}(0)<0$.
\end{prop}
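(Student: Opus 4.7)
My plan is to express $\etaW^{(2N)}(0)$ as a ratio of Hankel determinants built from $s_m := \|\tilde\phi^{(m)}\|_{L^2}^2$, and then read off its sign from the positive-definiteness of a Gram matrix.

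In the convolution setting $\phi^{(k)}(0) = (-1)^k \tilde\phi^{(k)}$ in $L^2(\Pos)$, so by the formula $\pW = \Fdn^{+,*}\dirac{2N}$ we have $\pW \in V := \mathrm{span}(\tilde\phi, \tilde\phi', \ldots, \tilde\phi^{(2N-1)})$. Writing $\pW = \sum_{k=0}^{2N-1} a_k \tilde\phi^{(k)}$, the interpolation conditions defining $\pW$ become, after accounting for chain-rule signs, the linear system $M a = \dirac{2N}$ with $M_{jk} := \langle \tilde\phi^{(j)}, \tilde\phi^{(k)}\rangle_{L^2}$, and one checks that
\[
\etaW^{(2N)}(0) = \int_\Pos \tilde\phi^{(2N)}\pW = m^\top a, \qquad m_k := \langle \tilde\phi^{(k)}, \tilde\phi^{(2N)}\rangle_{L^2}.
\]
By Cramer's rule, $\etaW^{(2N)}(0) = \det(M_{[0\leftarrow m]})/\det M$, where $M_{[0\leftarrow m]}$ denotes $M$ with its first column replaced by $m$.

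The structural heart of the proof is a checkerboard pattern from iterated integration by parts (valid on $\TT$ by periodicity, and on $\RR$ by the decay hypothesis $\tilde\phi^{(i)}\to 0$ at infinity for $i\leq 2N-1$): the entries $\langle \tilde\phi^{(i)}, \tilde\phi^{(j)}\rangle$ and $m_k$ vanish whenever the sum of indices is odd, and otherwise equal $(-1)^{(j-i)/2}\,s_{(i+j)/2}$. Permuting rows and columns to group even indices before odd ones, $M$ becomes block-diagonal $\mathrm{diag}(M_E, M_O)$ and $M_{[0\leftarrow m]}$ becomes $\mathrm{diag}(A, M_O)$, where $A$ is the even block $M_E$ with its first column replaced by the even entries of $m$. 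Hence $\etaW^{(2N)}(0) = \det A / \det M_E$. Conjugating by the diagonal sign matrix $D := \mathrm{diag}((-1)^\ell)_{\ell=0}^{N-1}$ identifies $M_E$ with the Hankel matrix $H_0 := (s_{\ell+\ell'})_{0\leq \ell,\ell'\leq N-1}$; a parallel computation for $A$, together with a single cyclic column rotation that produces one residual minus sign, yields $\det A = -\det H_1$ with $H_1 := (s_{\ell+\ell'+1})_{0\leq \ell,\ell'\leq N-1}$. Combining these,
\[
\etaW^{(2N)}(0) = -\frac{\det H_1}{\det H_0}.
\]

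To conclude, $\injdnu$ makes the Gram matrix of $(\tilde\phi,\ldots,\tilde\phi^{(2N)})$ positive definite, and $H_0$ (respectively $H_1$) is conjugate via $D$ to its even-indexed (respectively odd-indexed) principal submatrix, hence is itself symmetric positive definite. So $\det H_0 > 0$ and $\det H_1 > 0$, and the minus sign in the displayed formula yields $\etaW^{(2N)}(0) < 0$. The main obstacle is not conceptual but rather the sign bookkeeping in the similarity by $D$ and the cyclic column rotation that together produce the crucial minus sign; as a sanity check, the case $N=1$ collapses to $\etaW''(0) = -\|\tilde\phi'\|_{L^2}^2 / \|\tilde\phi\|_{L^2}^2 < 0$, consistent with the general formula.
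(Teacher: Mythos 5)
Your proof is correct, and it takes a genuinely different route from the paper's. Both proofs start from the same observation: iterated integration by parts (with boundary terms killed by periodicity on $\TT$ or by the stated decay on $\RR$) forces the Gram matrix of the derivatives to be a checkerboard matrix with entries $\pm\|\tilde\phi^{(m)}\|_{L^2}^2$. From there the paths diverge. The paper works with the $(2N+1)\times(2N+1)$ matrix $\Fdnu^*\Fdnu$, invokes a block-inversion formula to express $\etaW^{(2N)}(0)$ through the $(2N+1,1)$ entry of $(\Fdnu^*\Fdnu)^{-1}$, reduces via Cramer's rule to the determinant of the matrix obtained by deleting the first row and last column, recognizes it as the square of a Pfaffian (hence $\geq 0$), and then needs a separate, somewhat delicate linear-algebra argument to rule out that this Pfaffian vanishes. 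You instead stay with the $2N\times 2N$ Gram matrix $M$, apply Cramer's rule directly, exploit the parity permutation to block-diagonalize, and identify the two blocks (after the diagonal sign conjugation) as the Hankel matrices $H_0=(s_{\ell+\ell'})$ and $H_1=(s_{\ell+\ell'+1})$, yielding the clean closed form $\etaW^{(2N)}(0)=-\det H_1/\det H_0$. The strict positivity of $\det H_0$ and $\det H_1$ then comes for free, since each is (up to a sign conjugation) the Gram matrix of a sub-family of the linearly independent family $(\tilde\phi^{(0)},\ldots,\tilde\phi^{(2N)})$. Your sign bookkeeping is correct: the conjugation by $D=\mathrm{diag}((-1)^\ell)$ on the modified even block produces a factor $(-1)^N$, and the cyclic rotation of one column by $N-1$ transpositions produces $(-1)^{N-1}$, leaving exactly the required minus sign. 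This route avoids Pfaffians entirely, replaces the auxiliary nonvanishing argument by positive-definiteness, and has the bonus of an explicit Hankel-determinant formula, which the paper does not exhibit.
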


\begin{rem}
  Thanks to Proposition~\ref{etaw-locnondegen} and the first part of the proof of Theorem~\ref{if-etaw-nondegen} (which is given in Section~\ref{proof-if-etaw-nondegen}), note that the following is true: there exist $\constW>0$, $\tW>0$ such that for all $t\in (0,\tW)$, $\posz\in\RR^N$ with pairwise distinct coordinates and $\normLiVec{\posz}\leq\RW$, there exists $r^+>0$ with $r^+>\underset{1\leq i\leq N}{\max} \tW\posz_i$ and $r^-<0$ with $r^-<\underset{1\leq i\leq N}{\min} \tW\posz_i$ such that for all $\eta \in \precert{2N}$ satisfying for all $1\leq i\leq N$, $\eta(\post_i)=1$ and $\eta'(\post_i)=0$:
\begin{align*}
  &\left(\forall \ell\in \{0,\ldots, 2N\}, \ \normLi{\eta^{(\ell)}-\etaW^{(\ell)}}\leq \constW \right)\\
  &\Longrightarrow \left(\forall \pos\in (r^-,r^+)\setminus\bigcup_i \{\post_i\}, \ |\eta(\pos)|<1 \qandq \forall i\in\{1, \ldots, N\}, \ \eta''(\post_i)<0   \right).
\end{align*}
\end{rem}

Whether the other condition in the definition of the $(2N-1)$-non-degeneracy of $\etaW$ holds (i.e. whether $|\etaW|<1$ on $\Pos\setminus\{0\}$) should be checked on a case-by-case basis. Since $\etaW$ depends only on the kernel $\phi$ and can be computed by simply inverting a linear system, it is easy to check numerically if $\etaW$ is $(2N-1)$-non-degenerate.
Proposition~\ref{etaW-gaussian-nondegen} shows that in the special case of $\Pos=\RR$ and the Gaussian kernel, $\etaW$ can be computed in closed form and is indeed $(2N-1)$-non-degenerate.


\begin{prop}\label{etaW-gaussian-nondegen}
Assume that $\Pos=\RR$, $\Obs=L^2(\RR)$ and $\Phi$ is a convolution operator, \ie for all $\pos\in\RR$, $\phi(\pos)=\tilde{\phi}(\cdot-\pos)$, where $\tilde{\phi}:\pos\in\RR\mapsto e^{-\pos^2/2}$ is a Gaussian. Then the associated $(2N-1)$-vanishing derivatives precertificate is 
\begin{align}\label{etaW-closedform-gaussian}
	\foralls \pos\in\RR, \quad
	\etaW(\pos)=e^{-\frac{\pos^2}{4}}\sum_{k=1}^N \frac{\pos^{2k}}{2^{2k}k!}.
\end{align}
In particular, $\etaW$ is $(2N-1)$-non-degenerate.
\end{prop}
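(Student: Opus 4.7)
The plan is to exploit the fact that for the Gaussian kernel, the relevant Hilbert subspaces reduce to polynomials multiplied by a Gaussian, then identify $\etaW$ via the uniqueness in Proposition~\ref{def-nvanishing} and read off $(2N-1)$-non-degeneracy from the closed form.

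First I would observe that $\tilde\phi(t)=e^{-t^2/2}$ implies $\phi^{(j)}(0)(t)=(-1)^j\tilde\phi^{(j)}(t)$ is a polynomial of exact degree $j$ times $e^{-t^2/2}$. Hence $\injdn$ holds, and the span $V=\spn\{\phi^{(j)}(0):0\le j\le 2N-1\}$ coincides with the subspace $\{Q(t)e^{-t^2/2}:Q\in\RR_{2N-1}[X]\}$ of $L^2(\RR)$. By Proposition~\ref{def-nvanishing}, $\pW\in V$, so one may write $\pW=Q(t)e^{-t^2/2}$ for some polynomial $Q$ of degree $\le 2N-1$. Completing the square $(t-x)^2/2+t^2/2=(t-x/2)^2+x^2/4$ and changing variable $u=t-x/2$ yields
$$\etaW(x)=\int_\RR Q(t)e^{-t^2/2}e^{-(t-x)^2/2}\,dt = e^{-x^2/4}\int_\RR Q(u+x/2)e^{-u^2}\,du = e^{-x^2/4}R(x),$$
where $R\in\RR_{2N-1}[X]$. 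The linear map $Q\mapsto R$ (a Weierstrass-type transform) preserves degree and is therefore a bijection on $\RR_{2N-1}[X]$, so every polynomial $R$ of degree $\le 2N-1$ is realized by some admissible $Q$.

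Second, I would use that by the uniqueness in Proposition~\ref{def-nvanishing}, $\etaW$ is characterized among functions $e^{-x^2/4}R(x)$ with $R\in\RR_{2N-1}[X]$ by the interpolation conditions $\etaW(0)=1$ and $\etaW^{(j)}(0)=0$ for $j=1,\dots,2N-1$. Setting $y=x^2/4$, the candidate from \eqref{etaW-closedform-gaussian} (read with the partial exponential sum $T_{N-1}(y)=\sum_{k=0}^{N-1}y^k/k!$) can be rewritten as
$$\etaW(x)=e^{-y}T_{N-1}(y)=1-e^{-y}\sum_{k=N}^{\infty}\frac{y^k}{k!},$$
whose correction term is $O(y^N)=O(x^{2N})$ at the origin. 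Therefore $\eta(0)=1$, all derivatives of order $1,\dots,2N-1$ vanish at $0$, and $\eta$ has the required form $e^{-x^2/4}R(x)$; by uniqueness $\etaW=\eta$.

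Third, the above rewriting gives $\etaW(x)=1-y^N/N!+O(y^{N+1})=1-x^{2N}/(4^N N!)+O(x^{2N+2})$, hence $\etaW^{(2N)}(0)=-(2N)!/(4^N N!)\ne 0$ (which also agrees with Proposition~\ref{etaw-locnondegen}). For $x\ne 0$ one has $y>0$, and the partial Taylor sum satisfies $0<T_{N-1}(y)<e^y$; multiplying by $e^{-y}>0$ yields $0<\etaW(x)<1$, hence $|\etaW(x)|<1$. Both requirements of Definition~\ref{def-etaw-nondegen} are verified, so $\etaW$ is $(2N-1)$-non-degenerate. The only non-routine point is the algebraic identification of $R$ in the second step; once the Weierstrass observation in the first step reduces the problem to determining a polynomial of degree $\le 2N-1$ from Taylor data at zero, the truncated exponential identity makes both the closed form and non-degeneracy fall out simultaneously.
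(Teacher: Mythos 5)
Your proof is correct and takes a genuinely different (and arguably cleaner) route than the paper's. The paper starts from the explicit representation $\etaW(x)=\sqrt{\pi}\,e^{-x^2/4}\sum_{k}\igdk\hHrdk(x)$, where the $\igdk$ are entries of $(\Fdn^*\Fdn)^{-1}$ and $\hHrdk$ are Hermite polynomials adapted to $e^{-x^2/4}$; it then proves the closed form by induction on $N$, showing that $e^{x^2/4}(\eta_{W,N+1}-\eta_{W,N})$ is a monomial $\lambda x^{2N}$ and pinning down $\lambda$ via a Leibniz--binomial identity for the constant terms of Hermite polynomials. You instead exploit structure: since $\pW=\Fdn(\Fdn^*\Fdn)^{-1}\dirac{2N}\in\Im\Fdn$, it lies in $V=\{Q(t)e^{-t^2/2}:Q\in\RR_{2N-1}[X]\}$; completing the square shows $\Phi^*$ maps $V$ bijectively (via a degree-preserving Weierstrass-type transform) onto $\{e^{-x^2/4}R(x):R\in\RR_{2N-1}[X]\}$; hence $\etaW$ is the unique element of the latter space whose $(2N-1)$-jet at the origin is $(1,0,\dots,0)$, and the truncated-exponential identity $e^{-y}T_{N-1}(y)=1-e^{-y}\sum_{k\ge N}y^k/k!$ exhibits that element directly. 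This replaces the paper's recursion and Hermite arithmetic with a single uniqueness argument and a Taylor inspection; the non-degeneracy part (the $(2N)$-th derivative from the leading correction term, and $0<T_{N-1}(y)<e^y$ giving $0<\etaW<1$ off the origin) is then essentially the same observation used in the paper. Note also that you have silently repaired what appears to be an indexing typo in \eqref{etaW-closedform-gaussian}: as written it starts the sum at $k=1$, which would give $\etaW(0)=0$ and contradict both the interpolation constraint and the paper's own base case $\etaW=e^{-x^2/4}$ for $N=1$; reading it as $\sum_{k=0}^{N-1}$, as you do, is the intended meaning.
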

The proof of this result can be found in the Appendix~\ref{sec-proof-gaussian-nondegen}. If we denote by $\eta_{W,\sigma}$, the $(2N-1)$ vanishing derivatives precertificate associated to the filter $\phi_\sigma:\pos\in\RR\mapsto e^{-\frac{\pos^2}{2\sigma^2}}$, then $\eta_{W,\sigma}=\eta_{W,1}(\frac{\cdot}{\sigma})$. That is why we only consider the case of $\sigma=1$ in Proposition~\ref{etaW-gaussian-nondegen}. Figure~\ref{etaW-gaussian} shows $\etaW$ for the Gaussian filter with an increasing number $N$ of spikes.

For the deconvolution over the $1$-D torus, we observed numerically (as illustrated in Figure~\ref{etaW-dirichlet-fig}) that $\etaW$ is $(2N-1)$-non degenerate for the ideal low pass filter for any value $N$ such that $N\leq f_c$ (Figure~\ref{etaW-fc-incr} represents the complementary case where $N$ is fixed but $f_c$ increases). However, for some filters, the associated $\etaW$ might be degenerate. This is illustrated in Figure~\ref{etaW-comp-incr} where $\etaW$ is illustrated for several filters with increasing complexity \ie we consider low pass filters with a fixed cutoff frequency, with increasing extreme Fourier coefficients (starting with a slowly varying filter). Remark that the last two $\etaW$ (in red) are degenerate, as they correspond to the filters with the higher complexity (the Fourier coefficients increase the most with the frequency).

%
%

\begin{figure}[htbp]
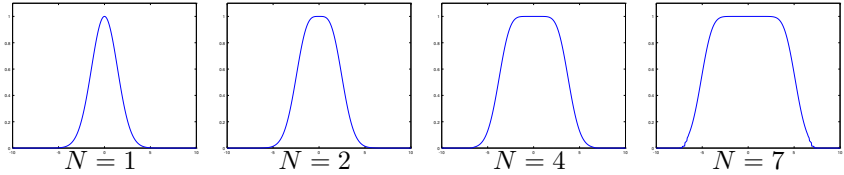

\centering
 \begin{tabular}{@{\hspace{-1mm}}c@{\hspace{-3mm}}c@{\hspace{-3mm}}c@{\hspace{-3mm}}c@{\hspace{-1mm}}}
\includegraphics[width=0.26\linewidth]{etaW-plots/gaussian/etaW-gaussian-n1sigma10} &
 \includegraphics[width=0.26\linewidth]{etaW-plots/gaussian/etaW-gaussian-n2sigma10} &
 \includegraphics[width=0.26\linewidth]{etaW-plots/gaussian/etaW-gaussian-n4sigma10} &
 \includegraphics[width=0.26\linewidth]{etaW-plots/gaussian/etaW-gaussian-n7sigma10} 
  \vspace{-0.4cm}\\
 $N=1$ & $N=2$ & $N=4$ & $N=7$
 \end{tabular}
   \vspace{-0.4cm}
 \caption{\label{etaW-gaussian}$\etaW$ for the Gaussian filter ($x\in\RR$, $\phi(x)=e^{-\frac{x^2}{2\sigma^2}}$) for several numbers of spikes and $\sigma=1$. $\etaW$ is $(2N-1)$-non-degenerate. It gets flatter at $0$ when the number of spikes increases.}
\end{figure}

\begin{figure}[htbp]
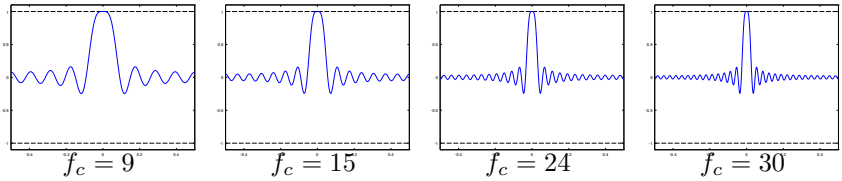

 \centering
 \begin{tabular}{@{\hspace{-1mm}}c@{\hspace{-3mm}}c@{\hspace{-3mm}}c@{\hspace{-3mm}}c@{\hspace{-1mm}}}
 \includegraphics[width=0.26\linewidth]{etaW-plots/fc-inc/etaW-n2fc9} &
 \includegraphics[width=0.26\linewidth]{etaW-plots/fc-inc/etaW-n2fc15} &
 \includegraphics[width=0.26\linewidth]{etaW-plots/fc-inc/etaW-n2fc24} &
 \includegraphics[width=0.26\linewidth]{etaW-plots/fc-inc/etaW-n2fc30}
 \vspace{-0.4cm}\\
 $f_c=9$ & $f_c=15$ & $f_c=24$ & $f_c=30$  \vspace{-0.2cm}
 \end{tabular}
 \caption{\label{etaW-fc-incr}$\etaW$ for the ideal low pass filter as $f_c$ increases, for $N=2$.}
\end{figure}

\begin{figure}[htbp]
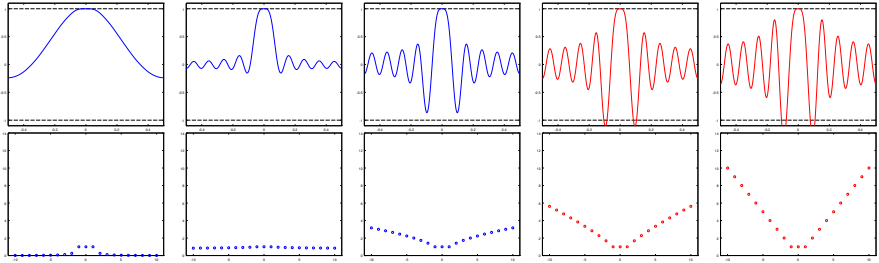

 \centering
 \begin{tabular}{@{\hspace{-1mm}}c@{\hspace{-3mm}}c@{\hspace{-3mm}}c@{\hspace{-3mm}}c@{\hspace{-3mm}}c@{\hspace{-1mm}}}
 \includegraphics[width=0.22\linewidth]{etaW-plots/triangle/etaW-n2-1} &
 \includegraphics[width=0.22\linewidth]{etaW-plots/triangle/etaW-n2-4} &
 \includegraphics[width=0.22\linewidth]{etaW-plots/triangle/etaW-n2-6} &
 \includegraphics[width=0.22\linewidth]{etaW-plots/triangle/etaW-n2-7} &
 \includegraphics[width=0.22\linewidth]{etaW-plots/triangle/etaW-n2-8}
 \vspace{-0.4cm}\\
 \includegraphics[width=0.22\linewidth]{etaW-plots/triangle/w-1} &
 \includegraphics[width=0.22\linewidth]{etaW-plots/triangle/w-4} &
 \includegraphics[width=0.22\linewidth]{etaW-plots/triangle/w-6} &
 \includegraphics[width=0.22\linewidth]{etaW-plots/triangle/w-7} &
 \includegraphics[width=0.22\linewidth]{etaW-plots/triangle/w-8}
 \vspace{-0.4cm}\\
 \end{tabular}
 \caption{\label{etaW-comp-incr}$\etaW$ for a low pass filter for $f_c=10$ with increasing high frequency content. 
 \textit{First row: }$\etaW$. \textit{Second row: }associated Fourier coefficients of the filter. 
 The curve showing $\etaW$ is in blue when it is $(2N-1)$-non-degenerate and in red when it is degenerate.}
\end{figure}

\subsection{Main Contribution}
\label{sec-main-contrib}

We now state our main contribution.

%

\begin{thm}\label{thm-main}
	Suppose that $\varphi\in\kernel{2N+1}$ and that $\etaW$ is $(2N-1)$-non-degenerate. 
	Then there exist constants $(t_1,\tW,C,\constR,M)$ (depending only on $\phi$, $\ampO$ and $\poszO$) such that 
	for all $0<t<\min\pa{ t_1,\tW}$, 
	for all $(\la,w)\in \ball{0}{\constR t^{2N-1}}$ with $\normObs{\frac{w}{\la}}\leq C$, 
\begin{itemize}
  \item the problem $\blasso$ admits a unique solution,
  \item that solution has exactly $N$ spikes, and it is of the form $\meast$, with $(\amp,\posz)=\gs(\la,w)$ (where $\gs$ is a $\Cder{2N}$ function defined on $\ball{0}{\constR t^{2N-1}}\subset\RR\times\Obs$),
  \item the following inequality holds
    \eq{
		\normLiVec{(\amp,\posz)-(\ampO,\poszO)}\leq\constdg\pa{\frac{|\la|}{t^{2N-1}}+\frac{\normObs{w}}{t^{2N-1}}}.
	}
\end{itemize}
\end{thm}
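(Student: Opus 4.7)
The natural approach, following \cite{duval-exact2013}, is to combine the implicit function theorem applied to the first-order optimality conditions with a dual certificate argument to guarantee global optimality and uniqueness. What distinguishes the present situation from \cite{duval-exact2013} is that every constant must be tracked in $t$: the key novelty is to show that the inverse of the relevant Jacobian blows up exactly as $1/t^{2N-1}$, and that the certificate $\etaW$ (limit of $\etaV$ as $t\to 0$) provides a non-degeneracy that persists uniformly in $t$ on a suitable scale.

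\textbf{Implicit function theorem step.} I would parametrize candidate $N$-spike solutions as $m_{a,z}=\sum_i a_i \delta_{z_i}$ and consider the map
\[
F(a,z,\la,w) \eqdef \Gamma_{z}^* \bigl(\Phi_z a - \Phi_{\postO}\ampO - w\bigr) + \la \begin{pmatrix}\bun_N\\ 0\end{pmatrix},
\]
whose zero set encodes the first-order optimality conditions $\eta(z_i)=1$, $\eta'(z_i)=0$ of $\blasso$ with the positive sign pattern. Clearly $F(\ampO,\postO,0,0)=0$. The Jacobian $\partial_{(a,z)}F$ at that point factorizes (after diagonal rescaling by $\operatorname{diag}(1,\ldots,1,a_{0,i}))$ through $\GatzO^* \GatzO$. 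Using the Taylor expansion $\phi(t z_{0,i})=\sum_{k=0}^{2N-1}\frac{(tz_{0,i})^k}{k!}\phiD{k}+O(t^{2N})$ and assumption $\injdn$, the asymptotics of $\GatzO$ developed in Section~\ref{sec-prelim} give $\norm{\GatzO^{+}}\asymp t^{-(2N-1)}$ and, accordingly, $\norm{\bigl(\partial_{(a,z)}F\bigr)^{-1}}\leq C\, t^{-(2N-1)}$. A quantitative Newton/IFT argument (Section~\ref{sec-tfi}) then produces $\constR,C,t_1>0$ such that for $0<t<t_1$ and $(\la,w)\in B(0,\constR t^{2N-1})$ with $\norm{w/\la}\leq C$, there is a unique $\Cder{2N}$ map $g_s$ with $(a,z)=g_s(\la,w)$ solving $F=0$ near $(\ampO,\postO)$, and with Lipschitz constant of order $t^{-(2N-1)}$.

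\textbf{Dual certificate verification.} To upgrade the critical point produced by the IFT into the unique global minimizer of $\blasso$ with exactly $N$ spikes, I would exhibit its associated dual variable $p_{\la,w} \eqdef \frac{1}{\la}(\Phi_{z}a - \Phi_{\postO}\ampO - w)$ and show that $\eta_{\la,w}\eqdef \Phi^*p_{\la,w}$ satisfies $\eta_{\la,w}(z_i)=1$, $\eta_{\la,w}'(z_i)=0$ by construction, and strictly $\abs{\eta_{\la,w}}<1$ on $\Pos\setminus\{z_1,\ldots,z_N\}$ together with $\eta_{\la,w}''(z_i)<0$. This is where $\etaW$ enters: by Proposition~\ref{prop-cvetav}, $\etaV^{(\ell)}\to \etaW^{(\ell)}$ uniformly for $\ell\leq 2N$ as $t\to 0$, and a quantitative version of this convergence (Section~\ref{sec-etaL}) together with the bound on $\norm{w/\la}$ and the IFT Lipschitz estimate yields $\normLi{\eta_{\la,w}^{(\ell)}-\etaW^{(\ell)}}\leq \constW$ for $\ell\leq 2N$, provided $t<\tW$. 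Theorem~\ref{if-etaw-nondegen} then delivers non-degeneracy of $\eta_{\la,w}$, so $\eta_{\la,w}$ is a valid (non-degenerate) dual certificate and the candidate $m_{a,z}$ is the unique solution of $\blasso$.

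\textbf{Error bound and main obstacle.} Since $g_s(0,0)=(\ampO,\postO)$ and the Lipschitz constant of $g_s$ is $O(t^{-(2N-1)})$, the error estimate follows by integration along a line segment in the $(\la,w)$-plane. The main obstacle is the careful asymptotic analysis of $\GatzO$, $\GatzO^+$ and of $\etaV$ with sharp $t$-dependent constants: the ill-conditioning as $t\to 0$ must be matched on both sides of the argument, namely, the IFT radius shrinks like $t^{2N-1}$, while the norm of the inverse blows up like $t^{-(2N-1)}$, and the two scales must combine to leave room for a uniform non-degeneracy margin coming from $\etaW$. Controlling these cancellations, in particular producing the universal constants $\constR,C,\constdg,M,t_1,\tW$ depending only on $\phi,\ampO,\poszO$, is the delicate part that Sections~\ref{sec-prelim}, \ref{sec-tfi} and \ref{sec-etaL} have been designed to handle.
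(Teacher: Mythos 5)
Your proposal matches the paper's proof structure: implicit‐function‐theorem parametrization of the first‐order optimality conditions with the $1/t^{2N-1}$ Jacobian blow‐up controlled via the Vandermonde–Hermite factorization of $\Gatz$, the $t^{2N-1}$‐radius extension of the implicit map, the dual‐certificate check via convergence of $\etaL$ to the $(2N-1)$‐non‐degenerate $\etaW$, and the error bound by integrating $\d\gs$ along a segment. One small mis‐attribution worth flagging: the constraint $\normObs{w/\la}\leq C$ is \emph{not} needed for the IFT extension (Proposition~\ref{prop-tfi-ext} holds for all $(\la,w)\in\ball{0}{\constR t^{2N-1}}$); it enters only in the certificate verification step (Proposition~\ref{etaL-nondegen}), to keep $\normLi{\etaL^{(\ell)}-\etaW^{(\ell)}}$ small.
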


Note that the value of constants involved can be found in the proof of the theorem, more precisely, $(t_1,C)$ are defined in~\eqref{eq-cst-nondeg}, $\constR$ is defined in Proposition~\ref{prop-tfi-ext}, $\tW$ is defined in Theorem~\ref{if-etaw-nondegen} and $M$ is given in Corollary~\ref{maj-dg}.

The proof of Theorem~\ref{thm-main} uses results spanning Sections~\ref{sec-prelim}, \ref{sec-tfi} and \ref{sec-etaL}. Below, we give a sketch of proof to guide the reader through the remaining of the paper. The elements of the proof are divided in three main steps.


\paragraph{Step 1 (Section~\ref{sec-first-order-opt-cond} and~\ref{sec-tfi-tfi}).} We start with the first order optimality equation that any solution $\meast$, for fixed $t>0$, of $\blasso$ must satisfy \ie
 \begin{align*}
 	\Gatz^*\pa{ \Phitz\amp-\PhitzO\ampO-w}+\la \begin{pmatrix}\bun_N\\ 0 \end{pmatrix}=0.
 \end{align*}
 It is obtained by applying Fermat's rule to the problem $\blasso$. Since the parameters $(\amp,\posz,\la,w)=(\ampO,\poszO,0_\RR,0_{\Obs})$ are a solution of the equation, the idea is to parametrize, in a neighborhood of $(\la,w)=(0_\RR,0_{\Obs})$, the amplitudes and positions $(\amp,\posz)$ in terms of $(\la,w)$ by applying the implicit function theorem so that $(\amp,\posz,\la,w)$ is a solution of the first order equation. This process is detailed in Section~\ref{sec-first-order-opt-cond} and~\ref{sec-tfi-tfi}. The rest of the proof consists in proving that the measure $\meast$ is the unique solution of the problem $\blasso$. But before we have to deal with the domain of existence of the above parametrization.
 
\paragraph{Step 2 (Section~\ref{sec-extension-tfi}).} The implicit function theorem only provides the existence of a neighborhood in $(\la,w)$ of $(0_\RR,0_{\Obs})$ where the parametrization holds,  but we do not know how it size depends on the parameter $t$. This issue is important because one of our aims is to determine the constraints on $t$ (corresponding roughly to the minimum distance between the spikes of the original measure), on the noise level and on the regularization parameter $\la$ so that the recovery of the support is possible. Section~\ref{sec-extension-tfi} is devoted to show that the parametrization, which writes $(\amp,\posz)=\gs(\la,w)$ (see Equation~\eqref{eq-fimpext} for the definition of the implicit function $\gs$), of the solutions of the first order optimality equation holds in a neighborhood of $(0_{\RR},0_{\Obs})$ and of size proportional to $t^{2N-1}$. This result corresponds to Proposition~\ref{prop-tfi-ext}. The proof uses an upper bound of $\d\gs$ which is stated in Corollary~\ref{maj-dg}. Proposition~\ref{prop-tfi-ext} relies on asymptotic expansions (of $\Gatz$ for example), when $t\to 0$, gathered and used in the proof of Lemma~\ref{maj-df}. Section~\ref{sec-prelim} is devoted to these asymptotic expansions and it may be skipped at first reading.
 
\paragraph{Step 3 (Section~\ref{sec-etaL}).} Up to now, we have constructed a candidate solution $\meast$ (composed of $N$ spikes) where $(\amp,\posz)=\gs(\la,w)$ is built by parametrizing the solutions of the first order optimality equation. Moreover this parametrization holds for all $(\la,w)$ in a ball of radius proportional to $t^{2N-1}$. It remains to prove that $\meast$ is indeed the unique solution of $\blasso$. To prove that $\meast$ is a solution, it is equivalent to check that
 \eq{
 	0\in \partial\pa{m\mapsto \frac{1}{2}\normObs{\Phi m -\obs-w}^2+\la\normTVX{m}}(\meast),
}
which reformulates into $\etaL\eqdef\frac{1}{\la}\Phi^*(\obs+w-\Phi\meast)\in\partial\normTVX{\meast}$. This is done by first showing the convergence of $\etaL$ towards $\etaW$ when $(t,\la,w)\to 0$ in a well chosen domain, see Proposition~\ref{etaL-nondegen}, and then using Theorem~\ref{if-etaw-nondegen} and the fact that $\etaW$ is ensured to be $(2N-1)$-non-degenerate (which is one of the hypotheses of Theorem~\ref{thm-main}) to get the non-degeneracy of $\etaL$ and the conclusion. 
 
\paragraph{Putting all together.}

After this sketch, we now give the detailed proof. It uses Proposition~\ref{prop-tfi-ext} (parametrization of the solution of the first order optimality equation on a ball, for the parameter $(\la,w)$, of radius proportional to $t^{2N-1}$), Proposition~\ref{etaL-nondegen} (convergence of $\etaL$ towards $\etaW$), Theorem~\ref{if-etaw-nondegen} (use of the $(2N-1)$-non-degeneracy of $\etaW$ to transfer it to $\etaL$), and Proposition~\ref{first-order-dev-sol} (upper bound on the error of $\meast$ with respect to $\meastO$).

\begin{proof}[Proof of Theorem~\ref{thm-main}]
Let us take $t,\la,w$ as in the hypotheses of the Theorem~\ref{thm-main}. Let $(\amp,\posz)=\gs(\la,w)$, where $\gs$ is the function constructed in Section~\ref{sec-tfi}.
Let us define 
\eq{
	\pL\eqdef\frac{1}{\la} \pa{ \Phi_{\postO}\ampO+w-\Phi_{\post}\amp}
	\qandq
	\etaL\eqdef\Phi^* \pL.
}
By Proposition~\ref{etaL-nondegen} combined with Theorem~\ref{if-etaw-nondegen} where we take 
\begin{align}\label{def-RW}
	\RW\eqdef \sup\{\normLiVec{z};z\in\Bo\},
\end{align}
we have for $0<t< \min(\tW, t_1)$,  
\begin{align}\label{etaL-non-degenerate-cond}
  \forall \pos\in \Pos\setminus\bigcup_i \{\post_i\}, |\etaL(\pos)|<1 \qandq \forall 1\leq i \leq N, \ \etaL''(\post_i)<0,
\end{align}
while $\etaL(\post_i)=1=\sign(\amp_i)$ by definition.

We deduce that $\etaL$ is in the subdifferential of the total variation at $\meast$ because 
\begin{itemize}
\item $\etaL\in\ContX$,
\item $\normLi{\etaL}\leq 1$ thanks to Equation~\ref{etaL-non-degenerate-cond},
\item $\forall 1\leq i\leq N$, $\etaL(\post_i)=1=\sign(\amp_i)$ by definition of $\etaL$ (recall that $(\amp,\posz)=\gs(\la,w)$).
\end{itemize}
As a result $\meast$ is a solution to $\blasso$ and $\pL$ is the unique solution to the dual problem associated to $\blasso$ (see Section 2.4 of~\cite{duval-exact2013} for details on dual certificates and optimality conditions for $\blasso$).

Let $m$ be an other solution of $\blasso$. Then the support of $m$ is included in the saturation points of $\etaL=\Phi^*\pL$ \ie in $\{\post_1,\ldots,\post_N\}$. As a result $m=m_{\amp',\post}$ for some $\amp'\in\RR^N$ and $m$ satisfies the first order optimality equation $\Gatz^*\pa{ \Phitz\amp'-\PhitzO\ampO-w}+\la \begin{pmatrix}\bun_N\\ 0 \end{pmatrix}=0$. Hence $\Phitz^*\Phitz \amp'=\Phitz^*\Phitz \amp$ and since $\Phitz$ has full rank (by assumption $t$ is chosen sufficiently small, see Lemma~\ref{lem-dl-ftz} for the proof), $\Phitz^*\Phitz$ is invertible and $\amp'=\amp$. So $m=\meast$ and $\blasso$ admits a unique solution: $\meast$.

The bound on the error between $(\amp,\posz)$ and the amplitudes and positions of the initial measure $(\ampO,\poszO)$ is a direct consequence of Proposition~\ref{first-order-dev-sol}.
\end{proof}

\subsection{Necessary condition for the recovery in the limit $t\to0$}
\label{sec-necessary}

Our main contribution, Theorem~\ref{thm-main}, states that under a non-degeneracy property which involves $\etaW$, it is possible to perform the recovery of the support of a measure $\meastO$ in the limit $t\to0$ when the data are contaminated by some noise, provided that $\max(|\la|/t^{2N-1},\normObs{w}/t^{2N-1},\normObs{w}/\la)\leq C$ for some constant $C>0$ depending only on the filter $\phi$ and $(\ampO,\poszO)$. It is natural to ask whether the non-degeneracy condition on $\etaW$, in order to get the recovery of the support in some low noise regime, is sharp.

The following Theorem shows that the $(2N-1)$-non-degeneracy assumption on $\etaW$ is almost sharp in the sense that the recovery of the support in a low noise regime leads to $\normLi{\etaW}\leq 1$.

\begin{thm}\label{thm-necessary}
Suppose that $\injdn$ holds and $\phi\in\kernel{2N+1}$. 
Suppose also that there exists a sequence $(t_n)_{n\in\NN}$ such that $t_n \to 0$ and satisfying
\begin{align*}
\forall n\in\NN, \exists (\la_n,w_n), \exists (\amp_n,\posz_n)\in\RR^N\times\RR^N, \measn \mbox{ is solution of } \Pp_{\la_n}(y_{t_n}+w_n),
\end{align*}
where $(\la_n,w_n)\to0$ with $\frac{\normObs{w_n}}{\la_n}\to0$. Then
\begin{align}\label{etaW-normi}
	\normLi{\etaW}\leq 1.
\end{align}
\end{thm}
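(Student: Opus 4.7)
The plan is to convert each recovery instance into a BLASSO dual certificate $\eta_n$ satisfying $\normLi{\eta_n}\leq 1$, to pass to the pointwise limit $\eta_n\to\etaW$, and to conclude $\normLi{\etaW}\leq 1$ from the inequality $|\eta_n(x)|\leq 1$.

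\paragraph{Step 1: extracting dual certificates.} For each $n$, set $m_n\eqdef m_{\amp_n,t_n\posz_n}$. Primal-dual optimality for $\Pp_{\la_n}(y_{t_n}+w_n)$ produces
\[
p_n\eqdef\tfrac{1}{\la_n}\bigl(y_{t_n}+w_n-\Phi m_n\bigr),\qquad \eta_n\eqdef \Phi^*p_n.
\]
Standard BLASSO duality yields $\normLi{\eta_n}\leq 1$, while Fermat's rule applied to $m_n$ (positive amplitudes) gives $\eta_n(t_n\posz_{n,i})=1$ and $\eta_n'(t_n\posz_{n,i})=0$ for each $i$, equivalently $\Gamma_{t_n\posz_n}^*p_n=\begin{pmatrix}\bun_N\\0\end{pmatrix}$.

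\paragraph{Step 2: convergence $p_n\to\pW$ in $\Obs$.} I plan to show strong convergence. The primal bound $\tfrac{1}{2}\normObs{\la_n p_n}^2+\la_n\normu{\amp_n}\leq \tfrac{1}{2}\normObs{w_n}^2+\la_n\normu{\ampO}$ (tested at $m=\meastO$) forces $\normObs{\la_n p_n}\to 0$, hence $\Phi m_n\to\Phi\meastO$; Taylor expansion of $\varphi$ at $0$ and $\injn$ then yield $\normu{\amp_n}\to\normu{\ampO}$. The primal-dual identity $\la_n\normObs{p_n}^2=\dotObs{p_n}{y_{t_n}+w_n}-\normu{\amp_n}$, combined with $\dotObs{p_n}{y_{t_n}}=\sum_i\ampOi\eta_n(t_n\poszOi)\leq\normu{\ampO}$ and $\normObs{w_n}/\la_n\to 0$, then bounds $\normObs{p_n}$ uniformly. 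Extracting a weakly convergent subsequence $p_n\rightharpoonup p_\infty$ in $\Obs$, Taylor expansion of $\Phi^*p_n$ at $0$ followed by inversion of the Hermite interpolation matrix $H_{t_n\posz_n}^{-*}$ applied to the system of Step~1 yields $\dotObs{\phiD{k}}{p_\infty}=\delta_{k,0}$ for $k=0,\ldots,2N-1$, i.e., $\Fdn^*p_\infty=\dirac{2N}$. Writing the orthogonal decomposition $p_n=p_{V,t_n\posz_n}+r_n$ with $r_n\in\ker\Gamma_{t_n\posz_n}^*$, Proposition~\ref{prop-cvetav} gives $p_{V,t_n\posz_n}\to\pW$; dual optimality of $p_n$ (which in the joint limit $\la_n,\normObs{w_n}/\la_n,t_n\to 0$ selects the minimal-$\Obs$-norm candidate certificate) drives $\normObs{r_n}\to 0$, hence $p_n\to\pW$ strongly.

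\paragraph{Conclusion and main obstacle.} Weak convergence of $p_n$ yields $\eta_n(x)=\dotObs{\phi(x)}{p_n}\to\dotObs{\phi(x)}{\pW}=\etaW(x)$ for every $x\in\Pos$, so passing to the limit in $|\eta_n(x)|\leq 1$ gives $|\etaW(x)|\leq 1$ and, taking the supremum, $\normLi{\etaW}\leq 1$. The delicate step is the identification of the weak limit as $\pW$ in Step~2: without the minimality argument one could a priori have $p_\infty=\pW+r$ for some nonzero $r\in\ker\Fdn^*$, producing an $\eta_\infty\neq\etaW$ with the same Taylor jet at $0$. Ruling this out requires $\normObs{p_n}\to\normObs{\pW}$, which leans on the quadratic regularization in the BLASSO dual and on the hypothesis $\normObs{w_n}/\la_n\to 0$ in an essential way; this is the technical heart of the proof.
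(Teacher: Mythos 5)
Your approach is genuinely different from the paper's, and it is worth comparing the two.

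The paper's proof never passes through weak compactness. From the first-order optimality equation $\Gamma_{t_nz_n}^*p_n = \begin{psmallmatrix}\bun_N\\0\end{psmallmatrix}$ it derives the \emph{exact} decomposition
\[
p_n \;=\; \Gamma_{t_nz_n}^{*,+}\begin{pmatrix}\bun_N\\0\end{pmatrix} \;+\; \GPi_{t_nz_n}\frac{w_n}{\la_n} \;+\; \frac{1}{\la_n}\GPi_{t_nz_n}\Gamma_{t_nz_0}\begin{pmatrix}\ampO\\0\end{pmatrix},
\]
which is precisely your splitting $p_n = p_{V,t_nz_n}+r_n$ but with $r_n$ written out in closed form. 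The first term converges to $\pW$ by Lemma~\ref{lem-dl-ftz}, the second is $\bigO{\normObs{w_n/\la_n}}$, and the third is controlled via Lemma~\ref{lem-proj-gamma} together with the Lipschitz bound $\normLiVec{z_n-z_0}\leq M(|\la_n|+\normObs{w_n})/t_n^{2N-1}$ from Proposition~\ref{first-order-dev-sol}, which in turn rests on the identification $(\amp_n,\posz_n)=\gs(\la_n,w_n)$ from Section~\ref{sec-tfi}. So the paper replaces your ``dual optimality selects the minimal-norm element'' step with hard estimates coming from the implicit-function machinery. This is exactly the point your last paragraph flags: your Step~2 correctly reduces the problem to showing $r_n\to 0$, but the argument offered (``dual optimality drives $\normObs{r_n}\to 0$'') is a heuristic, not a proof. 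Without it, you only obtain $\Fdn^*p_\infty=\dirac{2N}$, which does not single out $\pW$ among all elements of that affine space, and $\normLi{\Phi^*p_\infty}\leq 1$ does not transfer to $\normLi{\etaW}\leq 1$. You identified the right obstruction; the paper's way around it is to get the Lipschitz control on $z_n-z_0$ first and then read off the strong convergence from the explicit formula.

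Two further remarks. First, your argument ``Taylor expansion of $\Phi^*p_n$ at $0$ followed by inversion of $H_{t_nz_n}^{-*}$'' to reach $\Fdn^*p_\infty=\dirac{2N}$ should be replaced by the cleaner generalized-Rolle argument: since $\eta_n-1$ has a zero of multiplicity $\geq 2$ at each $t_nz_{n,i}$, each derivative $\eta_n^{(k)}$, $1\leq k\leq 2N-1$, vanishes at some point of the shrinking interval $[t_n\min_i z_{n,i},\, t_n\max_i z_{n,i}]$, and one passes to the limit using boundedness of $p_n$ and $\phi\in\kernel{2N+1}$. Second, both your proof and the paper's rely on $\posz_n$ remaining in a neighborhood of $\poszO$ (so that $t_nz_n\to 0$ and the asymptotic lemmas apply); the paper absorbs this into the assertion $(\amp_n,\posz_n)=\gs(\la_n,w_n)$ for $n$ large, and you would need an analogous localization step.
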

The proof of this result can be found in Appendix~\ref{sec-proof-necessary-condition-etaW}.

The remaining sections of the paper, namely Sections~\ref{sec-prelim}, \ref{sec-tfi} and \ref{sec-etaL} are devoted to the proof of Theorem~\ref{thm-main}.



\section{Preliminaries}
\label{sec-prelim}
\label{sec-prelim-dl}

Our study relies to a large extent on the asymptotic behavior of quantities built upon $\Phitz$ and $\Gatz$ for $t>0$ small, such as $(\Phitz^* \Phitz)^{-1}$ or  $(\Gatz^* \Gatz)^{-1}$. In this section, we gather several preliminary results that enable us to control that behavior.

\paragraph{Approximate Factorizations.}

Our asymptotic estimates are based on an approximate factorization of $\Phix$ and $\Gax$ using Vandermonde and Hermite matrices. It enables us to study the asymptotic behavior of the optimality conditions of~$\blasso$ when $t\to 0^+$. In the following, we consider $\posz\in\Bo$ (see~\eqref{vois-ini}) and $t\in(0,1]$, so that $\Htz$ is always invertible. Moreover, we shall always assume that $\phi\in\kernel{2N}$.

\begin{prop}\label{prop-facto}
The following expansion holds
\begin{align}
  \Gatz= \Fdn \Htz + \GLambda_{t,\posz} D_t,
\end{align}
where $\Fdn$ is defined in~\eqref{eq-defn-Fk}, 
$\Htz$ is defined in~\eqref{eq-defn-Hx}, 
and where 
\begin{align*}
  \GLambda_{t,\posz}&\eqdef 
	\bigg(
  		\Big(
      \int_0^1 (\posz_i)^{2N}\varphi^{(2N)}(s\post_i)\frac{(1-s)^{2N-1}}{(2N-1)!} \d s
		\Big)_{1\leq i\leq N},  \\
		& \qquad 
		\Big(
    \int_0^1 (\posz_i)^{2N-1}\varphi^{(2N)}(s\post_i)\frac{(1-s)^{2N-2}}{(2N-2)!} \d s
		\Big)_{1\leq i\leq N} 
	\bigg) \\
	D_t&\eqdef \diag(t^{2N},\ldots,t^{2N},t^{2N-1}, \ldots, t^{2N-1}).
\end{align*}
\end{prop}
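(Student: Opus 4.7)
The plan is to recognize the claimed factorization as a block-wise Taylor expansion with integral remainder at $0$. Writing
\begin{align*}
\Gatz = (\varphi(t\posz_1),\ldots,\varphi(t\posz_N),\varphi'(t\posz_1),\ldots,\varphi'(t\posz_N)),
\end{align*}
the strategy is to expand each ``column'' of $\Gatz$ to order $2N-1$ (for the $\varphi$ block) or to order $2N-2$ (for the $\varphi'$ block), and then to check that the polynomial parts assemble into $\Fdn\Htz$ while the integral remainders assemble into $\GLambda_{t,\posz}D_t$.

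For the first $N$ columns, since $\varphi\in\kernel{2N}$, Taylor's theorem with integral remainder yields, for each $1\leq i\leq N$,
\begin{align*}
\varphi(t\posz_i) = \sum_{k=0}^{2N-1} \frac{(t\posz_i)^k}{k!}\,\phiD{k} + \int_0^{t\posz_i} \frac{(t\posz_i - u)^{2N-1}}{(2N-1)!}\,\varphi^{(2N)}(u)\,\d u.
\end{align*}
The change of variable $u = s\,t\posz_i$ rewrites the remainder as $t^{2N}(\posz_i)^{2N}\int_0^1 \frac{(1-s)^{2N-1}}{(2N-1)!}\varphi^{(2N)}(st\posz_i)\,\d s$, which is exactly the $i$-th column of $\GLambda_{t,\posz}D_t$. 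The polynomial part matches the $i$-th column of $\Fdn\Htz$ by direct inspection, since the $i$-th column of $\Htz$ is $\left(\frac{(t\posz_i)^k}{k!}\right)_{0\leq k\leq 2N-1}^T$.

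For the last $N$ columns, I would apply the same argument to $\varphi'$, expanded to order $2N-2$, using $(\varphi')^{(2N-1)} = \varphi^{(2N)}$. After the same substitution, the remainder becomes $t^{2N-1}(\posz_i)^{2N-1}\int_0^1 \frac{(1-s)^{2N-2}}{(2N-2)!}\varphi^{(2N)}(st\posz_i)\,\d s$, matching the $(N+i)$-th column of $\GLambda_{t,\posz}D_t$. The polynomial part matches the $(N+i)$-th column of $\Fdn\Htz$, whose first entry is $0$ (since the Taylor series of $\varphi'$ starts at $\phiD{1}$) and whose subsequent entries carry the shifted factorials $\frac{(t\posz_i)^{k-1}}{(k-1)!}$ read off from $\Htz$.

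This is essentially a bookkeeping exercise and I do not anticipate a serious obstacle. The only points requiring care are (i) aligning the shifted indexing in the second block of $\Htz$ with the fact that one differentiation shifts the Taylor expansion by one order, and (ii) correctly tracking powers of $t$ through the change of variable so that they factor precisely as the diagonal matrix $D_t$ rather than getting absorbed into $\GLambda_{t,\posz}$.
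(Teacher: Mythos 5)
Your proof is correct and follows the same route as the paper: Taylor's theorem with integral remainder applied to $\varphi$ (to order $2N-1$) and $\varphi'$ (to order $2N-2$) at each $\post_i$, followed by the substitution $u = s t\posz_i$ and a check that the polynomial parts assemble into $\Fdn\Htz$ and the remainders into $\GLambda_{t,\posz}D_t$. The paper simply records the two Taylor expansions in the scaled form directly, so your writeup only differs by making the change of variable and the column-wise bookkeeping explicit.
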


\begin{proof}
This expansion is nothing but the Taylor expansions for $\varphi$ and $\varphi'$:
\begin{align}
  \phi( \post_i) &= \phiD{0} +(\post_i)\phiD{1}+\ldots +\frac{(\post_i)^{2N-1}}{(2N-1)!}\phiD{2N-1}\nonumber\\
  &\qquad \qquad \qquad  +(\post_i)^{2N}\int_0^1\varphi^{(2N)}(s\post_i)\frac{(1-s)^{2N-1}}{(2N-1)!} \d s ,\label{eq-taylor1}\\
  \phi'(\post_i)&= \phiD{1} + (\post_i)\phiD{2}+\ldots +\frac{(\post_i)^{2N-2}}{(2N-2)!}\phiD{2N-1}\nonumber\\
  &\qquad \qquad \qquad  +(\post_i)^{2N-1}\int_0^1\varphi^{(2N)}(s\post_i)\frac{(1-s)^{2N-2}}{(2N-2)!} \d s.\label{eq-taylor2}
\end{align}
\end{proof}

The above expansion yields a useful factorization for $\Gatz$,
\begin{align*}
   \Gatz = \GF_{\post}\Htz  
   \qwhereq \GF_{\post} &\eqdef  \Fdn + \GLambda_{t,\posz} D_t \Htz^{-1}.
\end{align*}
The rest of this section is devoted to the consequence of that factorization for the asymptotic behavior of $\Gatz$ and its related quantities.
The main ingredient of this analysis is the factorization of $\Htz$ as
  \begin{align}\label{facto-Htz}
    \Htz=\diag(1,t,\ldots, t^{2N-1})\Hz\diag\left(1,\ldots, 1,\frac{1}{t},\ldots, \frac{1}{t}\right).
  \end{align}

Let us emphasize that our Taylor expansions are uniform in $\posz\in \Bo$. More precisely, given two quantities $f(\posz,t)$, $g(\posz,t)$, we say that 
$f(\posz,t)=g(\posz,t) + \bigO{t^k}$ if 
 \eq{\limsup_{t\to 0^+} \sup_{z\in \Bo} \abs{\frac{f(\posz,t) - g(\posz,t)}{t^k}}<+\infty.
}

\begin{lem}\label{lem-dl-ftz}
The following expansion holds for $t\to 0^+$,
\eql{\label{eq-taylor-GF}\GF_{\post}= \Fdn + \bigO{t}.}
  Moreover, if $\injdn$ holds then $\GF_{\post}$ and $\Gatz$ have full column rank for $t>0$ small enough and 
\begin{align}
\label{eq-taylor-invgram}  (\GF_{\post}^*\GF_{\post})^{-1} &=(\Fdn^*\Fdn)^{-1}+\bigO{t}\\
    \Gatz^{+,*}\begin{pmatrix}\bun_N\\ 0\end{pmatrix} &= \Fdn^{+,*}\dirac{2N} + \bigO{t}.
\end{align}
  
\end{lem}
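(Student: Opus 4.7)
The plan is to exploit the factorization $\Gatz = \GF_{\post}\Htz$ from Proposition~\ref{prop-facto}, combined with the explicit scaling decomposition~\eqref{facto-Htz} of $\Htz$, to reduce everything to uniform estimates of well-controlled quantities on $\Bo$.

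First I would establish \eqref{eq-taylor-GF}. Starting from $\GF_{\post} - \Fdn = \GLambda_{t,\posz} D_t \Htz^{-1}$, I would invert \eqref{facto-Htz} to get
\[
\Htz^{-1} = \diag(1,\ldots,1,t,\ldots,t)\,\Hz^{-1}\,\diag(1,1/t,\ldots,1/t^{2N-1}),
\]
and then multiply out to obtain $D_t \Htz^{-1} = \Hz^{-1}\diag(t^{2N},t^{2N-1},\ldots,t)$. In particular every entry of $D_t \Htz^{-1}$ is $O(t)$, with constants that depend only on $\Hz^{-1}$. Since $\posz \in \Bo$ has pairwise distinct coordinates separated by at least $\Delta_0/2$ and $\normLiVec{\posz}$ is uniformly bounded, the entries of $\Hz^{-1}$ are uniformly bounded in $\posz \in \Bo$. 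On the other hand, $\GLambda_{t,\posz}$ is a bounded operator uniformly in $(t,\posz)$: the integrand is controlled by $\sup_{|y|\le t\RW}\normObs{\phi^{(2N)}(y)}$, which is finite under $\phi \in \kernel{2N}$ (on $\RR$ by Definition~\ref{def-kernel}, on $\TT$ by compactness). Combining gives $\GF_{\post}-\Fdn=O(t)$ uniformly in $\posz \in \Bo$.

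Next, assuming $\injn{2N-1}$, the Gram matrix $\Fdn^*\Fdn$ is symmetric positive definite, hence invertible. From $\GF_{\post}^*\GF_{\post} = \Fdn^*\Fdn + O(t)$ and the continuity of inversion on $\mathrm{GL}$, for $t$ small enough $\GF_{\post}^*\GF_{\post}$ stays in a neighborhood of $\Fdn^*\Fdn$ on which the Neumann/Banach series expansion applies, yielding
\[
(\GF_{\post}^*\GF_{\post})^{-1} = (\Fdn^*\Fdn)^{-1} + O(t),
\]
which is \eqref{eq-taylor-invgram}. For $\Gatz$, I would note that $\posz \in \Bo$ implies pairwise distinct $\posz_i$ (as the separation is at least $\Delta_0/2$), so for $t>0$ the points $\post_i$ are distinct and $\Htz$ is invertible as a Hermite interpolation matrix. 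Thus $\Gatz = \GF_{\post}\Htz$ has full column rank whenever $\GF_{\post}$ does.

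Finally, for the last identity, I would use $\Gatz^{+,*} = \GF_{\post}^{+,*}\Htz^{*,-1}$, which follows from $\Gatz=\GF_{\post}\Htz$ with $\Htz$ invertible. The key observation is that $\Htz^{*,-1}\begin{pmatrix}\bun_N\\0\end{pmatrix}$ is, by the interpretation of $\Htz^{*,-1}$ as Hermite interpolation in the basis $(1,X,\ldots,X^{2N-1}/(2N-1)!)$, the coefficient vector of the unique polynomial $P$ of degree at most $2N-1$ satisfying $P(\post_i)=1$ and $P'(\post_i)=0$ for all $i$; this polynomial is the constant $1$, whose coefficient vector is exactly $\dirac{2N}$. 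Hence $\Gatz^{+,*}\begin{pmatrix}\bun_N\\0\end{pmatrix} = \GF_{\post}^{+,*}\dirac{2N}$, and expanding $\GF_{\post}^{+,*} = \GF_{\post}(\GF_{\post}^*\GF_{\post})^{-1}$ via \eqref{eq-taylor-GF} and \eqref{eq-taylor-invgram} gives $\Fdn^{+,*}\dirac{2N} + O(t)$.

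I do not anticipate a serious obstacle: the only mildly delicate point is the bookkeeping of the $t$ powers in $D_t\Htz^{-1}$ to certify the precise $O(t)$ rate (rather than merely $o(1)$), and the uniformity of all remainders in $\posz \in \Bo$, which relies on the uniform bound on $\Hz^{-1}$ provided by the separation estimate on $\Bo$ and on the regularity $\phi \in \kernel{2N}$.
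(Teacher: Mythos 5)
Your proof is correct and takes essentially the same route as the paper: factor $\Gatz=\GF_{\post}\Htz$, exploit the scaling $D_t\Htz^{-1}=\Hz^{-1}\diag(t^{2N},\ldots,t)$ with uniform bounds on $\Hz^{-1}$ and $\GLambda_{t,\posz}$ over $\Bo$, and push the $O(t)$ estimate through the Gram inverse and pseudo-inverse. The only cosmetic differences are that the paper uses a mean-value-inequality estimate where you invoke a Neumann series for the inverse, and it reads $\Htz^{*,-1}\binom{\bun_N}{0}=\dirac{2N}$ directly off the matrix while you give the equivalent Hermite-interpolation interpretation.
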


\begin{proof}
We begin by noticing that 
\begin{align*}
  \GLambda_{t,\posz}D_t\Htz^{-1}&= t^{2N}\GLambda_{t,\posz}\Hz^{-1}\diag(1,1/t,\ldots,1/t^{2N-1})\\
  &=\GLambda_{t,\posz}\Hz^{-1}\diag(t^{2N},t^{2N-1},\ldots,t).
\end{align*}
The function $z\mapsto \Hz^{-1}$ is $\Cder{\infty}$ and uniformly bounded on $\Bo$, and $(\posz,t)\mapsto \GLambda_{t,\posz}$ is $\Cder{0}$ on the compact set $\Bo\times [0,1]$ hence uniformly bounded too. As a result, we get~\eqref{eq-taylor-GF}.

Assume now that $\injdn$ holds. Since $\Fdn^*\Fdn$ is invertible, there is some $R>0$ such that for every $A$ in the closed ball $\bball{\Fdn^*\Fdn}{R}\subset \RR^{(2N-1)\times (2N-1)}$, $A$ is invertible. By the mean value inequality
  \begin{align*}
    & \norm{(\Fdn^*\Fdn)^{-1}-A^{-1}} \leq \!\!\! \sup_{B\in \bball{\Fdn^*\Fdn}{R}} \norm{B^{-1}(A-\Fdn^*\Fdn)B^{-1}}\\
    & \qquad\qquad \leq \left(\sup_{B\in \bball{\Fdn^*\Fdn}{R}}\norm{B^{-1}}\right)^2 \norm{A-\Fdn^*\Fdn}.
  \end{align*}
  Applying that to $A=\GF_{\post}^*\GF_{\post}= \Fdn^*\Fdn+\bigO{t}$ (since each term in the product is uniformly bounded), we get~\eqref{eq-taylor-invgram}.

Now, for the last point, we infer from $\Gatz=\GF_{\post}\Htz$ and the fact that $\Htz$ is invertible that $\Gatz^{+,*}=\GF_{\post}^{+,*}\Htz^{-1,*}$.
  Hence 
   \begin{align*}
     \Gatz^{+,*}
     \begin{pmatrix}\bun_N\\ 0\end{pmatrix}
     = \
     \GF_{\post}^{+,*} \dirac{2N}
     = 
     \GF_{\post}(\GF_{\post}^*\GF_{\post})^{-1} \dirac{2N}
   \end{align*}
   where $\dirac{2N}$ is defined in~\eqref{eq-dirac}.

   Each factor below being uniformly bounded in $\Bo\times [0,1]$, we get
   \begin{align*}
     \Gatz^{+,*}
     \begin{pmatrix}\bun_N\\ 0\end{pmatrix} &= \left(\Fdn +\bigO{t}\right)\left((\Fdn^*\Fdn)^{-1}+\bigO{t}\right) \dirac{2N}\\
     &= \Fdn(\Fdn^*\Fdn)^{-1}\dirac{2N}  +\bigO{t}.
   \end{align*}
\end{proof}

\paragraph{Projectors.}
In this paragraph, we shall always suppose that $\injdn$ holds.
Another important quantity in our study is the orthogonal projector $P_{(\Im \Gatz)^\perp}$ (resp. $P_{(\Im \Fdn)^\perp}$) onto $(\Im \Gatz)^\perp$ (resp. $(\Im \Fdn)^\perp$). We define 
\begin{align*}
  \GPi_{t\posz} & \eqdef P_{(\Im \Gatz)^\perp} = \Id_{\Obs}-\Gatz(\Gatz^*\Gatz)^{-1}\Gatz^*,\\
  \GPin &\eqdef P_{(\Im \Fdn)^\perp}= \Id_{\Obs}-\Fdn(\Fdn^*\Fdn)^{-1}\Fdn^*.
\end{align*}
Observing that $P_{(\Im \Gatz)^\perp}=P_{(\Im \GF_{\post})^\perp}$, we immediately obtain from the previous Lemma that $\GPi_{t\posz}=\GPin +\bigO{t}$.

By construction, $\GPi_{t\posz}\Phitz=\GPi_{t\posz}\Phitz'=0$, but the following proposition shows that this quantity is also small if we replace $\Phitz$ with $\Phitz''$.
\begin{lem}\label{lem-dl-gpiphiseconde}
  There exists a constant $\Lgpiphiseconde>0$ (which only depends on $\varphi$, $\ampO$ and $\poszO$) such that
  \begin{align*}
    \normObs{\GPi_{\post}\Phitz''}\leq \Lgpiphiseconde t^{2N-2}
  \end{align*}
  uniformly in $\posz\in \Bo$.
\end{lem}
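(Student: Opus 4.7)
The plan is to combine a Taylor expansion of $\phi''$ around $0$ with the approximate factorization of $\Gatz$ already established in Proposition~\ref{prop-facto}, so that the terms which $\GPi_{\post}$ does not annihilate can be read off from the known asymptotics.

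First, I write the Taylor expansion of order $2N-2$ of $\phi''$ at each $\post_i$:
\[
\phi''(\post_i) = \sum_{k=2}^{2N-1} \frac{(\post_i)^{k-2}}{(k-2)!} \phiD{k} + (\post_i)^{2N-2} \int_0^1 \phi^{(2N)}(s\post_i) \frac{(1-s)^{2N-3}}{(2N-3)!} \, ds,
\]
which assembles into the matrix identity $\Phitz'' = \Fdn K_{t,\posz} + R_{t,\posz}$, where $K_{t,\posz}\in\RR^{2N\times N}$ has its first two rows equal to zero and $(K_{t,\posz})_{k,i} = \frac{(\post_i)^{k-2}}{(k-2)!}$ for $k\geq 2$, and $R_{t,\posz}$ collects the integral remainders. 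Since $\phi\in\kernel{2N}$ and $\posz\in\Bo$, each column of $R_{t,\posz}$ has norm bounded by a constant times $t^{2N-2}$ uniformly in $\posz$.

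Next, from Proposition~\ref{prop-facto} I have $\Fdn \Htz = \Gatz - \GLambda_{t,\posz} D_t$, hence (using invertibility of $\Htz$ for $t>0$) $\Fdn = \Gatz \Htz^{-1} - \GLambda_{t,\posz} D_t \Htz^{-1}$. Plugging this into the expression above and applying $\GPi_{\post}$, which annihilates $\Im \Gatz$, gives
\[
\GPi_{\post}\Phitz'' = -\GPi_{\post}\, \GLambda_{t,\posz} D_t \Htz^{-1} K_{t,\posz} + \GPi_{\post} R_{t,\posz}.
\]
Since $\norm{\GPi_{\post}}\leq 1$, it is enough to bound the two operator norms $\norm{\GLambda_{t,\posz} D_t \Htz^{-1} K_{t,\posz}}$ and $\norm{R_{t,\posz}}$ by $O(t^{2N-2})$.

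The remainder term $R_{t,\posz}$ is already $O(t^{2N-2})$ by the explicit Taylor estimate above, uniformly in $\posz\in\Bo$. For the first term, the key computation uses the factorization~\eqref{facto-Htz} of $\Htz$: it gives $D_t \Htz^{-1} = t^{2N}\, \Hz^{-1} \diag(1,1/t,\ldots,1/t^{2N-1})$. Multiplying on the right by $K_{t,\posz}$, the factor $t^{k-2}$ in row $k$ of $K_{t,\posz}$ exactly cancels $t^{-k}$ and leaves an overall $t^{-2}$, so that $\diag(1,1/t,\ldots,1/t^{2N-1}) K_{t,\posz} = t^{-2} \tilde K_{\posz}$ with $\tilde K_{\posz}$ bounded on $\Bo$ and independent of $t$. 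Hence $D_t \Htz^{-1} K_{t,\posz} = t^{2N-2} \Hz^{-1} \tilde K_{\posz}$, and since $\posz\mapsto \Hz^{-1}$ and $(\posz,t)\mapsto\GLambda_{t,\posz}$ are uniformly bounded on $\Bo$ and $\Bo\times[0,1]$ respectively, the product is $O(t^{2N-2})$ as required. The only real obstacle is getting the exponent counting right in this last step; once the cancellation between $D_t$, $\Htz^{-1}$, and the powers of $t$ in $K_{t,\posz}$ is identified, the constant $\Lgpiphiseconde$ is obtained from the uniform bounds on $\GLambda_{t,\posz}$, $\Hz^{-1}$, $\tilde K_\posz$, and the Taylor remainder.
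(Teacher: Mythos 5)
Your proposal is correct and follows essentially the same route as the paper: the decomposition $\Phitz'' = \Fdn K_{t,\posz} + R_{t,\posz}$ is exactly the paper's $\Phitz'' = \Fdn\tilde{\GV}_{\post} + t^{2N-2}\tilde{\GLambda}_{t,\posz}$, your substitution $\Fdn = \Gatz\Htz^{-1} - \GLambda_{t,\posz}D_t\Htz^{-1}$ followed by annihilation of $\Im\Gatz$ by $\GPi_{\post}$ is the same use of $\GPi_{\post}\GF_{\post}=0$, and the power counting via the factorization of $\Htz$ is identical (with your cancellation $D_t\Htz^{-1}K_{t,\posz} = t^{2N-2}\Hz^{-1}\tilde K_\posz$ matching the paper's $D_t\Htz^{-1}\tilde{\GV}_{\post}=t^{2N-2}\Hz^{-1}\tilde{\GV}_{\posz}$). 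No gaps; the argument is sound.
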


\begin{proof}
  Applying a Taylor expansion to $\varphi^{(2)}$, we write 
  \eq{
    \Phitz''= \Fdn \tilde{\GV}_{\post} +t^{2N-2}\tilde{\GLambda}_{t,\posz}
  }
  where
  \eql{\label{eq-prelim-vandermonde}\begin{split}
    \tilde{\GV}_{\post} & = 
    \begin{pmatrix} 0 &\ldots & 0 \\
                                0&\ldots & 0\\
                              1&\ldots & 1\\
                            \vdots & & \vdots\\
                            \frac{(\post_1)^{2N-3}}{(2N-3)!}&\ldots & \frac{(\post_N)^{2N-3}}{(2N-3)!} 
	\end{pmatrix},  \\ 
  \tilde{\GLambda}_{t,\posz} & = 
    \begin{pmatrix}
      (z_i)^{2N-2}\int_{0}^1 \varphi^{(2N)}(s\post_i)\frac{(1-s)^{2N-3}}{(2N-3)!} \d s 
	\end{pmatrix}_{1\leq i\leq N}.
  \end{split}
}
 Hence
\begin{align*}
  \GPi_{\post}\Phitz'' &= \GPi_{\post}(\GF_{\post}\tilde{\GV}_{\post} + (\Fn-\GF_{\post})\tilde{\GV}_{\post} +t^{2N-2}\tilde{\GLambda}_{t,\posz})\\
  &=\GPi_{\post} (-\GLambda_{t,\posz}D_t\Htz ^{-1}\tilde{\GV}_{\post} +t^{2N-2}\tilde{\GLambda}_{t,\posz}) \mbox{ since }\GPi_{\post}\GF_{\post}=0.
  \intertext{Using~\eqref{facto-Htz}, we see that}
  D_t\Htz^{-1}\tilde{\GV}_{\post}&=\Hz^{-1}\diag(t^{2N},t^{2N-1},\ldots,t)\tilde{\GV}_{\post}\\
  &=t^{2N-2}\Hz^{-1}\tilde{\GV}_{\posz}, \quad \mbox{ hence}\\
 \GPi_{\post}\Phitz'' &= t^{2N-2}\GPi_{\post} (-\GLambda_{t,\posz}\Hz^{-1}\tilde{\GV}_{\posz}+\tilde{\GLambda}_{t,\posz}).
\end{align*}
Since $\norm{\GPi_{\post}}\leq 1$ and the continuous function $(z,t)\mapsto -\GLambda_{t,\posz}\Hz^{-1}\tilde{\GV}_{\posz}+\tilde{\GLambda}_{t,\posz}$ is uniformly bounded on the compact set $\Bo\times [0,1]$, we obtain
\eq{ \normObs{\GPi_{\post}\Phitz''}\leq \left(\sup_{(z',t')\in\Bo\times[0,1]} \normObs{\GLambda_{t',\posz'}H_{\posz'}^{-1}\tilde{\GV}_{\posz'}+\tilde{\GLambda}_{t',\posz'}}\right)  t^{2N-2}
}
\end{proof}

We study further the projector $\GPi_{\post}$  when it is not evaluated at the same $\posz$ as $\Gatz$.

\begin{lem}\label{lem-proj-gamma}
If $\varphi\in\kernel{2N+1}$, then there is a constant $\constPi>0$ (which only depends on $\varphi$, $\ampO$ and $\poszO$) such that for all $\posz\in \Bo$, all $t\in (0,1]$,
\begin{align*}
  \normObs{\GPi_{\post}\GatzO\AmpO}\leq \constPi t^{2N}\normLiVec{\posz-\poszO}.
\end{align*} 
\end{lem}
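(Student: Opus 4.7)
The plan is to exploit the fact that $\GPi_{\post}$ annihilates both $\Im\Phi_{\post}$ and $\Im\Phi_{\post}'$ by Taylor-expanding $\phi(tz_{0,i})$ around the nearby point $\post_i=tz_i$ (rather than around $0$). Assuming, as is standard in this setting, that $\AmpO=(\ampO,0)\in\RR^{2N}$, so that $\GatzO\AmpO=\sum_{i=1}^N \ampO_i\phi(tz_{0,i})$, the second-order Taylor formula with integral remainder reads
\begin{align*}
\phi(tz_{0,i})=\phi(\post_i)+t(z_{0,i}-z_i)\phi'(\post_i)+t^2(z_{0,i}-z_i)^2\int_0^1(1-s)\phi''\bigl(t\tilde z_i(s)\bigr)\d s,
\end{align*}
with $\tilde z_i(s)\eqdef(1-s)z_i+sz_{0,i}\in\Bo$ by convexity of $\Bo$. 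Summing with the weights $\ampO_i$ and applying $\GPi_{\post}$ kills the first two terms (they sit in $\Im\Phi_{\post}+\Im\Phi_{\post}'=\Im\Gatz$), leaving
\begin{align*}
\GPi_{\post}\GatzO\AmpO=t^2\sum_{i=1}^N \ampO_i (z_{0,i}-z_i)^2 \int_0^1(1-s)\,\GPi_{\post}\phi''\bigl(t\tilde z_i(s)\bigr)\d s.
\end{align*}

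The key remaining ingredient is a pointwise analog of Lemma~\ref{lem-dl-gpiphiseconde}: uniformly in $\posz\in\Bo$, $t\in(0,1]$ and $\xi$ ranging over a fixed bounded subset of $\RR$,
\begin{align*}
\normObs{\GPi_{\post}\phi''(t\xi)}\leq C\,t^{2N-2}.
\end{align*}
I would prove this by mimicking the argument of Lemma~\ref{lem-dl-gpiphiseconde} at a generic point: Taylor expand
\begin{align*}
\phi''(t\xi)=\sum_{k=0}^{2N-3}\frac{(t\xi)^k}{k!}\phiD{k+2}+(t\xi)^{2N-2}\int_0^1\phi^{(2N)}(st\xi)\frac{(1-s)^{2N-3}}{(2N-3)!}\d s,
\end{align*}
and apply $\GPi_{\post}$. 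The identity $\GPi_{\post}\Fdn=-\GPi_{\post}\GLambda_{t,\posz}H_\posz^{-1}\diag(t^{2N},\ldots,t)$ from the proof of Lemma~\ref{lem-dl-ftz} shows that $\GPi_{\post}$ applied to the column $\phiD{k+2}$ of $\Fdn$ has norm $O(t^{2N-k-2})$, so the $k$-th term of the expansion contributes $O\bigl((t|\xi|)^{k}\, t^{2N-k-2}\bigr)=O(t^{2N-2})$, and the integral remainder is also $O(t^{2N-2})$, the relevant uniform bound on $\phi^{(2N)}$ being guaranteed by $\phi\in\kernel{2N+1}$.

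Combining these two pieces and bounding $(z_{0,i}-z_i)^2\leq(\De_0/4)\normLiVec{\posz-\poszO}$ (since $\posz,\poszO\in\Bo$) yields $\normObs{\GPi_{\post}\GatzO\AmpO}\leq \constPi\, t^{2N}\normLiVec{\posz-\poszO}$, with $\constPi$ depending only on $\phi$, $\ampO$, $\poszO$. The main obstacle is the pointwise generalization of Lemma~\ref{lem-dl-gpiphiseconde} from the specific evaluation points $\post_1,\ldots,\post_N$ to arbitrary evaluation points $t\xi$ in the convex hull of $\Bo$; however, this is essentially a book-keeping extension of the earlier proof, since the latter only uses the column-by-column decay of $\GPi_{\post}\Fdn$ together with the uniform boundedness of $\phi$ and its derivatives on the compact set $\Bo\times[0,1]$, not the specific choice of evaluation points.
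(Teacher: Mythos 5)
Your proof is correct, and it takes a genuinely different route from the paper's. The paper argues purely algebraically from the approximate factorization $\GatzO=\Fdn\HtzO+\GLambda_{t,\poszO}D_t$: after substituting $\GF_{\post}\HtzO$ and killing that term with $\GPi_{\post}$, it regroups into $\GPi_{\post}\left(\GLambda_{t,\poszO}(\Id-\Hz^{-1}\HzO)+(\GLambda_{t,\poszO}-\GLambda_{t,\posz})\Hz^{-1}\HzO\right)D_t$, then uses $\Cder{1}$ Lipschitz bounds on $z\mapsto\GLambda_{t,z}$ and $z\mapsto\Hz^{-1}\HzO$ together with $D_t\AmpO=t^{2N}\AmpO$. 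Your approach instead Taylor-expands each $\phi(t\posz_{0,i})$ around the nearby node $t\posz_i$ (rather than around $0$), so that $\GPi_{\post}$ annihilates the zeroth- and first-order terms directly because $\phi(\post_i),\phi'(\post_i)\in\Im\Gatz$; this localized expansion buys a \emph{quadratic} factor $(z_{0,i}-z_i)^2$, which you then downgrade using the bounded diameter of $\Bo$ to match the stated linear bound (so in fact your argument yields the sharper estimate $\normObs{\GPi_{\post}\GatzO\AmpO}=O(t^{2N}\normLiVec{\posz-\poszO}^2)$). The pointwise bound $\normObs{\GPi_{\post}\phi''(t\xi)}\leq Ct^{2N-2}$ you rely on is not quite Lemma~\ref{lem-dl-gpiphiseconde} as stated, but your derivation of it from the column decay $\GPi_{\post}\phiD{k+2}=O(t^{2N-k-2})$ — which follows from $\GPi_{\post}\GF_{\post}=0$ and $D_t\Htz^{-1}=\Hz^{-1}\diag(t^{2N},\ldots,t)$ — is sound and uniform on $\Bo\times(0,1]\times\{|\xi|\leq R\}$, using only that $\phi^{(2N)}$ is bounded (guaranteed by $\phi\in\kernel{2N+1}$). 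Both proofs use the same underlying cancellations; the paper's is a matrix-identity bookkeeping argument, yours is a more geometric Taylor-localization argument and incidentally sharper.
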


\begin{proof}
Let us observe that
\begin{align*}
    \GPi_{\post}\GatzO&= \GPi_{\post}(\Fdn \HtzO+\GLambda_{t,\poszO}D_t)\\
    &=\GPi_{\post}(\GF_{\post} \HtzO+ (\Fdn-\GF_{\post})\HtzO+\GLambda_{t,\poszO}D_t)\\
    &=\GPi_{\post}(-\GLambda_{t,\posz}D_t\Htz^{-1}\HtzO+\GLambda_{t,\poszO}D_t) \mbox{ since $\GPi_{\post}\GF_{\post}=0$.}
\end{align*}
Observing that 
\eq{
	D_t\Htz^{-1}\HtzO = t^{2N}\Hz^{-1}\HzO 
	\diag\pa{1, \ldots, 1, 1/t, \ldots, 1/t}
	=\Hz^{-1}\HzO D_t,
} 
we get
\eq{
   	\GPi_{\post}\GatzO = 
	\GPi_{\post}\left(\GLambda_{t,\poszO}(\Id_{2N}-\Hz^{-1}\HzO) + 
	(\GLambda_{t,\poszO}-\GLambda_{t,\posz})\Hz^{-1}\HzO\right)D_t.
}

For $k\in \{2N-1,2N\}$, the function $(u,s,t)\mapsto u^k\varphi^{(2N)}(stu)$ is defined and $\Cder{1}$ on the compact sets (since $\phi\in\kernel{2N+1}$)
\eq{
	\foralls 1\leq i\leq N, \quad
	\left[{\poszO}_i-\frac{\De_0}{4},{\poszO}_i+\frac{\De_0}{4}\right]\times [0,1]\times [0,1], 
}
where $\De_0$ is defined in~\eqref{min-sep-dist-ini}. Thus there is a constant $C>0$ (which does not depend on $t$ nor $\posz\in\Bo$) such that 
  \begin{align*}
    \left|\int_0^1 \left( (z_i)^{k}\varphi^{(2N)}(s\post_i)- ({\poszO}_i)^{k}\varphi^{(2N)}(st{\poszO}_i)\right)\frac{(1-s)^{k-1}}{(k-1)!} \d s\right| &\leq C\abs{\posz_i-{\poszO}_i},\\
    \mbox{hence }\quad\quad \norm{\GLambda_{t,\poszO}-\GLambda_{t,\posz}}&\leq C\normLiVec{\posz-\poszO}.
  \end{align*}
As a result, since $\norm{\GPi_{\post}}\leq 1$ and $z\mapsto \Hz^{-1}\HzO$ is bounded on $\Bo$,
  \eq{\normObs{\GPi_{\post} (\GLambda_{\postO}-\GLambda_{t,\posz})\Hz^{-1}\HzO}\leq C\sup_{z'\in\Bo}\norm{H_{z'}^{-1}\HzO} \normLiVec{\posz-\poszO}.}
  As for the left term, $\GLambda_{\postO}$ is bounded uniformly in $t\in [0,1]$, and the mapping $z\mapsto  \Hz^{-1}\HzO$ is $\Cder{1}$ on $\Bo$. As a result, there is a constant $\tilde{C}>0$ such that 
  \eq{\forall z\in\Bo,\quad  \norm{\Id_N-\Hz^{-1}\HzO}\leq \tilde{C}\normLiVec{\posz-\poszO}.}

To conclude, we observe that $D_t\AmpO=t^{2N}\AmpO$, and we combine the above inequalities to obtain
\eq{\normObs{\GPi_{\post}\GatzO\AmpO}\leq \left(C\sup_{z'\in\Bo}\norm{H_{z'}^{-1}\HzO} +\tilde{C}\sup_{t\in[0,1]}\norm{\GLambda_{\postO}}\right)t^{2N}\normLiVec{\posz-\poszO}.}
\end{proof}


\paragraph{Asymptotics of the vanishing derivatives precertificate.} 

We end this section devoted to the asymptotic behavior of quantities related to $\Gatz$ by studying the second derivative of the vanishing derivatives precertificate $\etaV$ (see Definition~\ref{propdef-vanishing}, and \cite{duval-exact2013} for more details).
Theorem~\ref{if-etaw-nondegen} ensures that the second derivatives of $\etaV$ do not vanish at $\poszOi$, $1\leq i\leq N$. However, it does not provide any estimation of those second derivatives. That is the purpose of the next proposition.

In view of Section~\ref{sec-tfi}, it will be useful to study those second derivatives not only for the precertificates that are defined by interpolating the sign at  $\postO$ but more generally for the precertificates that are defined to interpolate the sign at $tz$ for any $\posz\in \Bo$.

\begin{prop}\label{asympt-etaV2}
  Assume that $\varphi\in\kernel{2N+1}$ and that $\injdn$ holds.
  Then 
  \begin{align}
    {\Phitz''}^*\Gatz^{+,*}\begin{pmatrix}\bun_N \\ 0\end{pmatrix} &= t^{2N-2}\etaW^{(2N)}(0)\dz +\bigO{t^{2N-1}},\\
    \qwhereq \dz\in \RR^N, \quad \dzi \eqdef & \frac{2}{(2N)!} \prod_{j\neq i}(\posz_i-\posz_j)^2 \ \mbox{for}\ 1\leq i\leq N.
  \end{align}
\end{prop}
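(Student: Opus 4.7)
The plan is to decompose $\etaV^{\posz}(u) = p(u) + R(u)$, where $p$ is the degree-$(2N-1)$ Taylor polynomial of $\etaV^{\posz}$ at $0$. Writing $\pV^{\posz} \eqdef \Gatz^{+,*}\begin{pmatrix}\bun_N\\0\end{pmatrix}$, the $k$-th component of ${\Phitz''}^*\pV^{\posz}$ is $\langle \phi''(\post_k),\pV^{\posz}\rangle = (\etaV^{\posz})''(\post_k) = p''(\post_k) + R''(\post_k)$, and I will expand each piece to precision $O(t^{2N-1})$. For brevity set $d_k \eqdef \frac{2}{(2N)!}\prod_{j\neq k}(\posz_k-\posz_j)^2$, i.e.\ the $k$-th component of $\dz$. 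Note that the naive bound $\pV^{\posz} = \pW + \bigO{t}$ from Lemma~\ref{lem-dl-ftz} alone only controls $(\etaV^{\posz})''(\post_k)$ to order $O(t)$, whereas the target precision is $O(t^{2N-1})$; a precise cancellation between $p$ and $R$ is essential.

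Because $\phi \in \kernel{2N+1}$, Taylor's theorem (with Lagrange remainder applied to $(\etaV^{\posz})''$) yields
\[
R''(u) = \frac{u^{2N-2}}{(2N-2)!}(\etaV^{\posz})^{(2N)}(0) + O(u^{2N-1})
\]
uniformly in $\posz\in\Bo$ and $t\in(0,1]$, and Lemma~\ref{lem-dl-ftz} gives $(\etaV^{\posz})^{(2N)}(0) = \langle \phiD{2N},\pV^{\posz}\rangle = \etaW^{(2N)}(0) + \bigO{t}$. At $u = \post_k = t\posz_k$ this reads $R''(\post_k) = t^{2N-2}\frac{\posz_k^{2N-2}}{(2N-2)!}\etaW^{(2N)}(0) + O(t^{2N-1})$.

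For $p''(\post_k)$ the main idea is Hermite interpolation. From $p = \etaV^{\posz} - R$ together with $\etaV^{\posz}(\post_i) = 1$ and $(\etaV^{\posz})'(\post_i) = 0$ one has $p(\post_i) = 1 - R(\post_i)$, $p'(\post_i) = -R'(\post_i)$, with the analogous Taylor estimates $R(\post_i) = \frac{t^{2N}\posz_i^{2N}}{(2N)!}\etaW^{(2N)}(0) + O(t^{2N+1})$ and $R'(\post_i) = \frac{t^{2N-1}\posz_i^{2N-1}}{(2N-1)!}\etaW^{(2N)}(0) + O(t^{2N})$. Since $p$ has degree $\le 2N-1$, it is the unique Hermite interpolant of these $2N$ data, and since the Hermite interpolant of the constant data $(1,0)$ is the polynomial $1$, I write $p = 1 - Q$, where $Q$ is the Hermite interpolant of $(R(\post_i),R'(\post_i))_i$. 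The leading part of $Q$ is the Hermite interpolant $\tilde Q$ of the polynomial $\tilde R(u) := \frac{\etaW^{(2N)}(0)}{(2N)!} u^{2N}$. As $\tilde R^{(2N)}$ is the constant $\etaW^{(2N)}(0)$, the Hermite interpolation error formula provides the \emph{exact} identity
\[
\tilde Q(u) = \tilde R(u) - \frac{\etaW^{(2N)}(0)}{(2N)!}\prod_{i=1}^N (u - \post_i)^2.
\]
Differentiating twice at $\post_k$ and using $\bigl[\prod_i(\cdot - \post_i)^2\bigr]''(\post_k) = 2\prod_{j\neq k}(\post_k - \post_j)^2 = t^{2N-2}(2N)!\, d_k$, I obtain $-\tilde Q''(\post_k) = t^{2N-2}\etaW^{(2N)}(0)\bigl(d_k - \frac{\posz_k^{2N-2}}{(2N-2)!}\bigr)$, whose $\frac{\posz_k^{2N-2}}{(2N-2)!}$ piece precisely cancels the corresponding piece of $R''(\post_k)$.

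Finally, $Q - \tilde Q$ is the Hermite interpolant of the small data $R(\post_i) - \tilde R(\post_i) = O(t^{2N+1})$ and $R'(\post_i) - \tilde R'(\post_i) = O(t^{2N})$. Rescaling the Hermite basis at $(\post_i)_i$ via $u \mapsto u/t$ reduces it to the unscaled basis $(h_i,\tilde h_i)$ at $(\posz_i)_i$, yielding the uniform scalings $(h_i^{(t)})''(\post_k) = h_i''(\posz_k)/t^2$ and $(\tilde h_i^{(t)})''(\post_k) = \tilde h_i''(\posz_k)/t$; hence $(Q - \tilde Q)''(\post_k) = O(t^{2N-1})$. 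Combining with $p''(\post_k) = -\tilde Q''(\post_k) + O(t^{2N-1})$ and the expression for $R''(\post_k)$, the leading $\frac{\posz_k^{2N-2}}{(2N-2)!}$ terms cancel and one obtains $(\etaV^{\posz})''(\post_k) = t^{2N-2}\etaW^{(2N)}(0)\,d_k + O(t^{2N-1})$, which is the claim. The main technical burden is the uniformity of all these $O$-estimates in $\posz \in \Bo$; this relies on the uniform pairwise lower bound $|\posz_i - \posz_j| \geq \De_0/2$ built into $\Bo$, which is exactly what is needed to keep $h_i''(\posz_k)$ and $\tilde h_i''(\posz_k)$ bounded so that the scaling argument closes.
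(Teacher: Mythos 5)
Your proof is correct, and it reaches the statement by a genuinely different route, even though the algebraic heart of the two arguments turns out to be the same Hermite interpolation identity. The paper works at the level of the operators: it writes the Taylor factorization $\Phitz'' = \Fdn\tilde{\GV}_{\post} + t^{2N-2}\tilde{\GLambda}_{t,\posz}$, uses $\GF_{\post}=\Gatz\Htz^{-1}$ to kill the $\Fdn$-range contribution (and this is exactly where the Hermite interpolant of the data $(\bun_N,0)$ --- the constant polynomial $1$ --- enters implicitly, since $\tilde{\GV}_{\post}{}^*\Htz^{-1,*}\begin{pmatrix}\bun_N\\0\end{pmatrix}=\tilde{\GV}_{\post}{}^*\dirac{2N}=0$), and then identifies $\tilde{\GV}_{\posz}{}^*\Hz^{-1,*}\hz$ with $(P''(\posz_i))_i$ for $P(X)=\frac{X^{2N}}{(2N)!}-\frac{1}{(2N)!}\prod_j(X-\posz_j)^2$. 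After the change of variable $u=tv$, this $P$ is (up to the factor $\etaW^{(2N)}(0)$) exactly your $\tilde Q$; the cancellation of the $\frac{\posz_k^{2N-2}}{(2N-2)!}$ pieces between $R''(\post_k)$ and $-\tilde Q''(\post_k)$ is precisely the cancellation between the paper's $\ez$ contributions in its second and third terms. You reach the same polynomial more transparently at the level of the scalar function $\etaV$, via the Taylor polynomial/remainder split and the Hermite error formula, with none of the Vandermonde--Hermite matrix bookkeeping: this buys conceptual clarity. What the paper's matrix formulation buys in exchange is reusability --- the uniform bounds on $\Hz^{-1}$, $\GLambda_{t,\posz}$, etc.\ on $\Bo$ are shared with the proofs of Lemmas~\ref{lem-dl-gpiphiseconde} and~\ref{lem-proj-gamma}. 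Your rescaling argument for $(Q-\tilde Q)''(\post_k)$, together with the lower bound $|\posz_i-\posz_j|\geq\De_0/2$ that $\Bo$ enforces (which keeps the unscaled Hermite basis $h_i,\tilde h_i$ uniformly bounded), is the correct analogue of the paper's uniform bound on $\Hz^{-1}$ over $\Bo$, so the uniformity in $\posz$ is properly handled and the argument closes.
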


\begin{proof}
  We proceed as in the proof of Lemma~\ref{lem-dl-gpiphiseconde} by writing $\Phitz''= \Fdn \tilde{\GV}_{\post} +t^{2N-2}\tilde{\GLambda}_{t,\posz}$ (see~\eqref{eq-prelim-vandermonde}) and  $\GF_{\post}=\Gatz\Htz^{-1}$. We obtain
\begin{align*}
  \Phitz''&=\GF_{\post}\tilde{\GV}_{\post}+t^{2N-2}\tilde{\GLambda}_{t,\posz}-t^{2N-2}\GLambda_{t,\posz}\Hz^{-1}\tilde{\GV}_{\posz}.
\end{align*}
The first term yields
\eql{\label{eq-etaV2-un}
  \tilde{\GV}_{\post}{}^* \GF_{\post}^* \Gatz^{+,*}\begin{pmatrix}\bun_N \\ 0\end{pmatrix} 
  \!=  
  \tilde{\GV}_{\post}{}^*\GF_{\post}^*\GF_{\post}(\GF_{\post}^*\GF_{\post})^{-1}\Htz^{-1,*}\begin{pmatrix}\bun_N \\ 0\end{pmatrix}
  \!= 
  \tilde{\GV}_{\post}{}^* \dirac{2N} = 0.
}
As for the second term, we take the Taylor expansion a little further (using integration by parts),
\begin{align*}
  \int_0^1 \varphi^{(2N)}(s\post_i)\frac{(1-s)^{2N-3}}{(2N-3)!} \d s 
  &=  
  \frac{\phiD{2N}}{(2N-2)!} + \\
  & \quad \post_i\int_0^1\varphi^{(2N+1)}(s\post_i)\frac{(1-s)^{2N-2}}{(2N-2)!} \d s,
\end{align*}
so as to obtain 
\eq{
  \tilde{\GLambda}_{t,\posz} = 
  \pa{\phiD{2N}, \ldots, \phiD{2N}}
  \diag(\ez) + \bigO{t},
  \ \mbox{where} \ 
  \ez \eqdef \pa{\frac{(\posz_i)^{2N-2}}{(2N-2)!}}_{1\leq i\leq N} \in \RR^N
} 
and, as usual, $\bigO{t}$ is uniform in $\posz\in\Bo$.
From Lemma~\ref{lem-dl-ftz}, we also know that $\Gatz^{+,*}\begin{pmatrix}\bun_N \\ 0\end{pmatrix}=\pW+O(t)$, hence 
\eql{\label{eq-etaV2-deux}
  \tilde{\GLambda}_{t,\posz}^*\Gatz^{+,*}\begin{pmatrix}\bun_N \\ 0\end{pmatrix} =
    \diag(\ez) \begin{pmatrix}\dotObs{\phiD{2N}}{\pW}
      \\\vdots\\\dotObs{\phiD{2N}}{\pW}
    \end{pmatrix} + \bigO{t}= \etaW^{(2N)}(0) \ez +\bigO{t}.
}

Now, we proceed with the last term. Similarly, by integration by parts,
\begin{align*}
  \GLambda_{t,\posz} &= 
  \begin{pmatrix}\phiD{2N}&\ldots&\phiD{2N}\end{pmatrix}  \diag(\hz)  + O(t).
\end{align*}
where $\hz$ is defined by
\eql{\label{eq-defn-hz}
	\hz \eqdef \left(\frac{(\posz_1)^{2N}}{(2N)!},\ldots,\frac{(\posz_N)^{2N}}{(2N)!}, \frac{(\posz_1)^{2N-1}}{(2N-1)!},\ldots, \frac{(\posz_N)^{2N-1}}{(2N-1)!}\right) \in \RR^{2N}.
}
 Hence, 
\begin{align*}
  \GLambda_{t,\posz}^*\Gatz^{+,*}\begin{pmatrix}\bun_N \\ 0\end{pmatrix}
  &=
  \diag(\hz)  \begin{pmatrix}
    \dotObs{\phiD{2N}}{\pW}\\\vdots \\ \dotObs{\phiD{2N}}{\pW}
	\end{pmatrix}+O(t)
 =  \etaW^{(2N)}(0) \hz
  +O(t).
\end{align*} 

To conclude, we study $\tilde{\GV}_{\posz}{}^*\Hz^{-1,*}$ (which is uniformly bounded on $\Bo$). In $\RR_{2N-1}[X]$ endowed with the basis $\left(1,X,\ldots, \frac{X^{2N-1}}{(2N-1)!}\right)$, $\Hz^{*}$ is the matrix of the linear map which evaluates a polynomial and its derivatives at $\{\posz_1,\ldots,\posz_N\}$. On the other hand $\tilde{\GV}_{\posz}{}^*$ represents the evaluation of the second derivative at $\{\posz_1,\ldots,\posz_N\}$. Thus, 
\eq{
  \tilde{\GV}_{\posz}{}^*\Hz^{-1,*} \hz = (P''(\posz_i))_{1\leq i\leq N}, 
}
where $P$ is the unique polynomial in $\RR_{2N-1}[X]$ which satisfies 
\eq{
	\foralls  i = 1, \ldots, N, \quad
	P(\posz_i)=\frac{(\posz_i)^{2N}}{(2N)!}
	\qandq
	P'(\posz_i)=\frac{(\posz_i)^{2N-1}}{(2N-1)!}. 
}
One may check that 
  \begin{align*}
    P(X)&=\frac{X^{2N}}{(2N)!}- \frac{1}{(2N)!}\prod_{i=1}^N (X-\posz_i)^2\\
    \qandq  P''(\posz_i)&=\frac{\posz_i^{2N-2}}{(2N-2)!}-\frac{2}{(2N)!}\prod_{j\neq i}(\posz_i-\posz_j)^2.
  \end{align*}
As a result, 
\eql{\label{eq-etaV2-trois}
  -\tilde{\GV}_{\posz}{}^*\Hz^{-1,*}\GLambda_{t,\posz}^*\Gatz^{+,*}\begin{pmatrix}\bun_N \\ 0\end{pmatrix} = \etaW^{(2N)}(0)( \dz-\ez ) +\bigO{t},
}
where $\bigO{t}$ is uniform in $\posz\in\Bo$. We obtain the claimed result by summing \eqref{eq-etaV2-un}, \eqref{eq-etaV2-deux} and~\eqref{eq-etaV2-trois}. 
\end{proof}


\section{Building a Candidate Solution}
\label{sec-tfi}

Now that the technical issues regarding the asymptotic behavior of $\Gatz$ have been settled, we are ready to tackle the study of the BLASSO.
In this section, we build a candidate solution for $\blasso$ by relying on its optimality conditions.

\subsection{First Order Optimality Conditions}
\label{sec-first-order-opt-cond}

The optimality conditions for $\blasso$ (see~\cite{duval-exact2013}) state that any measure of the form $\meast=\sum_{i=1}^N \amp_i \delta_{\post_i}$ is a solution to $\blasso$ if and only if the function defined by $\etaL\eqdef\Phi^* \pL$ with $\pL\eqdef\frac{1}{\la} \pa{ \Phi_{\postO}\ampO+w-\Phi_{\post}\amp}$ satisfies $\normLi{\etaL}\leq 1$ and $\etaL(\post_i)=\sign(\amp_i)$  for all $1\leq i\leq N$.

Observe that we must have $\etaL'(\post_i)=0$  for all $1\leq i\leq N$. Moreover, in our case, since we assume that $\ampOi>0$ we have in fact  $\etaL(\post_i)=1$.

In order to build such a function $\etaL$, let us consider the function $\feq$ defined for some fixed $t>0$ on $\pa{\RR^N}^2\times\RR\times \Obs$ by
\begin{align}
  \feq(\apos,\param) \eqdef \Gatz^*\pa{ \Phitz\amp-\PhitzO\ampO-w}+\la \begin{pmatrix}\bun_N \\ 0\end{pmatrix} \label{eq-implicitfunc}\\
    \qwhereq \apos=(\amp,\posz) \qandq \param=(\la,w).
  \end{align}
Now, let us write $\aposO\eqdef(\ampO,\poszO)$.
Notice that $\meast$ is a solution to $\blasso$ if and only if $\feq(\apos,\param)=0$ and $\normLi{\etaL}\leq 1$. 
Our strategy is therefore to construct solutions of $\feq(\apos,\param)=0$ and to prove that $\normLi{\etaL}\leq1$ provided $(\la,w)$ and $\etaW$ satisfy certain properties. More precisely we start by parametrizing the solutions of $\feq(\apos,\param)=0$, in a neighborhood of $(\aposO,0)$, using the Implicit Function Theorem. 

The following Lemma~\ref{reg-f} (whose proof is omitted and corresponds to simple computations) shows that $\feq$ is smooth and gives its derivatives. 

\begin{lem}\label{reg-f}
If $\phi\in\kernel{k+1}$ for some $k\in\NN^*$ then $\feq$ is of class $\Cder{k}$ and for all $(\apos,\param)\in\pa{\RR^N}^2\times(\RR\times \Obs)$
\begin{align*}
	\partial_{\apos} \feq(\apos,\param) & = \Gatz^*\Gatz \diagIta+t\begin{pmatrix}
	0 & \diag(\Phitz'^*(\Phitz \amp-\PhitzO \ampO - w)) \\
	0 & \diag(\Phitz''^*(\Phitz \amp-\PhitzO \ampO - w)) 
	\end{pmatrix}\\	
	\partial_{\param} \feq(\apos,\param) &= \left(\begin{pmatrix}\bun_N \\ 0\end{pmatrix},-\Gatz^*\right)
\end{align*}
where $\diagIta \eqdef \begin{pmatrix}\Id_N & 0 \\ 0 & t\diag(\amp)\end{pmatrix}$. 
%
\end{lem}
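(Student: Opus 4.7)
\textbf{Proof proposal for Lemma~\ref{reg-f}.} The lemma is purely computational, so the plan is to separate the regularity claim from the derivative identities and handle each by direct inspection.

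For the regularity, observe that the $i$-th column $\phi(\post_i)$ of $\Phitz$ is a $\Cder{k+1}$ map of $\posz_i$ by the chain rule (since $\phi\in\kernel{k+1}$ by hypothesis), and similarly the $i$-th column $\phi'(\post_i)$ of $\Phitz'$ is $\Cder{k}$ in $\posz_i$. Hence $\posz\mapsto \Gatz$ is $\Cder{k}$ as a map into the space of bounded operators $\RR^{2N}\to\Obs$. Since $\feq$ is a polynomial expression in $\Gatz$, $\Phitz$, $\amp$, and the fixed data $\ampO,\poszO$, together with an affine dependence on $(\la,w)$, it is $\Cder{k}$ on the whole domain $(\RR^N)^2\times(\RR\times\Obs)$.

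For the derivative formulas, $\partial_\param \feq$ follows immediately from the affine dependence in $(\la,w)$, yielding $\partial_\la\feq=\begin{pmatrix}\bun_N\\0\end{pmatrix}$ and $\partial_w\feq=-\Gatz^*$. For $\partial_\apos \feq$, I would work componentwise using the decomposition
\begin{align*}
\feq_i(\apos,\param) &= \dotObs{\phi(\post_i)}{\Phitz\amp-\PhitzO\ampO-w}+\la,&& 1\leq i\leq N,\\
\feq_{N+i}(\apos,\param) &= \dotObs{\phi'(\post_i)}{\Phitz\amp-\PhitzO\ampO-w},&& 1\leq i\leq N.
\end{align*}
The $\amp_j$-derivatives are linear and reconstruct the first $N$ columns $\Gatz^*\Phitz$ of $\Gatz^*\Gatz\,\diagIta$. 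For the $\posz_j$-derivatives, the chain rule applied to the outer factors $\phi(\post_i)$ or $\phi'(\post_i)$ produces a factor $t\delta_{ij}$, while differentiating the inner factor $\Phitz\amp$ contributes $t\amp_j\phi'(\post_j)$. The off-diagonal ($i\neq j$) contributions reconstitute the last $N$ columns of $\Gatz^*\Gatz\,\diagIta$, the $t\diag(\amp)$ block of $\diagIta$ absorbing both the chain-rule factor $t$ and the coefficient $\amp_j$. The diagonal ($i=j$) contributions, which have no counterpart in $\Gatz^*\Gatz\,\diagIta$ since they come from differentiating an outer factor rather than a factor paired with $\amp$, produce exactly the two correction matrices $t\diag(\Phitz'^*(\Phitz\amp-\PhitzO\ampO-w))$ and $t\diag(\Phitz''^*(\Phitz\amp-\PhitzO\ampO-w))$ in the top-right and bottom-right blocks.

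There is no real obstacle; the only thing requiring care is the bookkeeping distinguishing the $i=j$ and $i\neq j$ contributions in the $\posz$-derivative, which is precisely what accounts for the two-term split in the stated formula. Since the lemma is stated without proof in the paper, I would at most give the componentwise identities above and leave the assembly into block-matrix form to the reader.
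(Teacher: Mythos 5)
Your proof is correct, and since the paper explicitly omits this proof as ``simple computations,'' your componentwise differentiation—distinguishing the chain-rule contribution from the outer factor $\phi(\post_i)$ or $\phi'(\post_i)$ (which gives the diagonal correction block) from the contribution of the inner factor $\Phitz\amp$ (which reconstructs the columns of $\Gatz^*\Gatz\diagIta$)—is exactly that computation. One small imprecision in wording: the inner-factor contributions are not only ``off-diagonal,'' they also populate the diagonal of the last $N$ columns of $\Gatz^*\Gatz\diagIta$; the cleaner dichotomy, which you in fact use, is inner-factor versus outer-factor, the latter occurring only at $i=j$ and producing the two $\diag$ correction terms.
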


\subsection{Implicit Function Theorem}
\label{sec-tfi-tfi}

Suppose that $\injdn$ holds and $\phi\in\kernel{2N+1}$. By the results of Section~\ref{sec-prelim}, there exists $0<t_0<1$ such that for $0<t<t_0$ and all $\posz\in \Bo$, $\Gatz^*\Gatz$ is invertible. In the following we shall consider a fixed value of such $t_0$  provided by Lemma~\ref{maj-df} below which also ensures additional properties.

Now, let $t\in (0,t_0)$ be fixed. By Lemma~\ref{reg-f}, $\feq$ is $\Cder{2N}$, $\partial_{\apos} \feq(\aposO,0)=\GatzO^*\GatzO \diagItaO$ is invertible and $\feq(\aposO,0)=0$. Hence by the Implicit Function Theorem, there exists $\Vt$ a neighborhood of $0$ in $\RR\times \Obs$, $\Ut$ a neighborhood of $\aposO$ in $\pa{\RR^N}^2$ and $\fimp:\Vt\to\Ut$ a $\Cder{2N}$ function such that
\begin{align*}
	\forall (\apos,\param)\in \Ut \times \Vt, \quad \feq(\apos,\param)=0 
	\quad\Longleftrightarrow\quad 
	\apos=\fimp(\param).
\end{align*}
Moreover, denoting $\d \fimp$ the differential of $\fimp$, we have
\eq{
	\foralls \param\in\Vt, \quad
	\d \fimp(\param) = -\pa{ \partial_{\apos} \feq(\fimp(\param),\param) }^{-1} \partial_{\param} \feq(\fimp(\param),\param).
}


\subsection{Extension of the Implicit Function $\fimp$}
\label{sec-extension-tfi}

Our goal is to prove that $\meast$ is the solution of the BLASSO, where $(\amp,\posz)=\apos=\fimp(\param)$. To this end, we shall exhibit additional constraints on $\param\in\Vt$, such as the scaling of the noise $\normObs{w}$ or $\la$ with respect to $t$, in order to ensure that $\normLi{\etaL}\leq 1$. 
However, the size of the neighborhood $\Vt$ provided by the Implicit Function Theorem is \textit{a priori} unknown, and it might implicitly impose even stronger conditions on $\la$ and $w$ as $t\to 0^+$.

Hence, before studying whether $\normLi{\etaL}\leq 1$, we show in this section that we may replace $V_t$ with some ball with radius of order $t^{2N-1}$ and still have a parametrization of the form $\apos=\fimp(\param)$ satisfying $\feq(\fimp(\param),\param)=0$ where $\feq$ is defined in~\eqref{eq-implicitfunc}.

Let $\Vs=\bigcup_{V\in \Vv} V$, where $\Vv$ is the collection of all open sets $V\subset \RR\times \Obs$ such that 
\begin{itemize}
  \item $0\in V$,
  \item $V$ is star-shaped with respect to $0$,
  \item $V\subset \ball{0}{\constTh t^{2N-2}}$, where $\constTh>0$ is a constant defined by Lemma~\ref{maj-df} below,
  \item there exists a $\Cder{2N}$ function $g:V\rightarrow (\RR^N)^2$ such that $g(0)=\aposO$ and $\feq(g(\param),\param)=0$ for all $\param\in V$,
  \item $g(V)\subset \ba\times \bo$.
\end{itemize}
Observe that $\Vv$ is nonempty (by the Implicit Function Theorem in Section~\ref{sec-tfi-tfi}) and stable by union, so that $\Vs\in \Vv$. Indeed, all the properties defining $\Vv$ are easy to check except possibly the last two. Let $V, \tilde{V}\in \Vv$ and $g,\tilde{g}$ be  corresponding functions. The set $\enscond{v\in V\cap \tilde{V}}{g(v)=\tilde{g}(v)}$ is nonempty (because $g(0)=\aposO=\tilde{g}(0)$) and closed in $V\cap \tilde{V}$. Moreover, it is open since for any $v\in V\cap \tilde{V}$, $v\in \ball{0}{\constTh t^{2N-2}}$ and, by Lemma~\ref{maj-df} below, the Implicit Function Theorem applies at $(g(v),v)$, yielding an open neighborhood in which $g$ and $\tilde{g}$ coincide. By connectedness of $V\cap \tilde{V}$, $g$ and $\tilde{g}$ coincide in the whole set $V\cap \tilde{V}$. As a result, the function $\gs: \Vs\rightarrow (\RR^N)^2$, defined by 
\begin{align}
  \gs(v)\eqdef g(v) \mbox{ if $v\in V$, $V\in\Vv$, and $g$ is a corresponding function},\label{eq-fimpext}
\end{align}
is well defined. Moreover, $\gs$ is $\Cder{2N}$ and $\gs(\Vs)\subset \ba\times \bo$.

Before proving that $\Vs$ contains a ball of radius of order $t^{2N-1}$ and studying the variations of $\gs$, we state Lemma~\ref{maj-df} mentioned above.

\newcommand{\OmPhi}{F_{t z}}

\begin{lem}\label{maj-df}
If $\injdn$ holds and $\phi\in\kernel{2N+1}$ then there exists $t_0\in (0,1)$ and $\constTh>0$ (which only depend on $\phi$ and $\aposO$) such that for all $t\in(0,t_0)$, if 
\begin{align}
  \apos=(\amp,\posz)\in \Ba\times\Bo \qandq \param=(\la,w)\in \ball{0}{\constTh t^{2N-2}},
\end{align}
then the matrix
\begin{align}\label{huge-matrix}
&\hugemat\eqdef\GF_{\post}^*\GF_{\post}+t
	\Htz^{*,-1} \OmPhi \diagIta^{-1}  \Htz^{-1}, \\
& \qwhereq
	\OmPhi \eqdef
	\begin{pmatrix} 
		0 & 0 \\ 
		0 & -\diag\pa{ \Phitz''^* q_{\post} }
  \end{pmatrix},\\
& \qandq
  q_{\post} \eqdef \la\Gatz^{*,+}\begin{pmatrix}\bun_N\\ 0\end{pmatrix}+\GPi_{\post} w+\GPi_{\post} \GatzO \AmpO,
\end{align}
is invertible and the norm of its inverse is less than $3\norm{(\Fdn^*\Fdn)^{-1}}$.

If, moreover, $\feq(\apos,\param)=0$, then 
\eq{
	\partial_{\apos} \feq(\apos,\param)=\Htz^* \hugemat \Htz\diagIta
}
and this is an invertible matrix.
\end{lem}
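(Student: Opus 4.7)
The plan is to treat the two assertions of the lemma separately. The identity for $\partial_{\apos}\feq$ only holds when $\feq(\apos,\param)=0$, whereas the invertibility of $\hugemat$ must be established on the whole product $\Ba\times\Bo\times\ball{0}{\constTh t^{2N-2}}$. I start with the identity, which explains the precise form taken by $\hugemat$. From Lemma~\ref{reg-f},
\begin{equation*}
\partial_{\apos} \feq(\apos,\param) = \Gatz^*\Gatz\diagIta + t\begin{pmatrix} 0 & \diag(\Phitz'^*r)\\ 0 & \diag(\Phitz''^*r)\end{pmatrix},\qquad r\eqdef \Phitz\amp - \PhitzO\ampO - w.
\end{equation*}
The assumption $\feq(\apos,\param)=0$ is equivalent to $\Gatz^*r = -\la\begin{pmatrix}\bun_N\\ 0\end{pmatrix}$, so $\Phitz^*r = -\la\bun_N$ and, crucially, $\Phitz'^*r=0$, killing the top-right block of the correction term. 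Decomposing $r = P_{\Im\Gatz}r+\GPi_{\post}r$ and using $\GPi_{\post}\Phitz=0$ together with $\Gatz^{+,*}=\Gatz(\Gatz^*\Gatz)^{-1}$, I obtain $P_{\Im\Gatz}r = -\la\Gatz^{+,*}\begin{pmatrix}\bun_N\\ 0\end{pmatrix}$ and $\GPi_{\post}r = -\GPi_{\post}(\GatzO\AmpO+w)$, so that $r = -q_{\post}$. The identity $\partial_{\apos}\feq = \Htz^*\hugemat\Htz\diagIta$ then follows from $\Gatz = \GF_{\post}\Htz$ (Proposition~\ref{prop-facto}) by a one-line algebraic check.

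For the invertibility, I split
\begin{equation*}
\hugemat - \Fdn^*\Fdn = \pa{\GF_{\post}^*\GF_{\post} - \Fdn^*\Fdn} + t\,\Htz^{*,-1}\OmPhi\diagIta^{-1}\Htz^{-1}
\end{equation*}
and control each term in operator norm. The first is $O(t)$ uniformly on $\Bo$ by Lemma~\ref{lem-dl-ftz}. For the second, I unpack the factorization~\eqref{facto-Htz}: since the upper block of $\OmPhi\diagIta^{-1}$ vanishes and the sandwich by $\diag(1,\ldots,1,t,\ldots,t)$ (coming from the right factor of~\eqref{facto-Htz}) only touches that block, the quantity reduces to
\begin{equation*}
-t^2\,\diag(1,1/t,\ldots,1/t^{2N-1})\,\Hz^{*,-1}\begin{pmatrix}0 & 0\\ 0 & \diag(\Phitz''^*q_{\post}/\amp)\end{pmatrix}\Hz^{-1}\,\diag(1,1/t,\ldots,1/t^{2N-1}),
\end{equation*}
whose operator norm is bounded by $C\,\normLiVec{\Phitz''^*q_{\post}}/(t^{4N-4}\,\min_i\amp_i)$ with $C$ uniform in $\posz\in\Bo$ (by the boundedness of $\Hz^{-1}$).

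It remains to estimate $\normLiVec{\Phitz''^*q_{\post}}$. Splitting $q_{\post}$ into its three summands, Proposition~\ref{asympt-etaV2} bounds the $\la$-term by $|\la|\,O(t^{2N-2})$, and Lemma~\ref{lem-dl-gpiphiseconde} combined with $\Phitz''^*\GPi_{\post}w = (\GPi_{\post}\Phitz'')^*w$ bounds the noise term by $\normObs{w}\,O(t^{2N-2})$. With $|\la|,\normObs{w}\leq\constTh t^{2N-2}$, both contributions are $\constTh\,O(t^{4N-4})$, which yields $O(\constTh)$ after division by $t^{4N-4}$ and is absorbable by shrinking $\constTh$. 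The delicate third term $\Phitz''^*\GPi_{\post}\GatzO\AmpO$ is the main obstacle: either Lemma~\ref{lem-dl-gpiphiseconde} or Lemma~\ref{lem-proj-gamma} applied alone yields a bound too weak to cancel the $1/t^{4N-4}$ blow-up. The trick is to exploit the self-adjointness of $\GPi_{\post}$ together with the fact that $(\GPi_{\post}\Phitz'')^*$ annihilates $\Im\Gatz$ to write
\begin{equation*}
\Phitz''^*\GPi_{\post}\GatzO\AmpO = (\GPi_{\post}\Phitz'')^*\GPi_{\post}\GatzO\AmpO,
\end{equation*}
and then combine both lemmas to get an $O(t^{4N-2})$ bound, hence an $O(t^2)$ contribution after division. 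Summing, $\norm{\hugemat-\Fdn^*\Fdn} = O(\constTh)+O(t)$; choosing $\constTh$ and $t_0$ so small that this is at most $2/(3\norm{(\Fdn^*\Fdn)^{-1}})$, a Neumann series gives the invertibility together with the claimed bound $\norm{\hugemat^{-1}}\leq 3\norm{(\Fdn^*\Fdn)^{-1}}$.
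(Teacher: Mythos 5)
Your proof follows the same route as the paper's: you use the factorization $\Gatz=\GF_{\post}\Htz$ and the diagonal structure of $\Htz$ to expose the $1/t^{4N-4}$ prefactor, estimate $\normLiVec{\Phitz''^*q_{\post}}$ term by term via Proposition~\ref{asympt-etaV2}, Lemma~\ref{lem-dl-gpiphiseconde}, and Lemma~\ref{lem-proj-gamma} (the last two combined through the idempotence of $\GPi_{\post}$, exactly as the paper does implicitly), and close by a Neumann series; the identity for $\partial_{\apos}\feq$ under $\feq=0$ and the computation $r=-q_{\post}$ are also the paper's. A cosmetic remark: the ``annihilates $\Im\Gatz$'' observation is not actually needed in the step $\Phitz''^*\GPi_{\post}\GatzO\AmpO = (\GPi_{\post}\Phitz'')^*\GPi_{\post}\GatzO\AmpO$, which uses only $\GPi_{\post}=\GPi_{\post}^*\GPi_{\post}$, and your Neumann threshold $\norm{\hugemat-\Fdn^*\Fdn}\le 2/(3\norm{(\Fdn^*\Fdn)^{-1}})$ actually delivers the stated bound more cleanly than the paper's intermediate $r\le 1/2$ step.
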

Let us precise that by $(\la,w)\in \ball{0}{\constTh t^{2N-2}}$, we mean that $\abs{\la}<\constTh t^{2N-2}$ and $\normObs{w}<\constTh t^{2N-2}$.

\begin{proof}
We consider $t_0\in (0,1)$ small enough so that for $0<t<t_0$ and all $\posz\in \Bo$, $\Gatz^*\Gatz$ is invertible and 
\begin{align}
	\label{construc-t0-GF}
  \norm{(\GF_{\post}^*\GF_{\post})^{-1}}&\leq 2 \norm{(\Fdn^*\Fdn)^{-1}} \mbox{ by Lemma~\ref{lem-dl-ftz}},\\
  \label{construc-t0-PhisGa}
  \normLiVec{\la\Phitz''^*\Gatz^{*,+}\begin{pmatrix}\bun_N\\ 0\end{pmatrix}} &\leq \frac{4(2\RW)^{2N}}{(2N)!} \abs{\la\etaW^{(2N)}(0)} t^{2N-2}  \mbox{ by Proposition~\ref{asympt-etaV2}}.
\end{align}
In the last equation, we have used the fact that $\abs{\prod_{j\neq i}(\posz_i-\posz_j)^2}\leq (2\RW)^{2N}$ (where $\RW=\sup\{\normLiVec{z};z\in\Bo\}$ is defined in Equation~\eqref{def-RW}).

We also know that for some constants $\Lgpiphiseconde,\constPi>0$  which only depend on $\phi$ and $\aposO$,
\begin{align*}  
  \normLiVec{\Phitz''^*\GPi_{\post} w} &\leq \normObs{w} \Lgpiphiseconde  t^{2N-2} \mbox{ by Lemma~\ref{lem-dl-gpiphiseconde}},\\
  \mbox{and }	\normLiVec{\Phitz''^*\GPi_{\post} \GatzO \begin{pmatrix}a_0\\0\end{pmatrix}} &\leq\Lgpiphiseconde\constPi \frac{\DeltO}{4}  t^{4N-2} \mbox{ by Lemma~\ref{lem-dl-gpiphiseconde} and Lemma~\ref{lem-proj-gamma}},
\end{align*}
for all $\posz\in\Bo=\bball{\poszO}{\frac{\DeltO}{4}}$, $t\in (0,t_0)$, where $\DeltO$ is defined in~\eqref{min-sep-dist-ini}.

Combining those inequalities with~\eqref{construc-t0-PhisGa}, we see that 
\begin{align*}
\normLiVec{\Phitz''^*q_{\post}} &= \normLiVec{\la\Phitz''^*\Gatz^{*,+}\begin{pmatrix}\bun_N\\ 0\end{pmatrix} + \Phitz''^*\GPi_{\post} w + \Phitz''^*\GPi_{\post} \GatzO \begin{pmatrix}a_0\\0\end{pmatrix} }\\
&\leq \abs{\la}\frac{4}{(2N)!} \abs{\etaW^{(2N)}(0)} t^{2N-2} +  \normObs{w}  \Lgpiphiseconde  t^{2N-2} + \Lgpiphiseconde\constPi \frac{\DeltO}{4}  t^{4N-2}
\end{align*}

On the other hand, since 
\begin{align*}
	\Htz^{*,-1} &= \diag\pa{1,\ldots, 1/t^{2N-1}} \Hz^{*,-1} \diag\pa{1,\ldots,1,t,\ldots,t},\\
\qandq \OmPhi &=
	\begin{pmatrix} 
		0 & 0 \\ 
		0 & -\diag\pa{ \Phitz''^* q_{\post} }
  \end{pmatrix},
\end{align*}
we get   
\begin{align*}
  \Htz^{*,-1} \OmPhi \diagIta^{-1}  \Htz^{-1} 
  &=t\diag\pas{1,\ldots,\frac{1}{t^{2N-1}}}\Hz^{*,-1} \OmPhi \\
  &{} \diagIa^{-1}\Hz^{-1}\diag\pas{1,\ldots,\frac{1}{t^{2N-1}}}
\end{align*}
so that
\begin{align*}
  \norm{t\Htz^{*,-1} \OmPhi \diagIta^{-1}  \Htz^{-1}}
  &\leq \frac{\normLiVec{a^{-1}}}{t^{4N-4}} \norm{\Hz^{*,-1}}\norm{\Hz^{-1}} \normLiVec{\Phitz''^*q_{\post}}\\
  &\leq C\left(\frac{\abs{\la}}{t^{2N-2}}\frac{4(2R)^{2N}}{(2N)!} \abs{\etaW^{(2N)}(0)}\right.\\ 
  &{} + \left. \frac{\normObs{w}}{t^{2N-2}}\Lgpiphiseconde + \constPi\Lgpiphiseconde \frac{\DeltO}{4}  t^{2}\right)
\end{align*}
with $C=\sup_{(\amp,\posz)\in\Ba\times\Bo}\norm{\Hz^{*,-1}}\norm{\Hz^{-1}}\normLiVec{a^{-1}}$.

Possibly choosing $t_0$ a bit smaller, we may assume that 
\eq{
	0\leq C\constPi\Lgpiphiseconde \frac{\DeltO}{4}  t_0^{2}<\frac{1}{8\norm{(\Fdn^*\Fdn)^{-1}}}.
}
As a consequence, there exists $\constTh>0$ such that for all $t\in(0,t_0)$, and all $(\amp,\posz)\in\Ba\times\Bo$,
\begin{align*}
 &\left(\max\left(\frac{\abs{\la}}{t^{2N-2}},\frac{\normObs{w}}{t^{2N-2}}\right)\leq \constTh\right) \\
&\Longrightarrow \norm{t\Htz^{*,-1} \OmPhi \diagIta^{-1}  \Htz^{-1}}\leq \frac{1}{4\norm{(\Fdn^*\Fdn)^{-1}}}. 
\end{align*}
Then, recalling~\eqref{construc-t0-GF} and setting
\begin{align*}
r&\eqdef \norm{t\pa{\GF_{\post}^*\GF_{\post}}^{-1}\Htz^{*,-1} \OmPhi \diagIta^{-1}  \Htz^{-1}}\\
&\leq 2\norm{(\Fdn^*\Fdn)^{-1}} \frac{1}{4\norm{(\Fdn^*\Fdn)^{-1}}}=\frac{1}{2},
\end{align*}
we see that the matrix~\eqref{huge-matrix} is invertible, and 
\begin{align*}
  \norm{\pa{\Id_{2N}+t\pa{\GF_{\post}^*\GF_{\post}}^{-1}\Htz^{*,-1} \OmPhi \diagIta^{-1}  \Htz^{-1}}^{-1}}
  \leq \sum_{k=0}^{+\infty}r^k
  = \frac{1}{1-r} \leq \frac{3}{2}.
\end{align*}

Eventually, using \eqref{construc-t0-GF} again, we obtain that the norm of the inverse of \eqref{huge-matrix} is less than $3\norm{(\Fdn^*\Fdn)^{-1}}$.

Now, if $\feq(\apos,\param)=0$, then $\Phitz'^*(\Phitz \amp-\PhitzO \ampO - w)=0$, so that thanks to Lemma~\ref{reg-f} we obtain
\begin{align*}
	\partial_{\apos} \feq(\apos,\param)&= 	\Gatz^*\Gatz \diagIta+t\begin{pmatrix}
	0 & 0 \\
	0 & \diag(\Phitz''^*(\Phitz \amp-\PhitzO \ampO - w)) 
	\end{pmatrix}.
\end{align*}
Moreover, 
\begin{align*}
	&\Phitz \amp-\PhitzO \ampO - w =\Gatz\Amp - \GatzO\AmpO-w \\
		& \qquad = \Gatz^{*,+}\Gatz^*\GatzO\AmpO+ \Gatz^{*,+}\Gatz^*w-\la \Gatz^{*,+}\begin{pmatrix}\bun_N \\ 0\end{pmatrix} - \GatzO\AmpO-w \\
    & \qquad= - \GPi_{\post} \GatzO \AmpO - \GPi_{\post} w - \la \Gatz^{*,+}\begin{pmatrix}\bun_N \\ 0\end{pmatrix} = -q_{\post}.
\end{align*}
As a result,
\begin{align*}
	\partial_{\apos} \feq(\apos,\param) 
	&= \Htz^*\GF_{\post}^*\GF_{\post}\Htz \diagIta + t  \begin{psmallmatrix} 
		0 & 0 \\ 0 & -\diag\pa{
			\Phitz''^*
      q_{\post}
		}
  \end{psmallmatrix}\\
		&= \Htz^* \hugemat \Htz\diagIta,
\end{align*}
and $\partial_{\apos} \feq(\apos,\param)$ is invertible.
\end{proof}

We may now study the variations of $\gs$.

\begin{cor}\label{maj-dg}
If $\injdn$ holds and $\phi\in\kernel{2N+1}$ then there exists $\constdg>0$ (which only depends on $\phi$ and $\aposO$), such that for $0<t<t_0$, for all $\param\in\Vs$ 
\eq{
	\norm{\d \gs(\param)}\leq \frac{\constdg}{t^{2N-1}}.
}
\end{cor}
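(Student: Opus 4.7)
I would start from the implicit function theorem formula
\[
\d\gs(\param) = -\pa{\partial_{\apos} \feq(\gs(\param),\param)}^{-1} \partial_{\param} \feq(\gs(\param),\param),
\]
and invoke Lemma~\ref{maj-df} (whose hypotheses hold since $\feq(\gs(\param),\param)=0$ and $\param\in\Vs\subset\ball{0}{\constTh t^{2N-2}}$) to factor
\[
\pa{\partial_{\apos} \feq}^{-1} = \diagIta^{-1}\,\Htz^{-1}\,\hugemat^{-1}\,\Htz^{-1,*},
\]
with $\norm{\hugemat^{-1}}\leq 3\norm{(\Fdn^*\Fdn)^{-1}}$. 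Combined with $\partial_{\param}\feq = \pa{\begin{pmatrix}\bun_N\\0\end{pmatrix},\,-\Gatz^*}$ from Lemma~\ref{reg-f}, the task reduces to bounding the norm of this product, and the whole game is to spot the cancellations that prevent several factors of $1/t^{2N-1}$ from accumulating.

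\textbf{Key cancellations.} Two identities should each collapse a power of $t$. First, the factorization $\Gatz = \GF_{\post}\,\Htz$ from Proposition~\ref{prop-facto} yields $\Htz^{-1,*}\Gatz^* = \GF_{\post}^*$, which is uniformly bounded on $\Bo\times(0,t_0)$ by Lemma~\ref{lem-dl-ftz}. Second, $\Htz^{*,-1}$ is by definition the Hermite-interpolation matrix at $\{t\posz_1,\ldots,t\posz_N\}$ in the basis $(1,X,\ldots,X^{2N-1}/(2N-1)!)$, and the unique polynomial of degree $\leq 2N-1$ satisfying $P(t\posz_i)=1$ and $P'(t\posz_i)=0$ for all $i$ is $P\equiv 1$, whose coordinate vector in this basis is $\dirac{2N}$. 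Hence $\Htz^{-1,*}\begin{pmatrix}\bun_N\\0\end{pmatrix}=\dirac{2N}$, and together these give
\[
\Htz^{-1,*}\,\partial_{\param}\feq = \pa{\dirac{2N},\,-\GF_{\post}^*},
\]
which is a uniformly bounded operator from $\RR\times\Obs$ to $\RR^{2N}$.

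\textbf{Assembling the bound.} It then remains to control $\diagIta^{-1}\Htz^{-1}$, the factor $\hugemat^{-1}$ being already uniformly bounded. Plugging in~\eqref{facto-Htz} gives $\Htz^{-1} = \diag(1,\ldots,1,t,\ldots,t)\,\Hz^{-1}\,\diag(1,1/t,\ldots,1/t^{2N-1})$, and the first diagonal factor exactly cancels the $1/t$ in the lower-right block of $\diagIta^{-1}$, leaving
\[
\diagIta^{-1}\Htz^{-1} = \begin{pmatrix}\Id_N & 0 \\ 0 & \diag(\amp^{-1})\end{pmatrix}\Hz^{-1}\,\diag\pa{1,\tfrac{1}{t},\ldots,\tfrac{1}{t^{2N-1}}}.
\]
Since $\amp\in\Ba$ (so $\min_i\amp_i$ is bounded below) and $\posz\in\Bo$ (so $z\mapsto\Hz^{-1}$ is uniformly bounded on the compact $\Bo$), this operator has norm at most $C/t^{2N-1}$. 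Multiplying the three uniformly bounded factors and the single $1/t^{2N-1}$ together, and absorbing everything into one constant, one obtains $\norm{\d\gs(\param)}\leq \constdg/t^{2N-1}$ with $\constdg$ depending only on $\phi$ and $\aposO$.

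\textbf{Main obstacle.} The only non-routine step is spotting the Hermite-interpolation identity $\Htz^{-1,*}\begin{pmatrix}\bun_N\\0\end{pmatrix}=\dirac{2N}$: a naive estimate that bounds $\Htz^{-1,*}$ and $\Htz^{-1}$ independently (each by $C/t^{2N-1}$ via the factorization of $\Htz$) inflates the final bound to the useless $\bigO{t^{-(4N-2)}}$. Once this identity and the pairing $\Htz^{-1,*}\Gatz^*=\GF_{\post}^*$ are in hand, the rest is simply bookkeeping of the $t$-exponents carried by~\eqref{facto-Htz} and $\diagIta$.
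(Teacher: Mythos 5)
Your proposal is correct and follows essentially the same path as the paper's proof: both use the IFT formula together with the factorization $\partial_{\apos}\feq = \Htz^*\hugemat\Htz\diagIta$ from Lemma~\ref{maj-df}, and both rely on the same two cancellations, namely $\Htz^{*,-1}\begin{pmatrix}\bun_N\\0\end{pmatrix}=\dirac{2N}$ and $\Htz^{*,-1}\Gatz^*=\GF_{\post}^*$, before applying~\eqref{facto-Htz} to bound $\diagIta^{-1}\Htz^{-1}$ by $C/t^{2N-1}$. The paper phrases the second cancellation as $\Gatz^*=\Htz^*\GF_{\post}^*$ so that $\Htz^{*,-1}$ simply cancels against $\Htz^*$, but this is the same identity.
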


\begin{proof}
  Let us recall that by construction, $\Vs\subset  \ball{0}{\constTh t^{2N-2}}$. Thus, from Lemma~\ref{maj-df}, we know that for all $\param\in\Vs$,	$\partial_{\apos} \feq(\fimp(\param),\param) = \Htz^* \hugemat \Htz\diagIta$, 
  where $(\amp,\posz)=\gs(v)$.
Since $\d \gs(\param)=-\pa{ \partial_{\apos} \feq(\gs(\param),\param) }^{-1} \partial_{\param} \feq(\gs(\param),\param)$, we get
\begin{align*}
	\d \gs (\param) &= -\diagIta^{-1}\Htz^{-1} \hugemat^{-1} \Htz^{*,-1} 
		\pa{ \begin{pmatrix}\bun_N \\ 0\end{pmatrix}, -\Htz^*\GF_{\post}^* } \\
		&= \diagIa^{-1}\Hz^{-1} \diag\pas{1,\ldots,\frac{1}{t^{2N-1}}} \hugemat^{-1}  
		\left(
			\dirac{2N}, \GF_{\post}^*
		\right),
\end{align*}
Using Lemma~\ref{maj-df} and the fact that $\diagIa^{-1},\Hz^{-1},\GF_{\post}$ are uniformly bounded on $\Ba\times\Bo$, we obtain the claimed upper bound of $\norm{\d \gs(\param)}$ for all $\param\in\Vs$.
\end{proof}

We are now in position to prove that $\Vs$ contains a ball of radius of order $t^{2N-1}$.

\begin{prop}\label{prop-tfi-ext}
  If $\injdn$ holds and $\phi\in\kernel{2N+1}$, there exists $\constR>0$ such that for all $t\in(0,t_0)$,
  \eq{
    	\ball{0}{\constR t^{2N-1}}\subset \Vs \qwithq \constR \geq \min\left(\frac{\DeltO}{4\constdg},\frac{\min_i(\amp_{0,i})}{4\constdg},\frac{\constTh}{t_0}\right).
  }
\end{prop}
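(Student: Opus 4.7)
The plan is to show that every $v\in\ball{0}{\constR t^{2N-1}}$ already lies in $\Vs$ by a continuation-along-rays argument exploiting the maximality built into the definition of $\Vs$. Fix such a $v$ and set $\tau^*\eqdef\sup\{\tau\in[0,1]\,:\,\tau v\in\Vs\}$. Because $\Vs$ is an open star-shaped neighborhood of $0$, one has $\tau^*>0$ and $[0,\tau^*)v\subset\Vs$; the goal is then to prove $\tau^*=1$ and $v\in\Vs$, which by arbitrariness of $v$ yields the claimed inclusion.

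The quantitative heart of the proof is the Lipschitz bound from Corollary~\ref{maj-dg}, which gives $\|\d\gs\|\le\constdg/t^{2N-1}$ on $\Vs$. Combined with $\|v\|<\constR t^{2N-1}$ and the choices $\constR\le\DeltO/(4\constdg)$ and $\constR\le\min_i(\amp_{0,i})/(4\constdg)$, integrating along the segment $[0,\tau v]\subset\Vs$ yields, for every $\tau<\tau^*$,
\[
\|\gs(\tau v)-\aposO\|_\infty < \constdg\constR \le \min\!\pa{\tfrac{\DeltO}{4},\tfrac{\min_i(\amp_{0,i})}{4}}.
\]
This simultaneously forces $\tau\mapsto \gs(\tau v)$ to be Cauchy as $\tau\nearrow \tau^*$, so the limit $\ell\eqdef\lim_{\tau\nearrow\tau^*}\gs(\tau v)\in(\RR^N)^2$ exists, and localizes $\ell$ strictly inside $\Ba\times\Bo$. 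The third choice $\constR\le\constTh/t_0$ combined with $t<t_0$ gives $\|v\|<\constTh t^{2N-2}$, so the whole segment sits inside $\ball{0}{\constTh t^{2N-2}}$. Thus the hypotheses of Lemma~\ref{maj-df} are met at $(\ell,\tau^*v)$, supplying the invertibility of $\partial_\apos\feq(\ell,\tau^*v)$; by continuity $\feq(\ell,\tau^*v)=0$, and the Implicit Function Theorem produces a $\Cder{2N}$ local solution $\tilde g$ on an open neighborhood $W$ of $\tau^*v$ contained in $\ball{0}{\constTh t^{2N-2}}$ and taking values in $\ba\times\bo$.

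To conclude, I glue $\gs$ and $\tilde g$ into a single $\Cder{2N}$ element of $\Vv$ that strictly contains $\Vs$ past $\tau^*v$ along the ray. A convenient choice is
\[
V'\eqdef \Vs\cup\enscond{\alpha w}{\alpha\in(0,1+\delta),\,w\in W_0},
\]
where $W_0\subset W$ is a small open ball around $\tau^*v$ (in particular bounded away from $0$) and $\delta>0$ is chosen so that $V'\subset\ball{0}{\constTh t^{2N-2}}$: it is open (the second set is the image of an open set under the submersion $(\alpha,w)\mapsto\alpha w$), contains $0$, and is star-shaped with respect to $0$. On the overlap, both $\gs$ and $\tilde g$ solve $\feq(\cdot,\param)=0$ and agree in the limit at $\tau^*v$ by the continuity argument above, so the local uniqueness of the Implicit Function Theorem combined with the open-plus-closed connectedness argument already used in the paper to show $\Vs\in\Vv$ forces them to coincide on each connected component of the overlap that touches $\tau^*v$. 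Hence $V'\in\Vv$, and by maximality $V'\subset\Vs$, making $\tau^*v$ an interior point of $\Vs$: this rules out $\tau^*<1$ and, when $\tau^*=1$, gives $v\in\Vs$. The main technical obstacle lies precisely in this gluing step, because the local IFT output $W$ is not itself star-shaped with respect to $0$, so one has to design $V'$ carefully to retain both the star-shape and a genuine neighborhood of $\tau^*v$ extending past it along the ray, while keeping everything inside $\ball{0}{\constTh t^{2N-2}}$ where Lemma~\ref{maj-df} applies.
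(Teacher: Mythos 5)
Your proof is correct and follows essentially the same route as the paper's: both use the Lipschitz bound from Corollary~\ref{maj-dg} to show that the implicit solution cannot reach $\partial(\Ba\times\Bo)$ before the parameter leaves $\ball{0}{\constTh t^{2N-2}}$, then invoke Lemma~\ref{maj-df} plus the Implicit Function Theorem and the maximality of $\Vs$ to derive a contradiction. The only organisational difference is that the paper bounds the escape radius $R_\param$ along each unit-norm ray (splitting cases according to which constraint saturates), whereas you verify directly that a fixed $v$ in the target ball satisfies $\tau^*=1$; you also spell out the gluing construction for the star-shaped extension a bit more explicitly than the paper, which simply asserts it.
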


\begin{proof}
  Let $\param\in \RR\times \Obs$ with unit norm (i.e. $\max(\la,\normObs{w})=1$), and define 
  \begin{align*}
    R_\param\eqdef\sup \enscond{r\geq 0}{rv\in \Vs}.
  \end{align*}
  Clearly $0<R_\param\leq \constTh t^{2N-2}$. Assume that $R_\param< \constTh t^{2N-2}$.
  Then by Corollary~\ref{maj-dg}, $\gs$ is uniformly continuous on $\Vs$, so that the value of $\gs(R_\param\param)$ can be defined as a limit, and $\feq(\gs(R_\param\param),R_\param\param)=0$.

  By contradiction, if $\gs(R_\param\param)\in \ba\times\bo$, then by Lemma~\ref{maj-df}, we may apply the Implicit Function Theorem to obtain a neighborhood of $(\gs(R_\param\param),R_\param)$ in which $\gs$ may be extended. This enables us to construct an open set $V\in \Vv$ (in particular we may ensure that $V$ is star-shaped with respect to $0$) such that $\Vs\subsetneq V$, which contradicts the maximality of $\Vs$.

Hence, $\gs(R_\param\param)\in \partial(\ba\times\bo)=\left(\partial(\ba)\times\Bo\right)\cup \left(\Ba\times\partial(\bo)\right)$. Assume for instance that $\gs(R_\param\param)\in \Ba\times\partial(\bo)$ (the other case being similar). Then, for $(\amp,\posz)=\gs(R_\param\param)$,
\begin{align*}
  \frac{\DeltO}{4}=\normLiVec{\posz-\poszO}\leq \int_0^1 \normLiVec{\d \gs(sR_\param\param)\cdot R_\param\param}\d s\leq \frac{\constdg}{t^{2N-1}}R_\param,
\end{align*}
which yields $R_\param\geq \frac{\DeltO}{4\constdg}t^{2N-1}$. Similarly, if $\gs(R_\param\param)\in\partial(\ba)\times\Bo$, we may prove that $R_\param\geq \frac{\min_i(\amp_{0,i})}{4\constdg}t^{2N-1}$.

Eventually, we have proved that for all $\param\in \RR\times \Obs$ with unit norm, 
\eq{
	R_\param\geq \min\left(\frac{\DeltO}{4\constdg}t^{2N-1},\frac{\min_i(\amp_{0,i})}{4\constdg}t^{2N-1},\constTh t^{2N-2}\right), 
}
and the claimed result follows.
\end{proof}

\subsection{Continuity of $\gs$ at $0$}

Before moving to Section~\ref{sec-etaL} and to the proof of $\normLi{\etaL}\leq1$ (which ensures that $\meast$ is a solution to the BLASSO), we give a first order expansion of our candidate solution $\apos=\gs(\param)$ for all $v\in B(0,\constR t^{2N-1})$.

\begin{prop}\label{first-order-dev-sol}
If $\injdn$ holds and $\phi\in\kernel{2N+1}$ then for all $t\in(0,t_0)$, $\param\in B(0,\constR t^{2N-1})$,
\begin{align}\label{lip-gt}
	\norm{\gs(\param)-\gs(0)}\leq M\pa{\frac{|\la|}{t^{2N-1}}+\frac{\normObs{w}}{t^{2N-1}}}.
\end{align}
\end{prop}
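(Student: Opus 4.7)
The plan is to deduce the bound directly from the derivative estimate established in Corollary~\ref{maj-dg} by a mean value argument along the segment joining $0$ to $\param$.

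First, I would invoke Proposition~\ref{prop-tfi-ext} to guarantee $\ball{0}{\constR t^{2N-1}}\subset \Vs$ for every $t\in (0,t_0)$. By construction of $\Vs$ as a union of open sets that are star-shaped with respect to $0$, the set $\Vs$ is itself star-shaped with respect to $0$; in particular, for any $\param\in \ball{0}{\constR t^{2N-1}}$, the whole segment $\{s\param : s\in[0,1]\}$ lies in $\Vs$, and $\gs$ is $\Cder{2N}$ (hence $\Cder{1}$) along this segment by the construction of $\gs$ in~\eqref{eq-fimpext}.

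Second, I would apply the fundamental theorem of calculus along that segment:
\begin{align*}
\gs(\param)-\gs(0) = \int_0^1 \d\gs(s\param)\cdot \param \, \d s,
\end{align*}
and take norms, using Corollary~\ref{maj-dg} which gives $\norm{\d\gs(s\param)}\leq \constdg/t^{2N-1}$ uniformly in $s\in[0,1]$:
\begin{align*}
\norm{\gs(\param)-\gs(0)} \leq \frac{\constdg}{t^{2N-1}}\,\norm{\param}.
\end{align*}
Since the norm on $\RR\times\Obs$ used throughout Section~\ref{sec-tfi} is $\max(|\la|,\normObs{w})$, the elementary bound $\max(|\la|,\normObs{w})\leq |\la|+\normObs{w}$ yields the claimed inequality with $M\eqdef \constdg$.

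There is essentially no obstacle: the two non-trivial ingredients (the derivative bound on $\gs$, and the fact that the ball of radius $\constR t^{2N-1}$ lies inside $\Vs$) have already been established in Corollary~\ref{maj-dg} and Proposition~\ref{prop-tfi-ext}. The only point that requires a brief verification is that $\Vs$ is star-shaped with respect to $0$, which follows immediately from the fact that each $V\in\Vv$ is star-shaped with respect to $0$ and $\Vs$ is the union of such sets.
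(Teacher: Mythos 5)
Your proposal matches the paper's proof: integrate $\d\gs$ along the segment from $0$ to $\param$ (which lies in $\Vs$ by Proposition~\ref{prop-tfi-ext} and the star-shapedness of $\Vs$) and apply the uniform bound from Corollary~\ref{maj-dg}. The paper states this more tersely, but the argument and the constant $M=\constdg$ are the same.
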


\begin{proof}
To show Equation~\eqref{lip-gt}, it suffices to write
\eq{
	\gs(\param)=\gs(0)+\int_0^1 \d\gs(s\param)\cdot \param \d s,
}
and use Corollary~\ref{maj-dg} to conclude.

\end{proof}


\section{Convergence of $\etaL$ to $\etaW$}
\label{sec-etaL}

It remains to prove that $\meast$ where $(\amp,\posz)=\gs(\param)$ is indeed a solution to $\blasso$. To show this statement, we prove that $\etaL$ converges towards $\etaW$ when $(t,\la,w)\to 0$ in a well chosen domain. This section is devoted to this result. The proof of our main contribution, Theorem~\ref{thm-main}, which can be found in Section~\ref{sec-main-contrib}, uses this convergence result and the assumption that $\etaW$ is $(2N-1)$-non-degenerate to conclude.




\begin{prop}\label{etaL-nondegen}
  Assume that $\varphi\in\kernel{2N+1}$ and that $\injdn$ holds, and let $\constW>0$ be the constant defined in Theorem~\ref{if-etaw-nondegen}, $\gs$, $t_0>0$ and $\constR>0$ be the function and constants defined in Section~\ref{sec-tfi}.

Then there exist constants $t_1\in (0,t_0)$ and $C>0$ (which depend only on $\phi$ and $\aposO$) such that for all $t\in\pa{0,t_1}$ and for all $(\la,w)\in \ball{0}{\constR t^{2N-1}}$ with $\normObs{\frac{w}{\la}}\leq C$, the following inequalities hold
\eq{
\forall \ell\in \{0,\ldots, 2N\},\quad  \normLi{\etaL^{(\ell)}-\etaW^{(\ell)}}\leq \constW,
}
with $\etaL=\Phi^*\left(\frac{1}{\la} \pa{ \Phi_{\postO}\ampO+w-\Phi_{\post}\amp}\right)$ and $(\amp,\posz)=\gs(\la,w)$.
\end{prop}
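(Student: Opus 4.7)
The plan is to derive an explicit formula for the dual vector $p_\la$ from the first-order optimality equation that $(\amp,\posz) = \gs(\la,w)$ satisfies by construction, and then to split $\etaL - \etaW$ into a ``certificate'' piece (handled by Proposition~\ref{prop-cvetav}) and a ``noise'' piece controlled by the smallness of $\normObs{w}/\la$ and $t$.

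First, since $\feq(\gs(\la,w),(\la,w))=0$, the computation carried out inside the proof of Lemma~\ref{maj-df} (multiplying the optimality equation by $\Gatz^{*,+}$ and using $I - \Gatz^{*,+}\Gatz^* = \GPi_{\post}$) yields
\begin{align*}
\la\, p_\la = \la\,\Gatz^{*,+}\begin{pmatrix}\bun_N\\0\end{pmatrix} + \GPi_{\post}w + \GPi_{\post}\GatzO\begin{pmatrix}\ampO\\0\end{pmatrix}.
\end{align*}
Recognizing $\pV = \Gatz^{*,+}\begin{pmatrix}\bun_N\\0\end{pmatrix}$ from Proposition~\ref{propdef-vanishing}, this gives the decomposition
\begin{align*}
p_\la - \pV = \frac{1}{\la}\GPi_{\post}w + \frac{1}{\la}\GPi_{\post}\GatzO\begin{pmatrix}\ampO\\0\end{pmatrix}.
\end{align*}

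Next, I split $\etaL^{(\ell)} - \etaW^{(\ell)} = \bigl(\etaL^{(\ell)} - \etaV^{(\ell)}\bigr) + \bigl(\etaV^{(\ell)} - \etaW^{(\ell)}\bigr)$. The second piece is handled by Proposition~\ref{prop-cvetav} applied with $K=2N$ (legitimate since $\phi\in\kernel{2N+1}$): it tends to $0$ uniformly in $\ell\in\{0,\ldots,2N\}$ as $t \to 0^+$. For the first piece, I use $\etaL^{(\ell)}(\pos) - \etaV^{(\ell)}(\pos) = \dotObs{\phi^{(\ell)}(\pos)}{p_\la - \pV}$ together with the uniform bound $K_\phi \eqdef \max_{0\leq \ell\leq 2N}\sup_{\pos\in\Pos}\normObs{\phi^{(\ell)}(\pos)} < +\infty$ granted by Definition~\ref{def-kernel}. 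Combining with the identity above,
\begin{align*}
\normLi{\etaL^{(\ell)} - \etaV^{(\ell)}} \leq K_\phi\left(\frac{\normObs{w}}{|\la|} + \frac{1}{|\la|}\normObs{\GPi_{\post}\GatzO\begin{pmatrix}\ampO\\0\end{pmatrix}}\right).
\end{align*}
The first term is at most $K_\phi C$. For the second, Lemma~\ref{lem-proj-gamma} bounds $\normObs{\GPi_{\post}\GatzO\begin{pmatrix}\ampO\\0\end{pmatrix}}$ by $\constPi t^{2N}\normLiVec{\posz - \poszO}$, while Proposition~\ref{first-order-dev-sol} gives $\normLiVec{\posz-\poszO} \leq M(|\la|+\normObs{w})/t^{2N-1}$; hence the second contribution is at most $K_\phi\,\constPi\,M\,t\,(1+C)$.

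To close the argument, I first fix $C > 0$ small enough that $K_\phi C \leq \constW/4$, and then, with $C$ fixed, shrink $t_1 \in (0,t_0)$ so that both $\normLi{\etaV^{(\ell)} - \etaW^{(\ell)}} \leq \constW/2$ and $K_\phi\,\constPi\,M\,t(1+C) \leq \constW/4$ hold for every $t \in (0,t_1)$. Adding the three bounds yields the result. There is no genuine analytic obstacle here: all the hard estimates are already available in Lemma~\ref{maj-df} (which supplies the formula for $p_\la$), in the projector estimates of Section~\ref{sec-prelim}, and in Proposition~\ref{first-order-dev-sol}. The only subtlety is the order of choices---$C$ must be fixed before $t_1$ so that the $(1+C)$ factor in the $O(t)$ term remains benign---and to verify that $C$ and $t_1$ depend only on $\phi$ and $\aposO$, which is transparent once each invoked constant is traced back to its source.
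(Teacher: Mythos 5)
Your proof follows essentially the same route as the paper: derive the explicit formula for $\pL$ from $\feq(\gs(\la,w),(\la,w))=0$, split off the orthogonal-projector noise term and the $\GPi_{\post}\GatzO\AmpO$ term (controlled by Lemma~\ref{lem-proj-gamma} plus Proposition~\ref{first-order-dev-sol}), transfer the $\Obs$-estimate to the $\Linf$-estimate on all derivatives up to order $2N$ via the uniform bound on $\normObs{\phi^{(\ell)}}$, and choose $C$ before $t_1$. One point, however, is imprecise and should be corrected: you identify $\Gatz^{*,+}\binom{\bun_N}{0}$ with $\pV$ from Proposition~\ref{propdef-vanishing}, but $\pV$ in that proposition is $\GatzO^{+,*}\binom{\bun_N}{0}$, built from the true positions $\poszO$, whereas your expression is built from $\posz=\gs(\la,w)\neq\poszO$. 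As a consequence Proposition~\ref{prop-cvetav}, which is a statement about the certificate at $t\poszO$ in the pure limit $t\to 0^+$, does not literally apply to a quantity whose position vector $\posz$ depends on $(\la,w)$; what you actually need is the \emph{uniform-in-$\posz\in\Bo$} estimate of Lemma~\ref{lem-dl-ftz}, namely $\Gatz^{+,*}\binom{\bun_N}{0}=\pW+\bigO{t}$ uniformly on $\Bo$, which is precisely what the paper's proof invokes (giving $\normObs{\Gatz^{*,+}\binom{\bun_N}{0}-\pW}\leq \constV t$ for $t\in(0,t_V)$ and all $\posz\in\Bo$). With that citation swapped in, the argument is the paper's.
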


\begin{proof}
  Let $t\in(0,t_0)$, $\param\in \ball{0}{\constR t^{2N-1}}$, and $(\amp,\posz)=\apos=\gs(\param)$.
Then, using $\feq(\apos,\param)=0$ (see~\eqref{eq-implicitfunc}), we get
\begin{align*}
	\pL&\eqdef \frac{1}{\la} \pa{ \Phi_{\postO}\ampO+w-\Phi_{\post}\amp}\\
  &=\frac{1}{\la}\pa{\GatzO\AmpO+w-\Gatz\Amp}\\
  &=\Gatz^{*,+}\begin{pmatrix}\bun_N \\ 0\end{pmatrix}+\GPi_{\post} \frac{w}{\la} + \frac{1}{\la}\GPi_{\post}\GatzO\AmpO.
\end{align*}
Hence,
\begin{align*}
	\normObs{\pL-\pW} \leq \normObs{\Gatz^{*,+}\begin{pmatrix}\bun_N \\ 0\end{pmatrix}-\pW} 
	+ 
	\normObs{\GPi_{\post} \frac{w}{\la}} 
	+ 
	\normObs{\frac{1}{\la}\GPi_{\post}\GatzO\AmpO}
\end{align*}
From Lemma~\ref{lem-dl-ftz}, there exists $\constV>0$ and $t_V>0$ (which only depends on $\phi,\aposO$) such that for all $t\in(0,t_V)$ and all $\posz\in \Bo$,
\eq{
	\normObs{\Gatz^{*,+}\begin{pmatrix}\bun_N \\ 0\end{pmatrix}-\pW}\leq \constV t.
}
Moreover, since $\GPi_{\post}$ is an orthogonal projector,
\eq{
	\normObs{\GPi_{\post} \frac{w}{\la}}\leq \normObs{\frac{w}{\la}},
}
and by Lemma~\ref{lem-proj-gamma}, there exists $\constPi>0$ (which only depends on $\phi,\aposO$) such that, for all $t\in(0,t_0)$
\begin{align*}
	\normObs{\frac{1}{\la}\GPi_{\post}\GatzO\AmpO}&\leq \frac{\constPi}{|\la|}t^{2N} \abs{\posz-\poszO}_\infty \\
			&\leq \frac{\constPi}{|\la|}t^{2N} \frac{\constdg}{t^{2N-1}} (|\la|+\normObs{w}) \mbox{ by Proposition~\ref{first-order-dev-sol}} \\
			&\leq \constPi \constdg t \pa{1+\normObs{\frac{w}{\la}}}.
\end{align*}
Gathering all these upper-bounds, one obtains
\begin{align*}
	\normObs{\pL-\pW} &\leq (\constV+\constPi \constdg)t + (1 + \constPi \constdg t) \normObs{\frac{w}{\la}}\\
			&\leq (\constV+\constPi \constdg)t + (1+ \constPi \constdg) \normObs{\frac{w}{\la}}
\end{align*}
Now, denoting by $\Phi^{(\ell)}: \radon\rightarrow \Obs$ the operator $m\mapsto \int_\Pos \phi^{(\ell)}(\pos)\d m(\pos)$ and by ${\Phi^{(\ell)}}^*:\Obs\rightarrow \ContX$ its adjoint (so that $\etaL^{(\ell)}={\Phi^{(\ell)}}^*\pL$ and $\etaW^{(\ell)}={\Phi^{(\ell)}}^*\pW$), we let
\eq{
  K \eqdef \underset{0\leq l\leq 2N}{\max}\underset{\pos\in\Pos}{\sup}\normObs{\phi^{(\ell)}(\pos)},
}
which satisfies $K<+\infty$ because $\phi\in\kernel{2N+1}$.
Then, for all $\ell \in \{0,\ldots, 2N\}$, 
\begin{align*}
\normLi{\etaL^{(\ell)}-\etaW^{(\ell)}} &\leq K \normObs{\pL-\pW}\\
&\leq K\left( (\constV+\constPi \constdg)t + (1+ \constPi \constdg) \normObs{\frac{w}{\la}}\right).
\end{align*}
As a consequence, by taking $t$ smaller than $\min(t_0,t_V,\frac{\constW}{2K(\constV+\constPi \constdg)})$ and for all $(\la,w)\in B(0,\constR t^{2N-1})$ such that
\eq{
	(1 + \constPi \constdg) \normObs{\frac{w}{\la}}\leq\frac{\constW}{2K},
}
we get
\eq{
  \normLi{\etaL^{(\ell)}-\etaW^{(\ell)}} \leq \constW.
}
\end{proof}

\begin{rem}
  	The constants involved in Proposition~\ref{etaL-nondegen} are 
  	\eql{\label{eq-cst-nondeg}
  	t_1\eqdef \min(t_0,t_V,\frac{\constW}{2K(\constV+\constPi \constdg)})
	\qandq
	C \eqdef \frac{\constW}{2K(1+\constPi \constdg)}. 
	}
	They only depend on $\phi$ and $\aposO$.
\end{rem}

\section{Conclusion}

In this paper, we have proposed a detailed analysis of the recovery performance of the BLASSO for positive measures. 
We have shown that if the signal-to-noise ratio is of the order of $1/t^{2N-1}$, then the BLASSO achieves perfect support estimation. 
This results nicely matches both Cramer-Rao lower-bounds~\cite{vetterli-sparse2008}, bounds for combinatorial approaches~\cite{demanet-recoverability2014} and practical 
performances of the MUSIC algorithm~\cite{condat-cadzow2013}. It is also close to the upper-bound obtained by~\cite{candes-stable2014} for stable recovery on a grid by the LASSO. We have 
hence shown that these bounds do not only imply stability, they imply exact support recovery for the BLASSO, if $\etaW$ is $(2N-1)$-non-degenerate.
We have observed numerically that this condition holds for the ideal low-pass filter and that this is the case for a large class of low-pass filters. We have showed that this is true for the Gaussian kernel by providing the expression of $\etaW$.

\section*{Acknowledgements}

We would like to thank Laurent Demanet for his initial questions that motivated us to prove the main result of this paper. 
We would also like to thank Jean-Marie Mirebeau for stimulating discussions about the intriguing structure of the inverse of checkerboard matrices. 
This work has been partly supported by the European Research Council (ERC project SIGMA-Vision).

\appendix

\section{Proof of the Results of Section~\ref{sec-exact-support-recov}}

\subsection{Proof of Proposition~\ref{hyp-fourier}}
\label{sec-proof-hyp-fourier}

The functions $(\phiD{0},\ldots,\phiD{k})$ are linearly independent in $\Ldeux(\TT)$ if and only if their respective Fourier coefficients are linearly independent in $\ell^2(\ZZ)$. If $(c_n[\phiD{0}])_{n\in\ZZ}$ denotes the Fourier coefficients of $\phiD{0}$, the Fourier coefficients of $\phiD{j}$ are given by $\left((2\imath\pi n )^j c_n[\phiD{0}]\right)_{n\in\ZZ}$  (with the convention that $0^0=1$).
 
If $\phiD{0}$ has $k+1$ nonzeros Fourier coefficients corresponding to pairwise distinct frequencies $(n_0,\ldots n_k)$, those Fourier coefficients are given by the matrix product
 \eq{
   \begin{pmatrix}
     c_{n_0}[\phiD{0}] & &0 \\
                   &\ddots&\\
    0               &      &c_{n_k}[\phiD{0}]
   \end{pmatrix}
 \begin{pmatrix}
  1 & (2\imath\pi n_0) & (2\imath\pi n_0)^2 & \ldots & (2\imath\pi n_0)^{k} \\
  1 & (2\imath\pi n_1) & (2\imath\pi n_1)^2 & \ldots & (2\imath\pi n_1))^{k} \\
  \vdots & \vdots & \vdots & \ddots & \vdots \\
  1 & (2\imath\pi n_k) & (2\imath\pi n_k)^2 & \ldots & (2\imath\pi n_k)^{k} \\
 \end{pmatrix}
 }
Both the diagonal and the Vandermonde matrices are invertible, hence the family of Fourier coefficients of $(\phiD{0}, \ldots, \phiD{k})$ is linearly independent.

Conversely, if $\injk$ holds, one can find $k+1$ Fourier coefficients, corresponding to some frequencies $n_0,\ldots n_k$, such that the matrix $(c_{n_\ell}[\phiD{j}])_{0\leq \ell,j\leq k}$ is invertible. From the above factorization, we deduce that each $c_{n_\ell}[\phiD{0}]$ must be nonzero for $0\leq \ell\leq k$.

\subsection{Proof of Theorem~\ref{if-etaw-nondegen}}
\label{proof-if-etaw-nondegen}

The proof proceeds in two steps. First we show the result locally around $0$ in $\Pos$ thanks to $\etaW^{(2N)}(0)<0$ (because $\etaW^{(2N)}(0)\neq 0$ and $|\etaW|<1$ on $\Pos\setminus\{0\}$) and then we extend the result to all $\Pos$ thanks to $|\etaW|<1$ on $\Pos\setminus\{0\}$.

\paragraph{Locally.}

Let us prove that there exist $\constW>0$, $\tW>0$ such that for all $t\in (0,\tW)$, $\posz\in\RR^N$ with pairwise disctinct coordinates and $\normLiVec{\posz}\leq\RW$, there exist $r^+>0$ with $r^+>\underset{1\leq i\leq N}{\max} \tW\posz_i$ and $r^-<0$ with $r^-<\underset{1\leq i\leq N}{\min} \tW\posz_i$ such that for all $\eta \in \precert{2N}$ satisfying for all $1\leq i\leq N$, $\eta(\post_i)=1$ and $\eta'(\post_i)=0$, the following implication holds,
\eq{\begin{split}
&\left(  \forall \ell\in\{0,\ldots, 2N\}, \normLi{\eta^{(\ell)}-\etaW^{(\ell)}}\leq \constW\right)\\
&\Longrightarrow \left( 	\forall \pos\in (r^-,r^+)\setminus\bigcup_i \{\post_i\}, \quad
  |\eta(\pos)|<1 \qandq \forall 1\leq i \leq N, \ \eta''(\post_i)<0\right).
\end{split}
}

First, we prove that, provided $\constW>0$, $\tW>0$ are small enough, $\eta'$ has exactly $2N-1$ zeros in $(r^-,r^+)$.

Let $t>0$ and $\eta \in \precert{2N}$ and $\posz \in \RR^N$ with pairwise distinct coordinates and $\normLiVec{\posz}\leq\RW$ such that for all $1\leq i\leq N$, $\eta(\post_i)=1$ and $\eta'(\post_i)=0$. We suppose that $\posz_1<\posz_2<\ldots<\posz_N$. By Rolle's Theorem, for all $1\leq i\leq N-1$, there exists $c_i(t)\in (\post_i,\post_{i+1})$ such that $\eta'(c_i(t))=0$. As a result $\eta'$ has at least $2N-1$ zeros in all $(r^-,r^+)$ satisfying the requirements.

Now, let us assume by contradiction that $\eta'$ has strictly more than $2N-1$ zeros for arbitrarily small values of $\constW$, $\tW$ and in all $(r^-,r^+)$ satisfying the requirements.
As a result, there are sequences $(t_k)_{k\in\NN}$ where $t_k\to 0$, $(\posz_k)_{k\in\NN}$ (where each $z_k\in\RR^N$ has pairwise distinct coordinates and $\normLiVec{\posz_k}\leq\RW$), $(r^+_k)_{k\in\NN}$ and $(r^-_k)_{k\in\NN}$ where for all $k\in \NN$, $r^+_k>0$, $r^+_k>\underset{1\leq i\leq N}{\max} t_k (\posz_k)_i$, $r^+_k\to0$ and $r^-_k<0$, $r^-_k<\underset{1\leq i\leq N}{\min} t_k(\posz_k)_i$ and $r^-_k\to0$. And there exists $(\eta_k)_{k\in\NN}\in \pa{\precert{2N}}^{\NN}$ such that for all $k\in\NN$ 
\begin{align*}
  &\forall i\in\{1,\ldots, N-1\}, \ \eta_k(t_k(\posz_k)_i)=1 \qandq \eta_k'(t_k(\posz_k)_i)=0, \\
  &\forall \ell\in\{0,\ldots, 2N\}, \ \normLi{\eta_k^{(\ell)}-\etaW^{(\ell)}}\leq \frac{1}{k},
\end{align*}
and $\eta_k'$ has at least $2N$ zeros in $(r^-_k,r^+_k)$ (we already know that $\eta_k'$ has at least $2N-1$ zeros in $(r^-_k,r^+_k)$).
Thus, for all $k\in \NN$, by applying successively Rolle's Theorem, we obtain that there exists $\pos_k\in (r^-_k,r^+_k)$ such that $\eta_k^{(2N)}(\pos_k)=0$. Since $\pos_k\to0$ (because $r^-_k,r^+_k\to0$) and $\normLi{\eta_k^{(2N)}-\etaW^{(2N)}}\leq \frac{1}{k}$, we deduce that $\eta_W^{(2N)}(0)=0$, which is a contradiction. Hence  $\eta'$ has exactly $2N-1$ zeros in some $(r^-,r^+)$.

Using the same argument, we may also prove that for all $i\in\{1,\ldots, N\}$, $\eta''(\post_i)\neq0$. Let us now observe that, either for all $1\leq i \leq N$, $\eta''(\post_i)>0$ or, for all $1\leq i \leq N$, $\eta''(\post_{i+1})<0$. Indeed, assume for instance by contradiction that for some $1\leq i\leq N-1$,  $\eta''(\post_i)>0$ and $\eta''(\post_{i+1})<0$. Then, there exists $c_i(t)\in (\post_i,\post_{i+1})$ such that $\eta(c_i(t))=1$. Applying Rolle's Theorem on respectively $(\post_i,c_i(t))$ and $(c_i(t),\post_{i+1})$, we obtain that $\eta'$ vanishes at least twice in $(\post_i,\post_{i+1})$. It is a contradiction with the fact that $\eta'$ has exactly $2N-1$ zeros in $(r^-,r^+)$ for all $0<t<\tW$.

As a result, there are only two possibilities: either $\eta''(\post_i)<0$ for all $i$ (and then for all $\pos\in (r^-,r^+)\setminus\bigcup_i \{\post_i\}$  $\eta(\pos)<1$), or $\eta''(\post_i)>0$ for all $i$ (and then for all $\pos\in (r^-,r^+)\setminus\bigcup_i \{\post_i\}$ $\eta(\pos)>1$). But since $\etaW^{(2N)}(0)<0$, there is some $\tilde{\pos}\in (r^-,r^+)$ and some $\varepsilon_0>0$ such that $\etaW(\tilde{\pos}) < 1 - \varepsilon_0$. Choosing $\constW$ small enough so that $\constW<\varepsilon_0/2$, we obtain that $\eta(\tilde{\pos})<1-\varepsilon_0/2$. As a consequence, for all $\pos\in (r^-,r^+)\setminus\bigcup_i \{\post_i\}$, $\eta(\pos)<1$. Finally we can suppose that $\etaW(\pos)>-1$ on $(r^-,r^+)$ by imposing $0<\constW<\inf_{\Pos} \etaW+1$.

To sum up, we have proved the following :
\begin{align}\label{eta-vois1}
	\forall \pos\in (r^-,r^+)\setminus\bigcup_i \{\post_i\}, |\eta(\pos)|<1 \qandq \forall 1\leq i \leq N, \ \eta''(\post_i)<0.
\end{align}

\paragraph{Globally.}

As $\etaW$ is non-degenerate, we have $\sup_{\Pos\setminus (r^-,r^+)}\abs{\etaW}<1$. We can assume that $0<\constW<(1-\sup_{\Pos\setminus (r^-,r^+)}\abs{\etaW})/2$ and use $\normLi{\eta-\etaW}\leq \constW$ so as to obtain 
\begin{align}\label{eta-vois2}
	\forall \pos \in \Pos\setminus(r^-,r^+), \quad |\eta(\pos)|<1.
\end{align}
Gathering Equations \eqref{eta-vois1} and \eqref{eta-vois2}, we obtain the claimed result.

\subsection{Proof of Proposition~\ref{etaw-locnondegen}}
\label{sec-proof-etaw-locnondegen}

The proof of Proposition~\ref{etaw-locnondegen} relies on the study of the structure of the matrices $\Fk^*\Fk$. We start by introducing the notion of checkerboard matrices.

\begin{defn}\label{def-checkerboard}
 Let $n\in\NN^*$ and $A=(a_{i,j})_{i,j} \in \RR^{n\times n}$. We say that $A$ is a \emph{checkerboard} matrix if for all $(i,j)$ such that $i+j$ is odd, $a_{i,j}=0$.
\end{defn}

For an odd $n$, such a matrix looks like 
\eq{
	\begin{pmatrix}
		a_{1,1} & 0 & a_{1,3} & 0 & \ldots & 0 & a_{1,n-2} & 0 & a_{1,n} \\
		0 & a_{2,2} & 0 & a_{2,4} & \ldots & a_{2,n-3} & 0 & a_{2,n-1} & 0 \\
		\vdots & \vdots & \vdots & \vdots & \ddots & \vdots & \vdots & \vdots & \vdots \\
		0 & a_{n-1,2} & 0 & a_{n-1,4} & \ldots & a_{n-1,n-3} & 0 & a_{n-1,n-1} & 0 \\
		a_{n,1} & 0 & a_{n,3} & 0 & \ldots & 0 & a_{n,n-2} & 0 & a_{n,n}
	\end{pmatrix}.
}

\begin{lem}\label{checkerboard_struct}
 The set of \emph{checkerboard} matrices of size $n \in \NN^*$ is an algebra over $\RR$. As a result the inverse of a \emph{checkerboard} matrix is also a \emph{checkerboard} matrix.
\end{lem}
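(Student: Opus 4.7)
The plan is to recognize the checkerboard condition as invariance under conjugation by a single fixed involution, which collapses all three claims (vector-space closure, multiplicative closure, closure under inversion) into one-line verifications. Concretely, I would introduce the diagonal sign matrix $D = \diag\pa{-1,1,-1,\ldots,(-1)^n} \in \RR^{n \times n}$ and note that $D^2 = \Id_n$. A direct entry-wise computation gives $(DAD)_{i,j} = (-1)^{i+j}a_{i,j}$, so the checkerboard condition in Definition~\ref{def-checkerboard} is equivalent to $DAD = A$. In other words, the set $\Ch_n$ of checkerboard matrices of size $n$ is the fixed-point set of the involution $\Psi : X \mapsto DXD$ on $\RR^{n\times n}$.

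Next I would derive the algebra structure. Since $\Psi$ is $\RR$-linear, its fixed-point set is automatically an $\RR$-vector subspace of $\RR^{n\times n}$. For multiplicative closure, if $A,B \in \Ch_n$, then using $D^2 = \Id_n$,
\[
	D(AB)D = (DAD)(DBD) = AB,
\]
so $AB \in \Ch_n$. Moreover $D\,\Id_n\, D = \Id_n$, so $\Id_n \in \Ch_n$, and $\Ch_n$ is a unital subalgebra of $\RR^{n\times n}$.

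Finally, for the inverse: if $A \in \Ch_n$ is invertible, apply $\Psi$ to the identity $AA^{-1} = \Id_n$; using $\Psi(A) = A$ and $\Psi(\Id_n) = \Id_n$ yields $A\cdot(DA^{-1}D) = \Id_n$, hence $DA^{-1}D = A^{-1}$, i.e.\ $A^{-1} \in \Ch_n$.

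I do not expect a real obstacle here; the only conceptual step is spotting the correct involution $\Psi$ so that the three claims reduce to trivial manipulations. An alternative, more computational route via the cofactor formula (checking the parity of $(-1)^{i+j}\det(M_{i,j})$ for minors of a checkerboard matrix) would work but is significantly messier, which is why I prefer the conjugation viewpoint.
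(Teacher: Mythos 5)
Your proof is correct and takes a genuinely different route from the paper. You identify the checkerboard condition with the fixed-point equation $DAD = A$ for the involution $D = \diag(-1,1,\ldots,(-1)^n)$, so that closure under linear combination, under multiplication, and under inversion all follow in one line from $D^2 = \Id_n$; in particular the inverse claim drops out by applying $X \mapsto DXD$ to the identity $AA^{-1} = \Id_n$. The paper instead proves multiplicative closure by a direct parity argument on the entries of $AB$ (when $i+j$ is odd, each term $a_{i,k}b_{k,j}$ vanishes because exactly one of $i+k$ and $k+j$ is odd), and obtains closure under inversion by invoking the fact that $A^{-1}$ is a polynomial in $A$ (Cayley--Hamilton). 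Your conjugation viewpoint is more structural and unifies the three claims without appealing to Cayley--Hamilton; the paper's route is more elementary and concrete, at the price of a short combinatorial case analysis. Both arguments are sound and prove exactly the stated lemma.
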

\begin{proof}
 The only difficulty is to show that the product of two checkerboard matrices is also a checkerboard matrix. Let $A=(a_{i,j})_{i,j}$ and $B=(b_{i,j})_{i,j}$ be two such matrices. Let $(i,j)$ such 
 that $i+j$ is odd. Then $\sum_{k=1}^n a_{i,k}b_{k,j} = 0$ because if $i+k$ is even then $(i+j)-(i+k)=j-k$ is odd and $j-k+2k=j+k$ is odd, hence $b_{j,k}=0$. On the contrary, if $i+k$ is odd then $a_{i,k}=0$. So $AB$ is a checkerboard matrix. 
 The last statement holds because the inverse of any matrix is a polynomial in that matrix.
\end{proof}

Now we give a more precise result on the structure of some particular checkerboard matrices.

\begin{lem}\label{pfaffien}
Let $A=(a_{i,j})_{1\leq i,j\leq 2n+1}$ be a symmetric positive-definite checkerboard matrix of size $2n+1$ such that 
\eq{
  a_{i,j}=(-1)^{\frac{i-j}{2}} a_{\frac{i+j}{2},\frac{i+j}{2}}
}
for all $i,j\in\{1,\ldots, 2n+1\}$ such that $i+j$ is even, and let  $(b_{i,j})_{1\leq i,j\leq 2n+1}$ denote the entries of $A^{-1}$. 
Then $b_{2n+1,2n+1}$ and $b_{1,2n+1}$ are positive.
\end{lem}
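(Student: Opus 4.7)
The positivity of $b_{2n+1,2n+1}$ is essentially trivial: since $A$ is symmetric positive-definite, so is $A^{-1}$, and every diagonal entry of a symmetric positive-definite matrix is positive. So I will focus on $b_{1,2n+1}$.

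The plan is to exploit the checkerboard structure via the permutation $\sigma$ that lists odd indices first (in increasing order) and then even indices. Under the associated permutation matrix $P$, the conjugate $PAP^T$ becomes block-diagonal: $PAP^T = \begin{pmatrix} A_o & 0 \\ 0 & A_e\end{pmatrix}$, where $A_o$ is the $(n+1)\times(n+1)$ principal submatrix of $A$ indexed by $\{1,3,\ldots,2n+1\}$, and $A_e$ the $n\times n$ principal submatrix indexed by $\{2,4,\ldots,2n\}$. Both are symmetric positive-definite as principal submatrices of $A$. Because index $1$ has rank $1$ and index $2n+1$ has rank $n+1$ in the new ordering, one obtains
\[
  b_{1,2n+1} = (A_o^{-1})_{1,n+1}.
\]

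Next, setting $\alpha_k \eqdef a_{k,k}$, the hypothesis $a_{i,j}=(-1)^{(i-j)/2}\alpha_{(i+j)/2}$ for $i+j$ even yields the explicit formulas
\[
  (A_o)_{ij} = (-1)^{i-j}\alpha_{i+j-1}\qquad (1\leq i,j\leq n+1),
  \qquad
  (A_e)_{ij} = (-1)^{i-j}\alpha_{i+j}\qquad (1\leq i,j\leq n).
\]
Conjugation by the sign-diagonal $D=\diag((-1)^i)_{i}$ (of appropriate size, noting $D^{-1}=D$) removes the alternating signs and turns $A_o$ and $A_e$ into bona fide Hankel matrices
\[
  H \eqdef D A_o D = (\alpha_{i+j-1})_{1\leq i,j\leq n+1},
  \qquad
  H' \eqdef D' A_e D' = (\alpha_{i+j})_{1\leq i,j\leq n}.
\]
Since diagonal congruence by an invertible matrix preserves positive-definiteness, $H$ and $H'$ are both symmetric positive-definite; in particular $\det H>0$ and $\det H'>0$.

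The last step, which is the crux, is a small sign bookkeeping together with the key observation that $H'$ appears as a minor of $H$. From $A_o = D H D$ we get $A_o^{-1}=D H^{-1} D$, hence
\[
  (A_o^{-1})_{1,n+1} = (-1)^{1}(-1)^{n+1}(H^{-1})_{1,n+1} = (-1)^{n}(H^{-1})_{1,n+1}.
\]
By the cofactor formula, $(H^{-1})_{1,n+1} = (-1)^{(n+1)+1}\det(M)/\det H = (-1)^n\det(M)/\det H$, where $M$ is the $n\times n$ minor obtained by deleting row $n+1$ and column $1$ of $H$. A direct inspection shows $M_{ij}=H_{i,j+1}=\alpha_{i+j}$ for $1\leq i,j\leq n$, i.e.\ $M=H'$. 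Collecting the two signs $(-1)^n\cdot(-1)^n=1$ gives
\[
  b_{1,2n+1} = \frac{\det H'}{\det H} > 0,
\]
as claimed. I do not anticipate any technical obstacle; the only step requiring care is the sign bookkeeping in the final computation, but it falls into place exactly because the minor associated with the corner entry of $H^{-1}$ coincides with the Hankel matrix arising from the \emph{even}-indexed block of $A$.
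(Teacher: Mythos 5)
Your proof is correct, and it takes a genuinely different route from the paper. The paper observes that the $(1,2n+1)$ cofactor of $A$ is (up to a known nonzero factor) $\det(\hat{A}_{1,2n+1})$, where $\hat{A}_{1,2n+1}$ is $A$ with its first row and last column deleted; one checks that $\hat{A}_{1,2n+1}$ is a $2n\times 2n$ \emph{skew-symmetric} matrix, so its determinant is a Pfaffian squared, hence nonnegative, and a separate direct-sum argument on its columns (splitting them by parity, then comparing with the columns of $A$) is required to show the determinant is in fact nonzero. Your argument instead conjugates $A$ by the parity-sorting permutation to reduce to the $(n+1)\times(n+1)$ odd block $A_o$, conjugates by the sign diagonal $D=\diag((-1)^i)$ to strip the alternating signs and obtain a genuine Hankel matrix $H=(\alpha_{i+j-1})$, and then notices that the cofactor minor of $H$ relevant to the corner entry $(H^{-1})_{1,n+1}$ is exactly the Hankel matrix $H'=(\alpha_{i+j})$ coming from the even block $A_e$. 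Both $H$ and $H'$ are congruent to principal submatrices of the SPD matrix $A$, so both determinants are strictly positive, and a short sign bookkeeping yields the explicit formula $b_{1,2n+1}=\det H'/\det H>0$. This is more elementary (no Pfaffians), automatically gives strict positivity without a separate nondegeneracy argument, and produces a closed-form expression for the entry — a pleasant improvement over the paper's argument. One small remark: the phrase ``minor obtained by deleting row $n+1$ and column $1$'' together with the $(-1)^{(n+1)+1}$ sign implicitly invokes the adjugate formula $(H^{-1})_{1,n+1}=C_{n+1,1}/\det H$ with $C_{n+1,1}$ the $(n+1,1)$ cofactor; that is the correct convention and your signs work out.
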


\begin{proof}
 The fact that $b_{2n+1,2n+1}$ is positive, is a direct consequence of the fact that $A^{-1}$ is symmetric positive-definite because so is $A$.
 
From the expression of the inverse of a matrix using cofactors, we see that $b_{1,2n+1}=\det(\hat{A}_{1,2n+1})/\det(A)$, where $\hat{A}_{1,2n+1}$ is the matrix 
 obtained from $A$ by removing the first row and the last column. Since $A$ is symmetric positive-definite, $\det(A)>0$, and we need only show that $\det(\hat{A}_{1,2n+1})>0$. 
 But one can see that $\hat{A}_{1,2n+1}$ is a skew symmetric matrix of size $2n$, hence 
 \eq{
	\det(\hat{A}_{1,2n+1})=\mbox{pf}(\hat{A}_{1,2n+1})^2,
 }
 where $\mbox{pf}(\hat{A}_{1,2n+1})$ is the Pfaffian of the matrix $\hat{A}_{1,2n+1}$. For more details on the Pfaffian (definition and proof 
 of the result used here), see \cite{lang-algebra2002}.
As a result, $\det(\hat{A}_{1,2n+1}) \geq 0$.  
 
 Moreover, $\hat{A}_{1,2n+1}$ is invertible. Indeed if we denote the colums of $A$ (resp. the columns $\hat{A}_{1,2n+1}$) by $C_1,\ldots,C_{2n+1}$ (resp. 
 $\hat{C}_1, \ldots,\hat{C}_{2n}$), we observe that for $0\leq i \leq n-1$, $\hat{C}_{2i+1}\in E_1$ and $\hat{C}_{2(i+1)}\in E_2$ where
 \begin{align*}
  E_1 &= \Span\{ e_{2j+1} ; \ 0 \leq j \leq n-1 \}, \\
  E_2 &= \Span\{ e_{2(j+1)} ; \ 0 \leq j \leq n-1 \},
 \end{align*}
 with $(e_i)_{1\leq i\leq 2n}$ is the canonical base of $\RR^{2n}$. Since $E_1$ and $E_2$ are in direct sum, if $\hat{A}_{1,2n+1}$ is not invertible it means that either 
 $(\hat{C}_{2i+1})_{0\leq i \leq n-1}$ or $(\hat{C}_{2(i+1)})_{0\leq i \leq n-1}$ is linearly dependent. But it cannot be $(\hat{C}_{2(i+1)})_{0\leq i \leq n-1}$ because it would 
 imply that $(C_{2(i+1)})_{0\leq i \leq n-1}$ is also linearly dependent, because
 \eq{
	C_{2(i+1)}=\begin{pmatrix}
	                0 \\
	                \hat{C}_{2(i+1)}
	             \end{pmatrix},
 }
 which would contradict the invertibility of $A$.
 
 So it means that $(\hat{C}_{2i+1})_{0\leq i \leq n-1}$ must be linearly dependent. However from the structure of the matrix $A$, one can see that for all $0\leq i \leq n-1$
 \eq{
  \hat{C}_{2i+1}=
  \begin{pmatrix}
   0 \\
   (-1)^{i+1} a_{2i+1,1} \\
   0 \\
   -(-1)^{i+1} a_{2i+3,1} \\
   \vdots \\
   0 \\
   (-1)^{i+1} a_{2(i+n)-1,1} \\
   0 \\
   -(-1)^{i+1} a_{2(i+n)+1,1} 
  \end{pmatrix}
  \qandq
  \hat{C}_{2(i+1)}=
  \begin{pmatrix}
   (-1)^{i+1} a_{2i+1,1} \\
   0 \\
   -(-1)^{i+1} a_{2i+3,1} \\
   0 \\
   \vdots \\
   (-1)^{i+1} a_{2(i+n)-1,1} \\
   0 \\
   -(-1)^{i+1} a_{2(i+n)+1,1} \\
   0
  \end{pmatrix}.
 }
 Thus, a linear combination between the elements of $(\hat{C}_{2i+1})_{0\leq i \leq n-1}$ gives the same linear combination between the elements of 
 $(\hat{C}_{2(i+1)})_{0\leq i \leq n-1}$, which contradicts for the same reason as before the invertibility of $A$.
 
 Hence $\hat{A}_{1,2n+1}$ is invertible and $b_{1,2n+1}>0$.
\end{proof}

The next result describes the structure of the matrices $\Fk^*\Fk$ when $\Phi$ is a convolution operator (\ie for all $\pos\in\RR$, $\phi(\pos)=\tilde{\phi}(\cdot-\pos)$) and $k\in\NN^*$ is odd.

\begin{lem}\label{damier_gramFk}
 Suppose that $\Phi$ is a convolution operator. Then for all $k\in \NN^*$, $\Fk^* \Fk$ is a checkerboard matrix. Moreover if $\injdnu$ holds, then the entries indexed by $(1,2N+1)$ and $(2N+1,2N+1)$ of $(\Fdnu^*\Fdnu)^{-1}$ are positive.
\end{lem}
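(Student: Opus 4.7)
The strategy is to explicitly compute the entries of the Gram matrix $\Fk^*\Fk$ in terms of $\tilde{\phi}$ and its derivatives, show these entries have exactly the structure required by Lemma~\ref{pfaffien}, and then invoke that lemma to extract the positivity of the two target coefficients of $(\Fdnu^*\Fdnu)^{-1}$.

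First, I would translate the definition of $\phi$ into information about $\phiD{k}$. Since $\phi(\pos) = \tilde{\phi}(\cdot - \pos)$, differentiating with respect to $\pos$ gives $\phi^{(k)}(\pos) = (-1)^k \tilde{\phi}^{(k)}(\cdot - \pos)$, hence
\[
\phiD{k} = (-1)^k \tilde{\phi}^{(k)} \in L^2(\Pos).
\]
Therefore the $(i,j)$-entry of $\Fk^*\Fk$ (with indexing from $1$) is
\[
(\Fk^*\Fk)_{i,j} = \dotObs{\phiD{i-1}}{\phiD{j-1}} = (-1)^{i+j}\int_\Pos \tilde{\phi}^{(i-1)}\tilde{\phi}^{(j-1)}.
\]

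Next, I would establish the checkerboard property via repeated integration by parts. In the case $\Pos=\TT$ there are no boundary terms; in the case $\Pos=\RR$ the hypothesis $\tilde{\phi}^{(i)}\to0$ at infinity (together with the fact that the derivatives lie in $L^2(\RR)$, which is already built into the definition of $\kernel{2N}$) guarantees that all boundary terms vanish. Integrating by parts $j-1$ times shifts all derivatives onto the first factor:
\[
\int_\Pos \tilde{\phi}^{(i-1)}\tilde{\phi}^{(j-1)} = (-1)^{j-1}\int_\Pos \tilde{\phi}^{(i+j-2)}\tilde{\phi}.
\]
Applying integration by parts $i+j-2$ times to this last integral and comparing gives
\[
\int_\Pos \tilde{\phi}^{(i+j-2)}\tilde{\phi} = (-1)^{i+j-2}\int_\Pos \tilde{\phi}^{(i+j-2)}\tilde{\phi},
\]
so that the integral vanishes whenever $i+j$ is odd. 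This proves that $\Fk^*\Fk$ is a checkerboard matrix.

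When $i+j$ is even, set $m=(i+j-2)/2$; doing the integration by parts in a symmetric way yields
\[
\int_\Pos \tilde{\phi}^{(i-1)}\tilde{\phi}^{(j-1)} = (-1)^{j-1-m}\int_\Pos \bigl(\tilde{\phi}^{(m)}\bigr)^2,
\]
and in particular the diagonal entry at index $(m+1,m+1)$ equals $\int_\Pos(\tilde{\phi}^{(m)})^2$. A short sign-bookkeeping then shows that
\[
(\Fk^*\Fk)_{i,j} = (-1)^{\frac{i-j}{2}}\,(\Fk^*\Fk)_{\frac{i+j}{2},\frac{i+j}{2}},
\]
which is exactly the structure required by Lemma~\ref{pfaffien}. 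Finally, $\injdnu$ ensures that $\Fdnu$ has full column rank, so $\Fdnu^*\Fdnu$ is symmetric positive-definite. Applying Lemma~\ref{pfaffien} with $2n+1=2N+1$ immediately gives that the $(1,2N+1)$ and $(2N+1,2N+1)$ entries of $(\Fdnu^*\Fdnu)^{-1}$ are positive.

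The only genuine obstacle is the careful sign bookkeeping in the integration-by-parts step, together with making sure the boundary terms vanish on $\RR$; once the symmetric form $(-1)^{(i-j)/2}a_{(i+j)/2,(i+j)/2}$ is in hand, Lemma~\ref{pfaffien} does all the remaining work.
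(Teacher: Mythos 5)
Your core idea---integration by parts combined with the symmetry of the $L^2$ inner product---is exactly the mechanism used in the paper, and your sign bookkeeping for the even-$(i+j)$ case is correct (one checks $m-j+1=\tfrac{i-j}{2}$, so your exponent matches the form required by Lemma~\ref{pfaffien}). However, there is a genuine gap in the way you carry out the integration by parts. To show that $(\Fk^*\Fk)_{i,j}=0$ when $i+j$ is odd, you first shift \emph{all} derivatives onto one factor, writing
\[
\int_\Pos \tilde{\phi}^{(i-1)}\tilde{\phi}^{(j-1)} = (-1)^{j-1}\int_\Pos \tilde{\phi}^{(i+j-2)}\tilde{\phi},
\]
and this intermediate quantity involves $\tilde{\phi}^{(i+j-2)}$. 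Since $i,j$ range up to $k+1$, this requires $\tilde{\phi}$ to have up to $2k$ derivatives. But the lemma's hypotheses only provide $\phi\in\kernel{k}$; in the relevant application ($\Fdnu^*\Fdnu$ with $\injdnu$, i.e.\ $\phi\in\kernel{2N}$) you would need $\tilde{\phi}\in\Cder{4N}$, which is not assumed. The same issue affects the vanishing of the boundary terms on $\RR$, where Proposition~\ref{etaw-locnondegen} only controls $\tilde{\phi}^{(i)}$ for $i\le 2N-1$.

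The paper avoids this by integrating by parts only \emph{halfway}: starting from $\dotObs{\phiD{i-1}}{\phiD{j-1}}$, one moves derivatives towards the two middle indices $\tfrac{i+j-1}{2}-1$ and $\tfrac{i+j+1}{2}-1$, producing two expressions that are equal by symmetry of the inner product but differ by a factor $-1$. Every derivative encountered along the way has order at most $\max(i-1,j-1)\le k$, so only $\Cder{k}$ regularity is used. Your argument is easily repaired by replacing the all-the-way shift with this symmetric shift; the rest of your proof (the explicit formula $\phiD{k}=(-1)^k\tilde{\phi}^{(k)}$, the structure check against Lemma~\ref{pfaffien}, and the invocation of $\injdnu$ for positive-definiteness) is sound.
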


\begin{proof}
 Let $(i,j)$ such that $i+j$ is odd and for example $i>j$ ($\Fk^* \Fk$ is symmetric so it does not matter). The entry $(i,j)$ of $\Fk^* \Fk$ is equal to
 \eq{
	\dotObs{ \phiD{i-1} }{ \phiD{j-1} }.
 }
Recall that in the setting of Proposition~\ref{etaw-locnondegen}, $\Obs=L^2(\Pos)$ and $\dotObs{f}{g}=\int_X fg$. Successive integrations by parts (we integrate the left term and derive the right term) yield
 \begin{align*}
	\dotObs{ \phiD{i-1} }{ \phiD{j-1} } = (-1)^{\frac{i-j+1}{2}} \dotObs{ \phiD{\frac{i+j-1}{2}-1} }{ \phiD{\frac{i+j+1}{2}-1} }, \\
	\dotObs{ \phiD{i-1} }{ \phiD{j-1} } = (-1)^{\frac{i-j-1}{2}} \dotObs{ \phiD{\frac{i+j+1}{2}-1} }{ \phiD{\frac{i+j-1}{2}-1} }.
 \end{align*}
Hence, by the symmetry of the scalar product, we obtain that 
\eq{
	\dotObs{ \phiD{i-1} }{ \phiD{j-1} } = -\dotObs{ \phiD{i-1} }{ \phiD{j-1} }. 
}
This implies that $\dotObs{ \phiD{i-1} }{ \phiD{j-1} }=0$ for odd values of $i+j$, so that $\Fk^*\Fk$ is a checkerboard matrix (see Definition~\ref{def-checkerboard}).
 
Moreover, similar computations show that 	when $i+j$ is even
\eq{
  \dotObs{ \phiD{i-1} }{ \phiD{j-1} }=(-1)^{\frac{i-j}{2}}\dotObs{ \phiD{\frac{i+j}{2}-1} }{ \phiD{\frac{i+j}{2}-1} }.
}
If $\injdnu$ holds, $\Fdnu^*\Fdnu$ is symmetric positive-definite. As a result $\Fdnu^*\Fdnu$ satisfies the assumptions of Lemma \ref{pfaffien}, so the entries $(1,2N+1)$ and $(2N+1,2N+1)$ of $(\Fdnu^*\Fdnu)^{-1}$ are positive.
\end{proof}

We can now prove that $\etaW^{(2N)}(0)<0$ when $\Phi$ is a convolution operator and $\injdnu$ holds.

\begin{proof}[Proof of Proposition~\ref{etaw-locnondegen}]
Since $\etaW=\Phi^*\pW$, we deduce that 
  \begin{align*}
   \etaW^{(2N)}(0) &= \dotObs{\phiD{2N}}{\pW}
   = \dotObs{\phiD{2N}}{ \Fdn(\Fdn^*\Fdn)^{-1}\dirac{2N} }.
  \end{align*}
  Consider the symmetric positive definite matrix
  \begin{align*}
    \Fdnu^*\Fdnu= \begin{pmatrix} \Fdn^*\Fdn & \Fdn^* \phiD{2N}\\
    [ \Fdn^* \phiD{2N} ]^* & \normObs{\phiD{2N}}^2
    \end{pmatrix} \in \RR^{2N \times 2N}.
  \end{align*}
  Observe that $\Fdn^*\Fdn$ is invertible, and that 
  \eq{
  	S \eqdef \normObs{\phiD{2N}}^2 - 
	\dotObs{ \phiD{2N} }{
		\Fdn(\Fdn^*\Fdn)^{-1}\Fdn^*\phiD{2N}
		} \neq 0. 
  }
Indeed, 
\begin{align*}
  S  = 
  \normObs{(\Id_{\Obs} - \GPin)\phiD{2N}}^2>0
\end{align*}
since $(\phiD{0}, \ldots \phiD{2N})$ has full rank.

Thus, we may apply the block inversion formula, and $(\Fdnu^*\Fdnu)^{-1}$ is of the form
\begin{align*}
&  (\Fdnu^*\Fdnu)^{-1}= \frac{1}{\normObs{(\Id_{\Obs} - \GPin)\phiD{2N}}^2} \times \\
  & \qquad
  \begin{pmatrix}
    \ast\ast\ast   & -(\Fdn^*\Fdn)^{-1} \Fdn^*\phiD{2N} \\
    -[(\Fdn^*\Fdn)^{-1} \Fdn^*\phiD{2N}]^* & 1
  \end{pmatrix}
\end{align*}
Lemma~\ref{damier_gramFk} ensures that the entry $(2N+1,1)$ of $(\Fdnu^*\Fdnu)^{-1}$ is (strictly) positive. This precisely means that  
\begin{align*}
  -\frac{
    \dotObs{ \phiD{2N} }{ \Fdn(\Fdn^*\Fdn)^{-1}\dirac{2N} }
  }{
  	\normObs{(\Id_{\Obs}-\GPin)\phiD{2N}}^2
	}
	>0,
\end{align*} and as a result $\etaW^{(2N)}(0)=\dotObs{\phiD{2N}}{\pW} <0$.
\end{proof}

\subsection{Proof of Proposition~\ref{etaW-gaussian-nondegen}}
\label{sec-proof-gaussian-nondegen}

Let denote $(\igk)_{0\leq k\leq 2N-1}$ the coefficients of the first column of the matrix $(\Fdn^*\Fdn)^{-1}$. Then we know that for all $x\in\RR$,
\eq{
	\etaW(x)=\sumkN \igdk \phi \star \phi^{(2k)}(x).
}
Since $\phi\star\phi=x\to \sqrt{\pi}e^{-x^2/4}$, we get that for all $x\in\RR$,
\begin{align}\label{etaW-formu-inter}
	\etaW(x)=\sqrt{\pi}e^{-x^2/4}\sumkN \igdk\hHrdk(x),
\end{align}
where $\hHrdk$ is the Hermite polynomial of order $2k$ associated to $x\to e^{-x^2/4}$.

Now let us show~\eqref{etaW-closedform-gaussian} recursively on $N\in\NN^*$. If $N=1$, we have that $\etaW(x)=e^{-x^2/4}$ because $\Fdn^*\Fdn=(\sqrt{\pi})$. Suppose that the property is true for some $N\in\NN^*$, \ie,
\begin{align}\label{rec-hyp}
	\etaWN(x)=e^{-x^2/4}\sumkN \frac{x^{2k}}{2^{2k}k!},
\end{align}
where we use the notation $\etaWN$ to recall that this is the function $\etaW$ for $N$ spikes. Then form for all $x\in\RR$,
\eq{
	\kappa(x)\eqdef e^{x^2/4}\pa{\etaWNu(x)-\etaWN(x)}.
}
$\kappa$ is a polynomial of degree $2N$ (thanks to~\eqref{etaW-formu-inter}), it satisfies $\kappa(0)=0$ and for all $1\leq i\leq 2N-1$, $\kappa^{(i)}(0)=0$. As a result for all $x\in\RR$,
\eq{
	\kappa(x)=\la x^{2N},
}
for some $\la\in\RR$.

It remains to show that $\la=\frac{1}{2^{2N}N!}$. Remark that for all $x\in\RR$, $\kappa^{(2N)}(x)=\la(2N)!$ and on the other hand $\kappa^{(2N)}(0)=-\etaWN^{(2N)}(0)$ because $\etaWNu^{(2N)}(0)=0$ by definition of $\etaWNu$. So it is enough to show that,
\begin{align}\label{etaW2N-formu}
	\etaWN^{(2N)}(0)=-\frac{(2N)!}{2^{2N}N!}.
\end{align}
Thanks to the Leibniz formula applied to~\eqref{rec-hyp}, we have the following formula,
\begin{align*}
	\etaWN^{(2N)}(0)=\sumkN \binom{2N}{2k}\hHr_{2N-2k}(0)\frac{(2k)!}{2^{2k}k!}.
\end{align*}
Since $\hHr_{2N-2k}(0)=(-1)^{N-k}\frac{(2N-2k-1)!}{2^{2N-2k-1}(N-1-k)!}$, thanks to Lemma~\ref{lem-coeffconstant-herm} below, we obtain,
\begin{align*}
	\etaWN^{(2N)}(0)&=\sumkN \frac{(2N)!}{(2N-2k)!(2k)!}\cdot(-1)^{N-k}\frac{(2N-2k-1)!}{2^{2N-2k-1}(N-1-k)!}\cdot\frac{(2k)!}{2^{2k}k!}\\
		&= -\frac{(2N)!}{2^{2N}N!}\sumkN \frac{N(N-1)!}{(N-1-k)!k!}\cdot\frac{2\cdot(-1)^{N-1-k}}{(2N-2k)}\\
		&=-\frac{(2N)!}{2^{2N}N!} N \sumkN \binom{N-1}{k}(-1)^{N-1-k} \int_0^1 x^{N-1-k} \d x \\
		&=-\frac{(2N)!}{2^{2N}N!} N \underbrace{\int_0^1 (1-x)^{N-1} \d x}_{=1/N}.\\
\end{align*}
This ends the recursive proof and thus we have~\eqref{etaW-closedform-gaussian}.

To conclude that $\etaW$ is $(2N-1)$-non-degenerate, it remains to show that for all $x\in\RR^*$, $|\etaW(x)|<1$, since we already know that $\etaW^{(2N)}(0)<0$ thanks to~\eqref{etaW2N-formu}. But we have, thanks to~\eqref{etaW-closedform-gaussian}, that for all $x\in\RR^*$, $0<\etaW(x)<1$ since $\sumkN \frac{x^{2k}}{2^{2k}k!}$ is the truncated power series of $e^{x^2/4}$.

\begin{lem}\label{lem-coeffconstant-herm}
One has for all $k\in\NN$,
\begin{align}\label{coeffconstant-her}
	\hHr_{2k+2}(0)=(-1)^{k+1}\frac{(2k+1)!}{2^{2k+1}k!}.
\end{align}
\end{lem}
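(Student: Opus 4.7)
}

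My plan is to proceed by a clean induction on $k$, based on the standard three-term recurrence for the Hermite polynomials associated to the weight $w(x) = e^{-x^2/4}$. Recall that these are defined by the Rodrigues formula $\hat{H}_n(x) w(x) = (-1)^n w^{(n)}(x)$, so in particular $\hat{H}_0 \equiv 1$, hence $\hat{H}_0(0) = 1$.

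The key is to derive the reduction formula $\hat{H}_{n+1}(0) = -\frac{n}{2}\hat{H}_{n-1}(0)$. To get it, I would first differentiate the Rodrigues identity $\hat{H}_n(x) w(x) = (-1)^n w^{(n)}(x)$ once and use $w'(x) = -\tfrac{x}{2} w(x)$ to obtain the relation
\begin{align*}
\hat{H}_{n+1}(x) \;=\; \tfrac{x}{2}\, \hat{H}_n(x) \;-\; \hat{H}_n'(x).
\end{align*}
Next I would establish the derivative identity $\hat{H}_n'(x) = \tfrac{n}{2}\, \hat{H}_{n-1}(x)$, either by a short induction using the previous relation, or (equivalently) by differentiating the Rodrigues formula in a slightly different way. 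Combining the two identities gives the three-term recurrence
\begin{align*}
\hat{H}_{n+1}(x) \;=\; \tfrac{x}{2}\, \hat{H}_n(x) \;-\; \tfrac{n}{2}\, \hat{H}_{n-1}(x),
\end{align*}
which at $x = 0$ reduces to precisely $\hat{H}_{n+1}(0) = -\tfrac{n}{2}\, \hat{H}_{n-1}(0)$.

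With this in hand, applying the reduction for $n = 2k+1$ gives $\hat{H}_{2k+2}(0) = -\tfrac{2k+1}{2}\, \hat{H}_{2k}(0)$. Iterating from $\hat{H}_0(0) = 1$ yields
\begin{align*}
\hat{H}_{2k+2}(0) \;=\; \Bigl(-\tfrac{1}{2}\Bigr)^{k+1} (2k+1)(2k-1)\cdots 3 \cdot 1 \;=\; \frac{(-1)^{k+1}}{2^{k+1}} \,(2k+1)!!,
\end{align*}
and the classical identity $(2k+1)!! = \frac{(2k+1)!}{2^k\, k!}$ gives the claimed expression $(-1)^{k+1}\frac{(2k+1)!}{2^{2k+1}\,k!}$. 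The only genuinely non-routine step is the derivation of the three-term recurrence from the Rodrigues formula, and even that is very standard; so I do not anticipate any real obstacle. (A consistency check confirms this: for $k=0$ the formula gives $\hat{H}_2(0) = -\tfrac{1}{2}$, which matches the direct computation $\hat{H}_2(x) = \tfrac{x^2}{4} - \tfrac{1}{2}$.)
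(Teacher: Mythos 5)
Your proof is correct and follows essentially the same route as the paper: both reduce to the three-term recurrence for Hermite polynomials evaluated at $0$, which yields $\hHr_{2k+2}(0)=-\tfrac{2k+1}{2}\hHr_{2k}(0)$ and then iterate. The only cosmetic difference is that you derive the recurrence directly for the weight $e^{-x^2/4}$ via the Rodrigues formula, whereas the paper invokes the standard recurrence $\Hr_{k+2}(x)=x\Hr_{k+1}(x)-(k+1)\Hr_k(x)$ for the weight $e^{-x^2/2}$ and then transfers the result through the scaling $\hHr_k(x)=2^{-k/2}\Hr_k(x/\sqrt{2})$.
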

\begin{proof}
We know that $\Hr_{k+2}(x)=x\Hr_{k+1}(x)-(k+1)\Hrk(x)$ where $\Hrk$ is the Hermite polynomial of order $k$ associated to $x\mapsto e^{-x^2/2}$. Thus $\Hr_{2k+2}(0)=-(2k+1)\Hr_{2k}(0)$ and then,
\begin{align}\label{inter}
	\Hr_{2k+2}(0)=(-1)^{k+1}\frac{(2k+1)!}{2^k k!}.
\end{align}
Now, remark that $\hHrk(x)=2^{-k/2}\Hrk(x/\sqrt{2})$, so that together with \eqref{inter} we get the expected result~\eqref{coeffconstant-her}.
\end{proof}

\subsection{Proof of Theorem~\ref{thm-necessary}}
\label{sec-proof-necessary-condition-etaW}

For all $n\in\NN$, the solution $\mn\eqdef\measn$ of $\blasson$ satisfies the following first order optimality conditions:
\begin{align}\label{first-order-cond}
\Gatzn^*(\Phitzn\amp_n -\PhitzO\ampO-w_n)+\la_n \begin{pmatrix}\bun_N\\0\end{pmatrix}=0.
\end{align}
Since $\etaLn=\Phi^*(\frac{1}{\la_n}(\PhitzO \ampO +w_n-\Phitzn\amp_n))$ satisfies $\normLi{\etaLn}\leq1$, $\etaLn\in\Im\Phi^*$ and $\etaLn(t\posz_n)=\sign(\amp_n)=1$.

As a result, by taking $n$ large enough if necessary, we know that $(\amp_n,\posz_n)=\fimp^*(\la_n,w_n)$ with $\fimp^*\in\Cder{2N}(\Vt^*)$ so that 
$(\amp_n,\posz_n)\to(\ampO,\poszO)$. Therefore all the asymptotic results that we established are true when applied for $(\amp_n,\posz_n)$ for $n$ large enough.
Observe from~\eqref{first-order-cond}, that we get for all $n\in\NN$:
\begin{align*}
p_{\la_n,t_n}=\Gatzn^{*,+}\begin{pmatrix}\bun_N \\ 0\end{pmatrix}+\Pi_{t_n\posz_n} \frac{w_n}{\la_n} + \frac{1}{\la_n}\Pi_{t_n\posz_n}\GatzO\AmpO.
\end{align*}
As a consequence:
\eq{
	\normObs{p_{\la_n,t_n}-\pW}\to0 \quad \mbox{when} \quad n\to+\infty,
} 
so that for all $0\leq l\leq 2N$,
\begin{align}
	\normLi{\etaLn^{(l)}-\etaW^{(l)}}\to0 \quad \mbox{when} \quad n\to+\infty.
\end{align}
In particular, since $\normLi{\etaL}\leq1$, we deduce~\eqref{etaW-normi}:
\eq{
	\normLi{\etaW}\leq1.
}

\bibliographystyle{plain}
\bibliography{biblio}

\end{document}